\begin{document}
\theoremstyle{plain}
\newtheorem{theorem}{Theorem}[section]
\newtheorem{lemma}[theorem]{Lemma}
\newtheorem{corollary}[theorem]{Corollary}
\newtheorem{proposition}[theorem]{Proposition}

\newtheorem{question}[theorem]{Question}
\theoremstyle{definition}
\newtheorem{notations}[theorem]{Notations}
\newtheorem{notation}[theorem]{Notation}
\newtheorem{remark}[theorem]{Remark}
\newtheorem{remarks}[theorem]{Remarks}
\newtheorem{definition}[theorem]{Definition}
\newtheorem{claim}[theorem]{Claim}
\newtheorem{assumption}[theorem]{Assumption}
\numberwithin{equation}{section}
\newtheorem{examplerm}[theorem]{Example}
\newtheorem{propositionrm}[theorem]{Proposition}

\newtheorem{example}[theorem]{Example}
\newtheorem{examples}[theorem]{Examples}

\newcommand{\binomial}[2]{\left(\begin{array}{c}#1\\#2\end{array}\right)}
\newcommand{\zar}{{\rm zar}}
\newcommand{\an}{{\rm an}}
\newcommand{\red}{{\rm red}}
\newcommand{\codim}{{\rm codim}}
\newcommand{\rank}{{\rm rank}}
\newcommand{\Pic}{{\rm Pic}}
\newcommand{\Div}{{\rm Div}}
\newcommand{\Hom}{{\rm Hom}}
\newcommand{\im}{{\rm im}}
\newcommand{\Spec}{{\rm Spec}}
\newcommand{\sing}{{\rm sing}}
\newcommand{\reg}{{\rm reg}}
\newcommand{\Char}{{\rm char}}
\newcommand{\Tr}{{\rm Tr}}
\newcommand{\tr}{{\rm tr}}
\newcommand{\supp}{{\rm supp}}
\newcommand{\Gal}{{\rm Gal}}
\newcommand{\Min}{{\rm Min \ }}
\newcommand{\Max}{{\rm Max \ }}
\newcommand{\Span}{{\rm Span  }}

\newcommand{\Frob}{{\rm Frob}}
\newcommand{\lcm}{{\rm lcm}}

\newcommand{\soplus}[1]{\stackrel{#1}{\oplus}}
\newcommand{\dlog}{{\rm dlog}\,}    
\newcommand{\limdir}[1]{{\displaystyle{\mathop{\rm
lim}_{\buildrel\longrightarrow\over{#1}}}}\,}
\newcommand{\liminv}[1]{{\displaystyle{\mathop{\rm
lim}_{\buildrel\longleftarrow\over{#1}}}}\,}
\newcommand{\boxtensor}{{\Box\kern-9.03pt\raise1.42pt\hbox{$\times$}}}
\newcommand{\sext}{\mbox{${\mathcal E}xt\,$}}
\newcommand{\shom}{\mbox{${\mathcal H}om\,$}}
\newcommand{\coker}{{\rm coker}\,}
\renewcommand{\iff}{\mbox{ $\Longleftrightarrow$ }}
\newcommand{\onto}{\mbox{$\,\>>>\hspace{-.5cm}\to\hspace{.15cm}$}}

\newenvironment{pf}{\noindent\textbf{Proof.}\quad}{\hfill{$\Box$}}

\newcommand{\sA}{{\mathcal A}}
\newcommand{\sB}{{\mathcal B}}
\newcommand{\sC}{{\mathcal C}}
\newcommand{\sD}{{\mathcal D}}
\newcommand{\sE}{{\mathcal E}}
\newcommand{\sF}{{\mathcal F}}
\newcommand{\sG}{{\mathcal G}}
\newcommand{\sH}{{\mathcal H}}
\newcommand{\sI}{{\mathcal I}}
\newcommand{\sJ}{{\mathcal J}}
\newcommand{\sK}{{\mathcal K}}
\newcommand{\sL}{{\mathcal L}}
\newcommand{\sM}{{\mathcal M}}
\newcommand{\sN}{{\mathcal N}}
\newcommand{\sO}{{\mathcal O}}
\newcommand{\sP}{{\mathcal P}}
\newcommand{\sQ}{{\mathcal Q}}
\newcommand{\sR}{{\mathcal R}}
\newcommand{\sS}{{\mathcal S}}
\newcommand{\sT}{{\mathcal T}}
\newcommand{\sU}{{\mathcal U}}
\newcommand{\sV}{{\mathcal V}}
\newcommand{\sW}{{\mathcal W}}
\newcommand{\sX}{{\mathcal X}}
\newcommand{\sY}{{\mathcal Y}}
\newcommand{\sZ}{{\mathcal Z}}

\newcommand{\A}{{\mathbb A}}
\newcommand{\B}{{\mathbb B}}
\newcommand{\C}{{\mathbb C}}
\newcommand{\D}{{\mathbb D}}
\newcommand{\E}{{\mathbb E}}
\newcommand{\F}{{\mathbb F}}
\newcommand{\G}{{\mathbb G}}
\newcommand{\HH}{{\mathbb H}}
\newcommand{\I}{{\mathbb I}}
\newcommand{\J}{{\mathbb J}}
\newcommand{\M}{{\mathbb M}}
\newcommand{\N}{{\mathbb N}}
\renewcommand{\P}{{\mathbb P}}
\newcommand{\Q}{{\mathbb Q}}
\newcommand{\T}{{\mathbb T}}
\newcommand{\U}{{\mathbb U}}
\newcommand{\V}{{\mathbb V}}
\newcommand{\W}{{\mathbb W}}
\newcommand{\X}{{\mathbb X}}
\newcommand{\Y}{{\mathbb Y}}
\newcommand{\Z}{{\mathbb Z}}

\newcommand{\be}{\begin{eqnarray}}
\newcommand{\ee}{\end{eqnarray}}
\newcommand{\nn}{{\nonumber}}
\newcommand{\dd}{\displaystyle}
\newcommand{\ra}{\rightarrow}
\newcommand{\bigmid}[1][12]{\mathrel{\left| \rule{0pt}{#1pt}\right.}}
\newcommand{\cl}{${\rm \ell}$}
\newcommand{\clp}{${\rm \ell^\prime}$}


\newcommand{\myConstant}{\gamma}

\title[Generalizations on repeated-root constacyclic codes]{Two generalizations on the minimum Hamming distance of repeated-root constacyclic codes }

\author{Hakan \"{O}zadam and Ferruh \"Ozbudak}

\maketitle

\begin{center}
Department of Mathematics and Institute of Applied Mathematics \\
Middle East Technical University, \.{I}n\"on\"u Bulvar{\i}, 06531, Ankara, Turkey \\
\{ozhakan,ozbudak\}@metu.edu.tr
\end{center}

\vspace{0.5cm}

\abstract
	We study constacyclic codes, of length $np^s$\ and $2np^s$, that are generated
	by the polynomials $(x^n + \myConstant)^{\ell}$\ and $(x^n - \xi)^i(x^n + \xi)^j$\ respectively,
	where $x^n + \myConstant$, $x^n - \xi$\ and $x^n + \xi$\ 
	are irreducible over the alphabet $\F_{p^a}$.
	We generalize the results of \cite{D2008},  \cite{OZOZ_1}\ and \cite{OZOZ_2} by computing the minimum
	Hamming distance of these codes.
	As a particular case, we determine the minimum Hamming distance of cyclic and negacyclic codes,
	of length $2p^s$, over a finite field of characteristic $p$.
\endabstract




\vspace{0.5cm}
\section{Introduction}
\label{Section.Introduction}

The minimum Hamming distance of cyclic codes, of length $2^s$, over the Galois ring
$GR(2^a,m)$\ is determined in \cite{D2007}. In \cite{D2008}, the techniques introduced in
\cite{D2007} are used to compute the minimum Hamming distance of cyclic codes,
of length $p^s$, over a finite field of characteristic $p$.

It has been shown, in \cite{CMSS1}, that the minimum Hamming distance of a repeated 
root cyclic code can be expressed in terms of a
simple root cyclic code. 
Using this result in \cite{OZOZ_1}, we have shown that the main result of \cite{D2008}
can be obtained immediately. More explicitly, we have shown that the minimum Hamming distance
of a cyclic code, of length $p^s$, over a finite field of characteristic $p$\
can be found using the results of \cite{CMSS1} via simpler and more direct methods compared
to those of \cite{D2008}.
Later in \cite{OZOZ_2}, we extended our methods, again using the results of \cite{CMSS1}, to cyclic codes,
of length $2p^s$, over a finite field of characteristic $p$, where $p$\ is an odd prime,
and we determined the minimum Hamming distance of these codes.


In this study, we generalize the results of \cite{D2008}, \cite{OZOZ_1} and \cite{OZOZ_2} to certain classes of
repeated-root constacyclic codes. 
Namely, we compute the minimum Hamming distance of constacyclic codes of length $np^s$\ and $2np^s$,
that are generated by the polynomials $(x^n + \lambda )^{\ell}$\ and $(x^n - \xi)^i(x^n + \xi)^j$ respectively,
where $ x^n + \lambda $, $x^n - \xi$\ and $x^n + \xi$\ are irreducible over the alphabet $\F_{p^a}$.
As a particular case, we determine the minimum Hamming distance of cyclic and negacyclic codes,
of length $2p^s$, over a finite field of characteristic $p$.

This paper is organized as follows. In Section \ref{Section.Preliminaries}, 
we give some preliminaries and fix our notation.
In Section \ref{Section.Irreducible}, we determine the minimum Hamming distance of 
constacyclic codes, of length $np^s$, over a finite field of characteristic $p$,
where these code are generated by the irreducible polynomial $x^n + \myConstant$.
In Section \ref{Section.Reducible}, we determine the minimum Hamming distance of
constacyclic codes, of length $2np^s$, over a finite field of characteristic $p$,
that are of the form $\langle (x^n - \xi)^i(x^n + \xi)^j \rangle $\ where
$x^n - \xi$\ and $x^n + \xi$\ are irreducible.
In Section \ref{Section.Examples}, we give several examples as applications of 
the main results of Section \ref{Section.Irreducible} and Section \ref{Section.Reducible}.

\section{Preliminaries}
\label{Section.Preliminaries}
Let $p$\ be a prime number and $\F_q$\ be a finite field of characteristic $p$.
Let $N$\ be a positive integer. Throughout this paper we identify a codeword $c = (c_0,c_1,\dots , c_{N-1})$
over $\F_q$\ with the polynomial $c(x) = c_0 + c_1x + \cdots + c_{N-1}x^{N-1} \in \F_q[x]$.

The \textit{Hamming weight} of a codeword is defined to be the nonzero components of the codeword 
and the \textit{Hamming weight} of a polynomial is defined to be the number of nonzero coefficients
of the polynomial. Let $c$\ and $c(x)$\ be as above. We denote the Hamming weight of
$c$\ and $c(x)$\ by $w_H(c)$\ and $w_H(c(x))$, respectively. 
Obviously, the Hamming weight of a codeword and the Hamming weight of the corresponding
polynomial are equal, i.e., $w_H(c) = w_H(c(x))$.

The \textit{minimum Hamming distance} of a code $C$\ is defined as
\be\nn
	\min \{w_H(u-v):\quad u,v \in C \quad \mbox{and} \quad u \neq v  \},
\ee
and is denoted by $d_H(C)$. If $C$\ is a linear code, then
it is well-known that
\be\nn
	d_H(C) = \min \{w_H(v):\quad 0 \neq v \in C \}.
\ee

Let $\lambda \in \F_q \setminus \{ 0 \}$\ and $I = \langle x^N - \lambda \rangle$.
The $\lambda$-\textit{shift} of a codeword $c = (c_0,c_1,\dots , c_{N-1})$\ is defined
to be $(\lambda c_{N-1}, c_0, c_1,\cdots,c_{N-2})$. If a linear code $C$\ is closed under
$\lambda$-shifts, then $C$\ is called a $\lambda$-cyclic code and in general,
such codes are called \textit{constacyclic} codes (c.f. \cite[Section 13.2]{BRLKMP}). 
It is well-known that
$\lambda$-cyclic codes, of length $N$, over $\F_q$\ correspond 
to the ideals of the finite ring
\be\nn
	\sR = \frac{\F_q[x]}{I}.
\ee
In particular, cyclic (respectively negacyclic) codes, of length $N$,
over $\F_q$ correspond to the ideals of the ring
$\sR_{\mathfrak{a}} = \F_q[x] / \mathfrak{a}$ 
(respectively $\sR_{\mathfrak{b}} = \F_q[x] / \mathfrak{b}$), where
$\mathfrak{a} = \langle x^N -1 \rangle$ (respectively $\mathfrak{b} = \langle x^N + 1 \rangle $).
Additionally if $N$\ is divisible by $p$, then $C$\ is said to be a \textit{repeated-root} constacyclic code.

Any element of $\sR$\ can be represented uniquely as $f(x) + I$\ where
$\deg (f(x)) < N$. The codeword which corresponds to $f(x) + I$\ is
$(f_0,f_1,\dots,f_{N-1})$, where $f(x) = f_0 + f_1x + \cdots + f_{N-1}x^{N-1} \in \F_q[x]$.
Since $\F_q[x]$\ is a principal ideal domain, $\sR$\ is also a principal ideal domain.
So, for any ideal $J$\ of $\sR$, there exists a unique monic polynomial $g(x)\in \F_q[x]$\
with $\deg (g(x)) < N$\ and $g(x) \mid x^N - \lambda $\ such that
$J = \langle g(x) \rangle$. The polynomial $g(x)$\ is said to be a generator of $J$.

The following lemma gives us a trivial lower bound for the minimum Hamming
distance of all constacyclic codes.

\begin{lemma}\label{Preliminaries.Lemma.Weight.Greater.Than.Two}
	Let $\{ 0 \} \neq C \subsetneq \sR$\ be a linear code.
	Then $d_H(C) \ge 2$.
\end{lemma}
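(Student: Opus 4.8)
The plan is to show that a nonzero proper ideal of $\sR$ cannot contain a codeword of Hamming weight $1$, since together with the trivial fact that a nonzero code contains a nonzero codeword, this forces $d_H(C) \ge 2$. First I would suppose, for contradiction, that $C$ contains a codeword $v$ of Hamming weight exactly $1$; identifying $v$ with its polynomial, this means $v(x) = c x^k + I$ for some $c \in \F_q \setminus \{0\}$ and some $0 \le k \le N-1$. The key observation is that $x$ is a unit in $\sR$: indeed $x \cdot (\lambda^{-1} x^{N-1}) = \lambda^{-1} x^N = \lambda^{-1}\lambda = 1$ in $\sR$, using $x^N - \lambda \in I$. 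Hence $x^k$ is also a unit, and $c x^k$ is a unit (as $c \in \F_q^\times$). Therefore the ideal $\langle v(x) \rangle = \langle c x^k + I \rangle$ equals all of $\sR$, so $C \supseteq \langle v(x)\rangle = \sR$, contradicting $C \subsetneq \sR$.

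It remains to handle the possibility that $C$ has minimum distance $0$ or $1$ in a degenerate way: distance $0$ is excluded because $C \neq \{0\}$ means $C$ contains a nonzero codeword, and for a linear code $d_H(C) = \min\{ w_H(v) : 0 \neq v \in C\}$, which is at least $1$ automatically and, by the preceding paragraph, cannot equal $1$. Thus $d_H(C) \ge 2$, as claimed.

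I do not expect any real obstacle here; the only thing to be careful about is making the unit computation explicit and invoking the linear-code formula $d_H(C) = \min\{w_H(v) : 0 \neq v \in C\}$ recorded in the preliminaries, together with the structure result that every ideal of $\sR$ is principal, generated by a divisor of $x^N - \lambda$ — although in fact the argument above only needs that $x$ is a unit in $\sR$ and that $C$ is a proper nonzero ideal, so the principal-ideal structure is not strictly necessary. An alternative phrasing avoids ideals entirely: if $(c_0, \dots, c_{N-1}) \in C$ has a single nonzero entry $c_k$, then applying the $\lambda$-shift $k$ times (and scaling by $c_k^{-1} \in \F_q^\times$, using linearity) produces the codeword $(1, 0, \dots, 0) \in C$; then further $\lambda$-shifts and $\F_q$-linear combinations generate every standard basis vector, forcing $C = \sR$. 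Either route is short; I would present the unit-in-$\sR$ version as the cleanest.
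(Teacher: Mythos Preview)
Your proposal is correct and follows essentially the same approach as the paper: both argue by contradiction that a weight-$1$ codeword $\alpha x^k$ would be a unit in $\sR$ (since $x$ is a unit), which a proper ideal cannot contain. The only cosmetic difference is that the paper verifies $x$ is a unit via $x^{N(q-1)} \equiv \lambda^{q-1} \equiv 1$, whereas you exhibit the explicit inverse $\lambda^{-1}x^{N-1}$.
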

\begin{proof}
	Since $x^N \equiv \lambda \mod x^N - \lambda $,
	we have $x^{N(q-1)} \equiv  \lambda^{q-1} \equiv 1 \mod x^N - \lambda $.
	So $x$\ is a unit in $\sR$.
	It is clear from $C \neq \{ 0 \}$\ that $d_H(C) > 0$.
	Now assume that $d_H(C) = 1$. Then there is $\alpha x^e \in C$\ for some
	$\alpha \in \F_q \setminus \{ 0 \} $\ and for some nonnegative integer $e$.
	Since $\alpha$\ and $x$\ are units, $\alpha x^e$\ is a unit in $\sR$.
	Being a proper ideal of $\sR$, $C$\ can not contain a unit.
	Thus we get a contradiction. Hence $d_H(C) \ge 2$.
\end{proof}

Now we will partition the set $\{ 1,2,\dots,p^s-1 \}$\ into three subsets.
These subsets naturally arise from the technicalities of our computations as described in
Section \ref{Section.Irreducible} and Section \ref{Section.Reducible}.
If $i$\ is an integer satisfying $1 \le i \le (p-1)p^{s-1}$, then there exists a uniquely determined
integer $\beta$\ such that $0 \le \beta \le p-2$\ and 
\be\nn
	\beta p^{s-1}+1 \le i \le (\beta + 1)p^{s-1}.
\ee
Moreover since
\be
 	p^s-p^{s-1} < p^s-p^{s-2}< \cdots < p^s-p^{s-s}=p^s-1,
\nn\ee
for an integer $i$ satisfying $(p-1)p^{s-1}+1 = p^s-p^{s-1}+1 \le i \le p^s-1$, there exists a uniquely determined
integer $k$ such that $1 \le k \le s-1$ and
\be \label{condition.on.k}
	p^s-p^{s-k}+1 \le i \le p^s-p^{s-k-1}.
\ee
Besides if $i$ is an integer as above and $k$ is the integer satisfying $1 \le k \le s-1$ and (\ref{condition.on.k}), then we have
\be
    && p^s-p^{s-k} < p^s-p^{s-k}+p^{s-k-1} < p^s-p^{s-k}+2p^{s-k-1} < \cdots \nn \\
    && < p^s-p^{s-k}+(p-1)p^{s-k-1}
\nn\ee
and $p^s-p^{s-k}+(p-1)p^{s-k-1}=p^s-p^{s-k-1}$. So for such integers $i$ and $k$, there exists a uniquely determined integer $\tau$ with $1 \le \tau \le p-1$ such that
\be
    p^s-p^{s-k} + (\tau -1) p^{s-k-1} + 1 \le i \le  p^s - p^{s-k} + \tau p^{s-k-1}.
\nn\ee
Thus
\be\label{Preliminaries.Partition}
\begin{array}{rl}
		&\{1,2,\dots,p^{s-1}\} \sqcup \displaystyle\bigsqcup_{\beta = 1}^{p-2}\{i: \quad \beta p^{s-1} + 1 \le i \le (\beta + 1)p^{s-1} \}\\
		&\sqcup \displaystyle\bigsqcup_{k=1}^{s-1}\displaystyle\bigsqcup_{\tau = 1}^{p - 1}\{i:\quad  p^s-p^{s-k} + (\tau -1) p^{s-k-1} + 1 \le i \le  p^s - p^{s-k} + \tau p^{s-k-1} \}
	\end{array}
\ee
gives us a partition of the set $\{1,2,\dots , p^{s}-1\}$.

Here we fix some notation concerning division and remainders in $\F_q[x]$.
Since $\F_q [x]$\ is a Euclidean domain, for any $f(x)$\ and $ 0\neq g(x) \in \F_q[x]$,
there exist unique polynomials $y(x), r(x) \in \F_q[x] $\ such that
\be\nn
	f(x) = g(x)y(x) + r(x)
\ee
where either $0 \le \deg (r(x)) < \deg (f(x))$\ or $r(x) = 0$.
We define
\be\nn
	f(x) \mod g(x) = r(x),
\ee
and we use the notation $f(x) \equiv r(x) \mod g(x)$\ in the usual sense.

Let $e>0$\ be an integer. For any nonnegative integer $a < p^e$,
there exist uniquely determined integers $0 \le a_0, a_1, \dots ,a_{e-1} \le p-1$\ such that
\be\label{Equality.padic.expansion}
	a = a_{e-1}p^{e-1} + \cdots + a_1p + a_0.
\ee
The expression (\ref{Equality.padic.expansion}) is called the \textit{p-adic expansion}
of $a$.

Let $N$\ be a positive integer and $\myConstant \in \F_q \setminus \{ 0 \}$.
Our computations in Section \ref{Section.Irreducible} and Section \ref{Section.Reducible} are
based on expressing the Hamming weight of an arbitrary nonzero codeword in terms of
$w_H ((x^n + \myConstant)^N)$. In \cite{CMSS2}, the Hamming weight of the polynomial
$(x^n + \myConstant)^N$\ is given as described below.
Let $e,n,N$\ and $0\le b_0,b_1,\dots,b_{e-1} \le p-1$\ 
be positive integers such that $N < p^e$\ and let $\myConstant \in \F_{q} \setminus \{ 0 \}$.
Let
\be\nn
	N = b_{e-1}p^{e-1} + \cdots + b_1p + b_0
\ee
be the p-adic expansion of $N$. Then, by \cite[Lemma 1]{CMSS2}, we have 
\be\label{Equality.weight.x+c.N}
	w_H( (x + \myConstant)^N ) = \prod_{d=0}^{e-1}(b_d + 1).
\ee
As suggested in \cite{CMSS2}, identifying $x$\ with $x^n$ in (\ref{Equality.weight.x+c.N}), we obtain
\be\label{Equality.weight.xn+c.N}
	w_H( (x^n + \myConstant)^N ) = \prod_{d=0}^{e-1}(b_d + 1).
\ee

The following two lemmas are consequences of (\ref{Equality.weight.xn+c.N}) and we will
use them in our computations frequently.

\begin{lemma}\label{Preliminaries.Lemma.weight.beta}
	Let $m,n, 1 \le \beta \le p-2$\ be positive integers and $\myConstant \in \F_q \setminus \{0\}$.
	If $m < p^s - \beta p^{s-1} -1$, then $w_H( (x^n + \myConstant)^{m + \beta p^{s-1} + 1} ) \ge \beta + 2$.
\end{lemma}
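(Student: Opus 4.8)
The plan is to apply the weight formula (\ref{Equality.weight.xn+c.N}) directly to the exponent $N := m + \beta p^{s-1} + 1$ and to read off the bound from the $p$-adic digits of $N$. First I would record the two inequalities that the hypotheses impose on $N$: since $m$ is a positive integer, $N \ge \beta p^{s-1} + 2$, and since $m < p^s - \beta p^{s-1} - 1$, we get $N \le p^s - 1$, so in particular $N < p^s$. Hence $e = s$ is an admissible choice in (\ref{Equality.weight.xn+c.N}); writing the $p$-adic expansion $N = b_{s-1}p^{s-1} + \cdots + b_1 p + b_0$ with $0 \le b_d \le p-1$, the formula yields $w_H\big((x^n + \myConstant)^N\big) = \prod_{d=0}^{s-1}(b_d + 1)$, and it remains to bound this product below by $\beta + 2$.

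The next step is to locate the leading digit. I claim $b_{s-1} \ge \beta$: otherwise $b_{s-1} \le \beta - 1$ would force
\[
 N \le (\beta - 1)p^{s-1} + \sum_{d=0}^{s-2}(p-1)p^d = (\beta-1)p^{s-1} + (p^{s-1} - 1) = \beta p^{s-1} - 1,
\]
contradicting $N \ge \beta p^{s-1} + 2$. With this in hand the argument splits into two cases. If $b_{s-1} \ge \beta + 1$, then the single factor $b_{s-1} + 1$ is already at least $\beta + 2$, and since every remaining factor is at least $1$ we are done. If $b_{s-1} = \beta$, then the lower digits carry the value $\sum_{d=0}^{s-2} b_d p^d = N - \beta p^{s-1} = m + 1 \ge 2$, so at least one of $b_0, \dots, b_{s-2}$ is nonzero; therefore $\prod_{d=0}^{s-2}(b_d + 1) \ge 2$ and
\[
 w_H\big((x^n + \myConstant)^N\big) = (\beta + 1)\prod_{d=0}^{s-2}(b_d + 1) \ge 2(\beta + 1) = 2\beta + 2 \ge \beta + 2,
\]
since $\beta \ge 0$. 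In either case the asserted bound holds.

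There is no serious obstacle here; everything reduces to bookkeeping with $p$-adic expansions once (\ref{Equality.weight.xn+c.N}) is available. The one point that must not be overlooked is verifying $N < p^s$, which is exactly where the hypothesis $m < p^s - \beta p^{s-1} - 1$ enters and which guarantees that the weight formula applies with $e = s$; the complementary bound $N \ge \beta p^{s-1} + 2$ (equivalently $m \ge 1$) is what both excludes $b_{s-1} < \beta$ and separates the two cases.
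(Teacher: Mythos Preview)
Your proof is correct and follows essentially the same strategy as the paper: apply the weight formula (\ref{Equality.weight.xn+c.N}) with $e=s$ and bound the product of $(b_d+1)$ from the $p$-adic digits of the exponent. The only cosmetic difference is the case split---the paper distinguishes according to whether the low digits of $m$ are all $p-1$ (i.e., whether adding $1$ causes a carry into the top digit), whereas you split directly on whether the leading digit $b_{s-1}$ of $N$ equals $\beta$ or exceeds it; your organization is slightly cleaner but the content is the same.
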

\begin{proof}
	Since 
	\be\nn
		m < p^s - \beta p^{s-1} -1 = (p - \beta -1)p^{s-1} + (p-1)p^{s-2} + \cdots + (p-1)p + p-1,
	\ee
	either
	\be\nn
		m & = & Lp^{s-1} + (p-1)p^{s-2} + \cdots + (p-1)p + p-1\quad \mbox{or}\\
		m & = & a_{s-1}p^{s-1} + \cdots + a_1 + a_0\nn
	\ee
	holds, where $0 \le L \le p- \beta - 2$, $0 \le a_0,a_1,\dots,a_{s-2} \le p-1$\ 
	and $0 \le a_{s-1} \le p - \beta -1$\ are integers such that $a_{\ell} < p-1$\ for some $0 \le \ell < s-1$. 
	According to the p-adic expansion of $m$,
	we consider the following two cases.
	
	First, we assume that $m = Lp^{s-1} + (p-1)p^{s-2} + \cdots + (p-1)p + p-1$.
	Then $m + \beta p^{s-1} + 1 = (L + \beta +1)p^{s-1}$. So using (\ref{Equality.weight.xn+c.N}), we get
	\be\nn
		w_H( (x^n + \myConstant)^{m + \beta p^{s-1} + 1} ) = L + \beta + 2 \ge \beta + 2.
	\ee
	
	Second, we assume that $m = a_{s-1}p^{s-1} + \cdots + a_1p + a_0$. Then the p-adic expansion of
	$m + \beta p^{s-1} + 1$\ is of the form
	\be\nn
		m + \beta p^{s-1} + 1 = b_{s-1}p^{s-1} + \cdots + b_1p + b_0
	\ee
	where $0 \le b_0,b_1,\dots,b_{s-2} \le p-1$\ and 
	\be\label{Preliminaries.Lemma.weight.beta.bs1}
		b_{s-1} = a_{s-1} + \beta.
	\ee
	Let $k$\ be the least nonnegative integer with $a_k < p-1$. Then it follows that
	\be\label{Preliminaries.Lemma.weight.beta.bk}
		0 < b_k \le p-1.
	\ee
	So, using (\ref{Equality.weight.xn+c.N}), (\ref{Preliminaries.Lemma.weight.beta.bs1}) and 
	(\ref{Preliminaries.Lemma.weight.beta.bk}), we get
	\be\nn
		w_H( (x^n + \myConstant)^{m + \beta p^{s-1} + 1} )\ge (\beta + a_{s-1} + 1)(b_k + 1) 
								\ge (\beta + 1)2
								> \beta + 2.
	\ee
\end{proof}

\begin{lemma}\label{Preliminaries.Lemma.weight.tau.k}
	Let $m,n, 1 \le \tau \le p-1,1\le k \le s-1 $\ be positive integers and 
	$\myConstant \in \F_q \setminus \{0\} $. If $m < p^{s-k} - (\tau - 1)p^{s-k-1} - 1$, then
	$w_H( (x^{2n} + \myConstant)^{ m + p^s -p^{s-k} + (\tau -1)p^{s-k-1} + 1} ) \ge (\tau + 1)p^k$.
\end{lemma}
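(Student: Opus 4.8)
The plan is to write down the $p$-adic expansion of the exponent $N := m + p^s - p^{s-k} + (\tau-1)p^{s-k-1} + 1$ and then apply the weight formula (\ref{Equality.weight.xn+c.N}), which holds verbatim with $x^{2n}$ in place of $x^n$ (identify $x$ with $x^{2n}$ in (\ref{Equality.weight.x+c.N})). Using $p^s - p^{s-k} = \sum_{d=s-k}^{s-1}(p-1)p^d$, one has
$$
  N \;=\; \sum_{d=s-k}^{s-1}(p-1)p^d \;+\; M, \qquad M := (\tau-1)p^{s-k-1} + (m+1).
$$
The hypothesis $m < p^{s-k} - (\tau-1)p^{s-k-1} - 1$ is precisely what guarantees $(\tau-1)p^{s-k-1} + 1 \le M \le p^{s-k}-1$; in particular $M < p^{s-k}$, so $M$ occupies only the bottom $s-k$ base-$p$ digits and does not disturb the block of digits equal to $p-1$ in positions $s-k, \dots, s-1$, and moreover $N < p^s$. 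Hence, writing $N = \sum_{d=0}^{s-1} b_d p^d$, we get $b_{s-k} = \dots = b_{s-1} = p-1$ while $b_0, \dots, b_{s-k-1}$ are the digits of $M$, and (\ref{Equality.weight.xn+c.N}) yields
$$
  w_H\big( (x^{2n} + \myConstant)^N \big) \;=\; \prod_{d=0}^{s-1}(b_d+1) \;=\; p^k \prod_{d=0}^{s-k-1}(b_d+1).
$$

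It then remains to establish the elementary inequality $\prod_{d=0}^{s-k-1}(b_d+1) \ge \tau+1$ whenever $(\tau-1)p^{s-k-1} + 1 \le M \le p^{s-k}-1$. I would argue by cases on the leading digit $b_{s-k-1}$ of $M$. If $b_{s-k-1} \ge \tau$ the single factor $b_{s-k-1}+1$ already suffices. The case $b_{s-k-1} \le \tau-2$ cannot occur, since it would force $M \le (\tau-1)p^{s-k-1}-1$, contradicting the lower bound on $M$. In the remaining case $b_{s-k-1} = \tau-1$, the lower bound gives $\sum_{d=0}^{s-k-2} b_d p^d = M - (\tau-1)p^{s-k-1} \ge 1$, so some $b_d$ with $0 \le d \le s-k-2$ is nonzero; therefore $\prod_{d=0}^{s-k-2}(b_d+1) \ge 2$ and $\prod_{d=0}^{s-k-1}(b_d+1) \ge 2\tau \ge \tau+1$. (When $k = s-1$ the bottom block is a single digit and only the first case arises, consistently.) Combining this with the displayed identity gives $w_H((x^{2n}+\myConstant)^N) \ge (\tau+1)p^k$, as claimed.

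I do not anticipate a real obstacle; the single point needing care is the carry bookkeeping — checking that adding $m+1$ to $(\tau-1)p^{s-k-1}$ stays strictly below $p^{s-k}$, so that the high-order $k$ digits of $N$ are genuinely all equal to $p-1$ — and this is exactly what the stated bound on $m$ provides. The argument is the analogue, for the ``$\tau,k$'' portion of the partition (\ref{Preliminaries.Partition}), of the computation already performed for the ``$\beta$'' portion in Lemma \ref{Preliminaries.Lemma.weight.beta}.
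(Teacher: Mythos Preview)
Your proof is correct and follows essentially the same approach as the paper: compute the $p$-adic digits of the exponent $N$, observe that the top $k$ digits are all $p-1$ and contribute the factor $p^k$, and then bound the product of the remaining factors by $\tau+1$ via a short case analysis. The only difference is organizational: the paper splits into two cases according to the $p$-adic form of $m$ (namely whether all low digits of $m$ equal $p-1$, which after the $+1$ produces a single nonzero low digit, versus not), whereas you split directly on the value of the leading low digit $b_{s-k-1}$ of $M$; the two case splits cover the same ground and the estimates in each case coincide.
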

\begin{proof}
	Since 
	\be 
		m & < & p^{s-k} - (\tau - 1)p^{s-k-1} -1 \nn\\
		& = & (p - \tau +1)p^{s - k - 1} - 1\nn\\
		& = & (p - \tau)p^{s-k-1} + (p-1)p^{s-k-2}+ \cdots + (p-1)p + p-1,\nn
	\ee
	either
	\be
		m & = & Lp^{s-k-1} + (p-1)p^{s-k-2} + \cdots + (p-1)p + p-1\quad \mbox{or}\nn\\
		m & = & a_{s-k-1}p^{s-k-1} + \cdots + a_1p + a_0\nn
	\ee
	holds, where $0 \le L \le p - \tau -1$, $0 \le a_0,a_1,\dots,a_{s-k-2}\le p-1$\ and 
	$ 0 \le a_{s-k-1} \le p - \tau$\ are some integers such that $0 \le a_{\ell} < p-1$\
	for some $0 \le \ell < s-k-1$. According to the p-adic expansion of $m$, 
	we consider the following two cases.
	
	First, we assume that $m = Lp^{s-k-1} + (p-1)p^{s-k-2} + \cdots + (p-1)p + p-1$.
	Then the p-adic expansion of $m + p^s - p^{s-k} + (\tau -1)p^{s-k-1} + 1$\ is of the form
	\be\nn
		m + p^s - p^{s-k} + (\tau -1)p^{s-k-1} + 1 = (p-1)p^{s-1} + \cdots + (p-1)p^{s-k} + (L + \tau)p^{s-k}.
	\ee
	So, using (\ref{Equality.weight.xn+c.N}), we get
	\be\nn
		w_H( (x^n + \myConstant)^{m+ p^s - p^{s-k} + (\tau -1)p^{s-k-1}+1} ) \ge (\tau +1)p^k.
	\ee
	
	Second, we assume that $m= a_{s-k-1}p^{s-k-1} + \cdots + a_1p + a_0$.
	Then the p-adic expansion of $m + p^s - p^{s-k} + (\tau -1)p^{s-k-1} + 1$\ is of the form
	\be\nn
		m + p^s - p^{s-k} + (\tau -1)p^{s-k-1} + 1 & = &
			(p-1)p^{s-1} + \cdots + (p-1)p^{s-k}\\
			& & + b_{s-k-1}p^{s-k-1}+ \cdots + b_1p + b_0\nn
	\ee
	where $0 \le b_0, b_1,\dots ,b_{s-k-1}\le p-1 $\ are integers.
	It is easy to see that
	\be\label{Preliminaries.Lemma.weight.tau.k.bsk1}
		b_{s-k-1} = a_{s-k-1}  + \tau -1.
	\ee
	Let $\ell_{0}$\ be the least nonnegative  integer with $0 \le a_{\ell_{0}} < p-1$. Then
	\be\label{Preliminaries.Lemma.weight.tau.k.bell0}
		0 < b_{\ell_0} \le  p-1.
	\ee
	Using (\ref{Preliminaries.Lemma.weight.tau.k.bsk1}), (\ref{Preliminaries.Lemma.weight.tau.k.bell0})
	and (\ref{Equality.weight.xn+c.N}), we get
	\be
		w_H( (x^n + \myConstant)^{m + p^s - p^{s-k} (\tau - 1)p^{s-k-1} + 1 } )
			 & \ge &  p^k(b_{s-k-1} + 1)(b_{\ell_0} +1) \nn \\
			& \ge & 2 \tau p^k\nn\\ 
			& \ge & (\tau + 1)p^k.\nn
	\ee
\end{proof}

In \cite{CMSS2}, the authors have shown that the polynomial $(x^n + \myConstant)^N$\
has the so-called \textit{``weight retaining property''} (see \cite[Theorem 1.1]{CMSS2}).
As a result of this, they gave a lower bound for the Hamming weight of the polynomial
$g(x)(x^n + \myConstant)^N$\ where $g(x)$\ is any element of $\F_q[x]$.
Let $n,N,\myConstant $\ and $g(x)$\ be as above. Then, by  \cite[Theorem 1.3 and Theorem 6.3]{CMSS2}, the Hamming weight of $g(x)(x^n + \myConstant)^N$\
satisfies
\be\label{Inequality.Lower.Bound.g(x).xn+c.N}
	w_H(g(x)(x^n + \myConstant )^N) \ge w_H( g(x)\mod x^n + \myConstant )\cdot w_H((x^n + \myConstant )^N).
\ee

As the last remark of this section, we examine the Hamming weight of the polynomials
$(x^n + \myConstant_1)^{p^s}(x^n + \myConstant_2)^i$\ where $0 < i < p^s$.
Let $0 < i < p^s$\ be an integer and 
$\myConstant_1, \myConstant_2 \in \F_q \setminus\{ 0 \}$. Let
\be\nn
	(x^n + \myConstant_2)^i = a_ix^{ni} + a_{i-1}x^{n(i-1)} + \cdots + a_0\myConstant_2^i
\ee
where $a_0,a_1,\dots ,a_i $\ are the binomial coefficients. 
Note that
\be
	(x^n + \myConstant_1)^{p^s}(x^n + \myConstant_2)^i
		& = & (x^{np^s} + \myConstant_1^{p^s})(a_ix^{ni} +  a_{i-1}x^{n(i-1)}\myConstant_2 + \cdots + a_0 \myConstant_{2}^{i}) \nn \\
		& = & a_ix^{n(i+p^s)} + a_{i-1}x^{n(i-1+p^s)}\myConstant_2 + \cdots + a_0x^{np^s}\myConstant_2^i  \nn\\
		& & + a_i\myConstant_1^{p^s}x^{ni} + a_{i-1}\myConstant_1^{p^s}x^{n(i-1)} + \cdots + a_0 \myConstant_1^{p^s}\myConstant_2^i. \nn
\ee
Therefore
\be\nn
	w_H( (x^n + \myConstant_1)^{p^s}(x^n + \myConstant_2)^i ) = 2w_H( (x^n + \myConstant_2)^i ).
\ee
\section{Constacyclic codes of length $np^s$}
\label{Section.Irreducible}

Let $n$\ and $s$\ be positive integers.
Let $\myConstant, \lambda\in \F_q \setminus\{ 0 \}$\ such that $\myConstant^{p^s} = - \lambda$.
All $\lambda$-cyclic codes, of length $np^s$, over $\F_q$ correspond
to the ideals of the finite ring
\be\nn
	\sR = \frac{\F_q[x]}{\langle x^{np^s} - \lambda \rangle}.
\ee 
Suppose that $x^n + \myConstant$\ is irreducible over $\F_q$. 
Then the monic divisors of $x^{np^s} - \lambda = (x^n + \myConstant)^{p^s}$\  
are exactly the elements of the set
$
\{ (x^n + \lambda)^i: \quad 0 \le i \le p^s \}
$. 
So if $x^n + \lambda$\ is irreducible over $\F_q$,
then the $\lambda$-cyclic codes, of length $np^s$,
over $\F_q$\ are of the form $\langle (x^n + \lambda)^i\rangle$\
where $0  \le i \le p^s$.

Let $C = \langle (x^n + \myConstant)^i \rangle$\ where $0 \le i \le p^s$\ is an integer
and $x + \myConstant \in \F_{q}[x]$\ is irreducible. 
Obviously if $i=0$, then $C = \sR$, i.e., $C$\ is the whole space $\F_q^{np^s}$, 
and if $i = p^s$, then $C = \{ 0 \}$. For the remaining values of $i$,
we consider the partition of the set $\{1,2,\dots,p^s-1\}$ given in (\ref{Preliminaries.Partition}). 

If $0 < i \le p^{s-1}$, then $d_H(C)$\ is 2 as shown in Lemma \ref{Irreducible.Lemma.Weight.Exactly.Two}.

For $p^{s-1} < i < p^s$, we first find a lower bound on the Hamming weight of an arbitrary 
nonzero codeword of $C$\ in Lemma \ref{Irreducible.Lemma.Weight.Greater.Than.beta.plus.two} and
Lemma \ref{Irreducible.Lemma.Weight.Greater.Than.tau.k}. 
Next in Corollary \ref{Irreducible.Corollary.Weight.Exactly.beta.plus.two} 
and Corollary \ref{Irreducible.Corollary.Weight.Greater.Than.tau.k}, we show that
there exist codewords in $C$, achieving these previously found lower bounds.
This gives us the minimum Hamming distance of $C$. 
We summarize our results in Theorem \ref{Irreducible.Theorem.Main}.
We close this section by showing that Theorem \ref{Irreducible.Theorem.Main}
gives the minimum Hamming distance of negacyclic codes, of length $2p^s$, over 
$\F_{p^a}$\ where $p \equiv 3 \mod 4$\ and $a$\ is an odd number.

\begin{lemma}\label{Irreducible.Lemma.Weight.Exactly.Two}
	Let $1 \le i \le p^{s-1}$\ be an integer and let $C = \langle (x^n + \myConstant)^i \rangle$.
	Then $d_H(C) = 2$.
\end{lemma}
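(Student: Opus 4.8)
The plan is to establish the two inequalities $d_H(C)\ge 2$ and $d_H(C)\le 2$ separately.

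For the lower bound, first I would observe that for $1\le i\le p^{s-1}$ we have $0<i<p^s$, so the ideal $C=\langle (x^n+\myConstant)^i\rangle$ is neither zero nor the whole ring: it is nonzero because $(x^n+\myConstant)^i$ is a nonzero element of $\sR$, and it is proper because $(x^n+\myConstant)^i$ is not a unit (it is a non-invertible proper divisor of $x^{np^s}-\lambda=(x^n+\myConstant)^{p^s}$). Hence Lemma \ref{Preliminaries.Lemma.Weight.Greater.Than.Two} applies and gives $d_H(C)\ge 2$.

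For the upper bound I would exhibit an explicit codeword of weight $2$. Since $\Char \F_q=p$, we have $(x^n+\myConstant)^{p^{s-1}}=x^{np^{s-1}}+\myConstant^{p^{s-1}}$ in $\F_q[x]$. As $i\le p^{s-1}$, the polynomial $(x^n+\myConstant)^i$ divides $(x^n+\myConstant)^{p^{s-1}}$, so the latter represents an element of $C$. This representative is nonzero, has degree $np^{s-1}<np^s$ (hence is already in reduced form modulo $x^{np^s}-\lambda$), and since $\myConstant\ne 0$ its two displayed coefficients are nonzero; therefore its Hamming weight is exactly $2$. (Alternatively, one reads off $w_H((x^n+\myConstant)^{p^{s-1}})=2$ directly from (\ref{Equality.weight.xn+c.N}), the $p$-adic expansion of $p^{s-1}$ having a single nonzero digit equal to $1$.) Consequently $d_H(C)\le 2$.

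Combining the two bounds yields $d_H(C)=2$. There is essentially no serious obstacle here; the only point requiring a little care is checking that the candidate codeword is genuinely a nonzero element of $\sR$ and is written in its canonical reduced form, which is immediate from the degree bound $np^{s-1}<np^s$.
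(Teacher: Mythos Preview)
Your proof is correct and follows essentially the same approach as the paper: both invoke Lemma~\ref{Preliminaries.Lemma.Weight.Greater.Than.Two} for the lower bound and exhibit the weight-$2$ codeword $(x^n+\myConstant)^{p^{s-1}}=x^{np^{s-1}}+\myConstant^{p^{s-1}}\in C$ for the upper bound. Your version simply supplies a bit more justification (that $C$ is proper and nonzero, and that the representative is in reduced form) than the paper's terse one-line argument.
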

\begin{proof}
	The claim follows from Lemma \ref{Preliminaries.Lemma.Weight.Greater.Than.Two} 
	and the fact that
	$$(x^n + \myConstant)^{p^{s-1}-i}  (x^n + \myConstant)^i = (x^n + \myConstant)^{p^{s-1}} = 
		x^{np^{s-1}} + \myConstant ^{p^{s-1}} \in C.$$
\end{proof}

Let $C = \langle (x^n + \myConstant)^i\rangle $ for some integer $0 < i < p^s$. For any $0 \neq c(x) \in C$,
there exists a $0 \neq f(x) \in \F_{q}[x]$\ such that $c(x) \equiv f(x)(x^n + \myConstant)^i \mod (x^n+ \myConstant)^{p^s}$.
Dividing $f(x)$\ by $(x^n + \myConstant)^{p^s-i}$, we get
\be\nn
	f(x) = q(x)(x^n + \myConstant)^{p^s-i} + r(x)
\ee
where $q(x), r(x) \in \F_{q}[x]$\ and $0 \le \deg (r(x)) < np^s-ni$\ or $r(x) = 0$ .
We observe that
\be\nn
	c(x) & \equiv & f(x)(x^n + \myConstant)^i\\
		& \equiv & (q(x)(x^n + \myConstant)^{p^s-i} + r(x))(x^n + \myConstant)^i \nn \\
		& \equiv & q(x)(x^n + \myConstant )^{p^s} + r(x)(x^n + \myConstant )^i \nn \\
		& \equiv & r(x)(x^n + \myConstant )^i \mod (x^n + \myConstant )^{p^s}.\nn
\ee
Consequently, for any $0\neq c(x) \in C$, there exists $0 \neq r(x) \in \F_q[x]$\ with
$\deg (r(x)) < np^s-ni$\ such that $c(x) = r(x)(x^n + \myConstant)^i$,
where we consider this equality in $\F_q[x]$.
Therefore the Hamming weight of
$c \in C$\ is equal to the nonzero coefficients of $r(x)(x^n + \myConstant )^i \in \F_{q}[x]$,
i.e., $w_H(c) = w_H(r(x)(x^n + \myConstant)^i)$.

In the following lemma, we give a lower bound on $d_H(C)$\ when $p^{s-1} < i $.
\begin{lemma}\label{Irreducible.Lemma.Weight.Greater.Than.beta.plus.two}
	Let $1 \le \beta \le p-2$\ be an integer and let $C = \langle (x + \myConstant)^{\beta p^{s-1} + 1} \rangle $.
	Then $d_H(C) \ge \beta + 2 $.
\end{lemma}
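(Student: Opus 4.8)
The plan is to reduce the problem to a statement about the Hamming weight of a product of the form $r(x)(x^n+\myConstant)^{\beta p^{s-1}+1}$, and then exploit the two tools already assembled in the Preliminaries: the weight-retaining lower bound (\ref{Inequality.Lower.Bound.g(x).xn+c.N}) and Lemma \ref{Preliminaries.Lemma.weight.beta}. Let me sketch the steps.

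First I would take an arbitrary nonzero codeword $c(x)\in C$ and invoke the normalization discussed just before the lemma: we may write $c(x)=r(x)(x^n+\myConstant)^{\beta p^{s-1}+1}$ in $\F_q[x]$ with $0\neq r(x)$ of degree $<np^s-n(\beta p^{s-1}+1)$, and $w_H(c)=w_H\bigl(r(x)(x^n+\myConstant)^{\beta p^{s-1}+1}\bigr)$. The idea is then to split the exponent as $\beta p^{s-1}+1 = (\beta p^{s-1}+1-N) + N$ for a cleverly chosen $N$, or rather to write $(x^n+\myConstant)^{\beta p^{s-1}+1}$ so that applying (\ref{Inequality.Lower.Bound.g(x).xn+c.N}) with $g(x)=r(x)$ and exponent $N=\beta p^{s-1}+1$ gives
\be\nn
 w_H(c)\ \ge\ w_H\bigl(r(x)\bmod (x^n+\myConstant)\bigr)\cdot w_H\bigl((x^n+\myConstant)^{\beta p^{s-1}+1}\bigr).
\ee
Since $\beta p^{s-1}+1$ has $p$-adic expansion $\beta p^{s-1}+1$ (digits $1,0,\dots,0,\beta$), (\ref{Equality.weight.xn+c.N}) gives $w_H((x^n+\myConstant)^{\beta p^{s-1}+1}) = 2(\beta+1)\ge \beta+2$, and as $r(x)\not\equiv 0$, its remainder mod the irreducible $x^n+\myConstant$ is nonzero, so $w_H(r(x)\bmod(x^n+\myConstant))\ge 1$. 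This already yields $w_H(c)\ge \beta+2$ directly — no case analysis needed.

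So the main point to be careful about is that (\ref{Inequality.Lower.Bound.g(x).xn+c.N}) is quoted for arbitrary $g(x)\in\F_q[x]$ and arbitrary $N$, so it applies verbatim with $g(x)=r(x)$ and $N=\beta p^{s-1}+1$; there is no constraint like $N<p^e$ that bites here because we only need the $p$-adic-digit formula for the weight of $(x^n+\myConstant)^{N}$, and $\beta p^{s-1}+1<p^s$. The subtle step, and the one I expect to require the most care in the writeup, is justifying that the normalization $c(x)=r(x)(x^n+\myConstant)^{\beta p^{s-1}+1}$ holds as an honest polynomial identity in $\F_q[x]$ (not merely modulo $(x^n+\myConstant)^{p^s}$), so that Hamming weights computed in $\F_q[x]$ genuinely equal $w_H(c)$; but this is exactly what the displayed computation preceding the lemma establishes, so I would simply cite it.

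An alternative, slightly more hands-on route — in case one prefers not to lean on the weight-retaining theorem — would mirror the proof of Lemma \ref{Preliminaries.Lemma.weight.beta}: reduce $r(x)$ modulo $x^n+\myConstant$, write things in terms of the $p$-adic expansion of the exponent, and track how many monomials survive. But this essentially re-derives (\ref{Inequality.Lower.Bound.g(x).xn+c.N}) in a special case, so the clean path is the one above. I would therefore present the short argument: normalize, apply (\ref{Inequality.Lower.Bound.g(x).xn+c.N}) with $N=\beta p^{s-1}+1$, compute $w_H((x^n+\myConstant)^{\beta p^{s-1}+1})=2(\beta+1)$ via (\ref{Equality.weight.xn+c.N}), and conclude $d_H(C)\ge\beta+2$. (One could even note the stronger bound $2(\beta+1)$, though $\beta+2$ is what is needed since the matching upper bound will be $\beta+2$ in the corresponding corollary.)
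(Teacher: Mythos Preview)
There is a genuine gap in your argument. You claim that since $r(x)\neq 0$, its remainder modulo the irreducible polynomial $x^n+\myConstant$ is nonzero. This is false in general: nothing prevents $(x^n+\myConstant)$ from dividing $r(x)$. For instance, take $r(x)=(x^n+\myConstant)^{p^{s-1}-1}$; then $r(x)\bmod (x^n+\myConstant)=0$, so the weight-retaining bound (\ref{Inequality.Lower.Bound.g(x).xn+c.N}) with $g(x)=r(x)$ and $N=\beta p^{s-1}+1$ gives only $w_H(c)\ge 0$, which is useless. (In this example $c(x)=(x^n+\myConstant)^{(\beta+1)p^{s-1}}$ does have weight $\beta+2$, but your argument does not detect this.) Irreducibility of $x^n+\myConstant$ only tells you that $\F_q[x]/(x^n+\myConstant)$ is a field; it does not say that every nonzero polynomial in $\F_q[x]$ has nonzero image there.

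This is precisely why the paper first extracts the largest $m$ with $(x^n+\myConstant)^m\mid f(x)$, writes $f(x)=g(x)(x^n+\myConstant)^m$ so that $g(x)\bmod (x^n+\myConstant)\neq 0$ by maximality, and then applies (\ref{Inequality.Lower.Bound.g(x).xn+c.N}) with exponent $m+\beta p^{s-1}+1$. The price is that one must now bound $w_H\bigl((x^n+\myConstant)^{m+\beta p^{s-1}+1}\bigr)$ for \emph{all} admissible $m$, not just $m=0$; this is exactly the content of Lemma~\ref{Preliminaries.Lemma.weight.beta}. Your proposed shortcut bypasses that lemma, but only by making an unjustified claim about $r(x)\bmod(x^n+\myConstant)$. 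The fix is to reinstate the extraction of $m$ and invoke Lemma~\ref{Preliminaries.Lemma.weight.beta}, which brings you back to the paper's proof.
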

\begin{proof}
	Let $0 \neq c(x) \in C$, then there exists $0 \neq f(x) \in \F_{q}[x]$\ such that 
	\be\nn
		c(x) \equiv f(x)(x^n + \myConstant)^{\beta p ^{s-1} + 1} \mod (x^n + \myConstant)^{p^s}.
	\ee 
	We may assume that
	$\deg (f(x)) < np^s - n \beta p^{s-1} - n = (p - \beta)np^{s-1} -n$.
	We choose $m$\ to be the largest nonnegative integer with $(x^n + \myConstant)^m | f(x)$.
	Clearly $\deg (f(x)) < (p - \beta)np^{s-1} -n$\ implies $m < (p - \beta)p^{s-1} -1$.
	So, by Lemma \ref{Preliminaries.Lemma.weight.beta}, we get
	\be\label{Irreducible.Lemma.Weight.Greater.Than.beta.plus.two.Weight.xn.Myconstant}
		w_H( (x^n + \myConstant )^{m + \beta p^{s-1}+1} )  \ge \beta + 2.
	\ee	
	For $f(x) = g(x)(x^n + \myConstant)^m$, we have
	\be\nn
		g(x)\mod x^n + \myConstant \neq 0	
	\ee
	by our choice of $m$, so
	\be\label{Irreducible.Lemma.Weight.Greater.Than.beta.plus.two.Modulo.Weight.Positive}
		w_H( g(x) \mod (x^n + \myConstant) ) > 0.
	\ee
	Now using
	(\ref{Irreducible.Lemma.Weight.Greater.Than.beta.plus.two.Weight.xn.Myconstant}), 
	(\ref{Irreducible.Lemma.Weight.Greater.Than.beta.plus.two.Modulo.Weight.Positive})
	and
	(\ref{Inequality.Lower.Bound.g(x).xn+c.N}),
	we obtain
	\be\nn
		w_H(c(x)) & = & w_H(g(x)(x^n + \myConstant)^{m + \beta p^{s-1}+1})\\
			& \ge & w_H( g(x)\mod (x^n + \myConstant ) ) w_H( (x^n + \myConstant)^m )\nn \\
			& \ge & \beta + 2 \nn .
	\ee
	This completes the proof.
\end{proof}

Next we show that the lower bound given in Lemma \ref{Irreducible.Lemma.Weight.Greater.Than.beta.plus.two}
is achieved 
when $p^{s-1} < i \le (p-1)p^{s-1}$\
and this gives us the exact value of $d_H(C)$.
\begin{corollary}\label{Irreducible.Corollary.Weight.Exactly.beta.plus.two}
	Let $1 \le \beta \le p-2$, $\beta p^{s-1} + 1 \le i \le (\beta + 1)p^{s-1} $\
	be integers and
	let $C = \langle (x^n + \myConstant)^i \rangle $. 
	Then $d_H(C) = \beta + 2$.
\end{corollary}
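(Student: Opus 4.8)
The plan is to get the lower bound for free from Lemma~\ref{Irreducible.Lemma.Weight.Greater.Than.beta.plus.two} via an inclusion of ideals, and then to exhibit a single explicit codeword of weight $\beta+2$. For the lower bound, since $\beta p^{s-1}+1 \le i$, the exponent condition for divisibility gives the ideal inclusion
\[
C = \langle (x^n + \myConstant)^i \rangle \subseteq \langle (x^n + \myConstant)^{\beta p^{s-1}+1} \rangle .
\]
Every nonzero codeword of $C$ is then a nonzero codeword of the larger code, so $d_H(C) \ge d_H\big(\langle (x^n + \myConstant)^{\beta p^{s-1}+1}\rangle\big) \ge \beta + 2$ by Lemma~\ref{Irreducible.Lemma.Weight.Greater.Than.beta.plus.two}.

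For the matching upper bound I would take the codeword
\[
c(x) = (x^n + \myConstant)^{(\beta+1)p^{s-1}} = (x^n + \myConstant)^{(\beta+1)p^{s-1}-i}\,(x^n + \myConstant)^i .
\]
The hypothesis $i \le (\beta+1)p^{s-1}$ makes the exponent $(\beta+1)p^{s-1}-i$ a nonnegative integer, so $c(x) \in C$; the hypothesis $\beta \le p-2$ gives $(\beta+1)p^{s-1} < p^s$, so $c(x)$ has degree $n(\beta+1)p^{s-1} < np^s$ and is a nonzero element of $\sR$. To compute its weight, note that the $p$-adic expansion of $N := (\beta+1)p^{s-1}$ has the single nonzero digit $b_{s-1} = \beta+1$, so (\ref{Equality.weight.xn+c.N}) yields
\[
w_H\big((x^n + \myConstant)^{(\beta+1)p^{s-1}}\big) = (\beta+1) + 1 = \beta + 2 .
\]
Hence $d_H(C) \le \beta+2$, and combined with the lower bound this gives $d_H(C) = \beta + 2$.

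I do not expect a real obstacle here; the only point needing care is checking that $c(x)$ is a genuine nonzero codeword, which is exactly where the constraints $1 \le \beta \le p-2$ and $i \le (\beta+1)p^{s-1}$ are used. (If one prefers not to invoke (\ref{Equality.weight.xn+c.N}) for the weight of $c(x)$, one can instead write $(x^n+\myConstant)^{(\beta+1)p^{s-1}} = (x^{np^{s-1}}+\myConstant^{p^{s-1}})^{\beta+1}$ and observe that all $\beta+2$ binomial coefficients $\binom{\beta+1}{j}$, $0 \le j \le \beta+1$, are nonzero modulo $p$ because $\beta+1 \le p-1$, while the monomials $x^{j n p^{s-1}}$ are pairwise distinct, so no cancellation occurs.)
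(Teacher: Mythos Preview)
Your proof is correct and follows essentially the same approach as the paper: the lower bound comes from the ideal inclusion $C \subset \langle (x^n+\gamma)^{\beta p^{s-1}+1}\rangle$ together with Lemma~\ref{Irreducible.Lemma.Weight.Greater.Than.beta.plus.two}, and the upper bound from the explicit codeword $(x^n+\gamma)^{(\beta+1)p^{s-1}} \in C$ of weight $\beta+2$ computed via (\ref{Equality.weight.xn+c.N}). Your added verification that this codeword is nonzero in $\sR$ and the alternative binomial argument are both fine extra details.
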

\begin{proof}
	Lemma \ref{Irreducible.Lemma.Weight.Greater.Than.beta.plus.two} and
	 $C \subset \langle (x^n + \myConstant)^{\beta p^{s-1} + 1} \rangle$\ implies
	\be\label{Irreducible.Corollary.Weight.Exactly.beta.plus.two.Lower.Bound}
		d_H(C) \ge \beta + 2.
	\ee
	We know, by (\ref{Equality.weight.xn+c.N}), that
	\be\label{Irreducible.Corollary.Weight.Exactly.beta.plus.two.Weight}
		w_H( (x^n + \myConstant)^{(\beta + 1) p^{s-1}} ) = \beta + 2.
	\ee
	Clearly $(x^n + \myConstant)^{(\beta + 1)p^{s-1}} \in C $\ as $(\beta + 1)p^{s-1} \ge i$.
	Thus (\ref{Irreducible.Corollary.Weight.Exactly.beta.plus.two.Weight}) implies
	\be\label{Irreducible.Corollary.Weight.Exactly.beta.plus.two.Upper.Bound}
		d_H(C) \le \beta + 2.
	\ee
	Combining (\ref{Irreducible.Corollary.Weight.Exactly.beta.plus.two.Lower.Bound}) and
	(\ref{Irreducible.Corollary.Weight.Exactly.beta.plus.two.Upper.Bound}), we get
	$d_H(C) = \beta + 2$.
	
\end{proof}

Having covered the range $p^{s-1} < i \le (p-1)p^{s-1}$, now
we give a lower bound on $d_H(C)$ when $(p-1)p^{s-1} < i < p^s $\ in the following lemma.
\begin{lemma}\label{Irreducible.Lemma.Weight.Greater.Than.tau.k}
	Let $1 \le \tau \le p-1$, $1 \le k \le s-1$\ be
	integers and let $C = \langle (x^n + \myConstant)^{ p^s - p^{s-k} + (\tau -1)p^{s-k-1} + 1} \rangle $.
	Then $d_H(C) \ge (\tau + 1)p^k$.
\end{lemma}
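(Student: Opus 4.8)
The plan is to mirror the structure of the proof of Lemma~\ref{Irreducible.Lemma.Weight.Greater.Than.beta.plus.two}, replacing the appeal to Lemma~\ref{Preliminaries.Lemma.weight.beta} with an appeal to Lemma~\ref{Preliminaries.Lemma.weight.tau.k}. Concretely, let $0 \neq c(x) \in C$. Then there is $0 \neq f(x) \in \F_q[x]$ with
\be\nn
	c(x) \equiv f(x)(x^n + \myConstant)^{p^s - p^{s-k} + (\tau-1)p^{s-k-1} + 1} \mod (x^n + \myConstant)^{p^s},
\ee
and as in the preceding discussion we may assume $\deg(f(x)) < np^s - n\bigl(p^s - p^{s-k} + (\tau-1)p^{s-k-1} + 1\bigr) = n\bigl(p^{s-k} - (\tau-1)p^{s-k-1} - 1\bigr)$, so that $c(x) = f(x)(x^n+\myConstant)^{p^s - p^{s-k} + (\tau-1)p^{s-k-1}+1}$ as an honest equality in $\F_q[x]$.

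Next I would choose $m$ to be the largest nonnegative integer with $(x^n+\myConstant)^m \mid f(x)$ and write $f(x) = g(x)(x^n+\myConstant)^m$, so that $g(x) \bmod (x^n+\myConstant) \neq 0$, hence $w_H(g(x) \bmod (x^n+\myConstant)) > 0$. The degree bound on $f(x)$ forces $m < p^{s-k} - (\tau-1)p^{s-k-1} - 1$, which is exactly the hypothesis needed to invoke Lemma~\ref{Preliminaries.Lemma.weight.tau.k} (after the harmless identification of $x$ with $x^n$, as in the passage from~(\ref{Equality.weight.x+c.N}) to~(\ref{Equality.weight.xn+c.N}); note the lemma is stated with $x^{2n}$ but its proof only uses the $p$-adic-expansion formula~(\ref{Equality.weight.xn+c.N}), which holds verbatim for $x^n$). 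This yields
\be\nn
	w_H\bigl((x^n+\myConstant)^{m + p^s - p^{s-k} + (\tau-1)p^{s-k-1} + 1}\bigr) \ge (\tau+1)p^k.
\ee
Finally, applying the weight-retaining bound~(\ref{Inequality.Lower.Bound.g(x).xn+c.N}) with $N = m + p^s - p^{s-k} + (\tau-1)p^{s-k-1} + 1$ gives
\be\nn
	w_H(c(x)) \ge w_H\bigl(g(x) \bmod (x^n+\myConstant)\bigr)\cdot w_H\bigl((x^n+\myConstant)^{m + p^s - p^{s-k} + (\tau-1)p^{s-k-1}+1}\bigr) \ge (\tau+1)p^k,
\ee
and since $c(x)$ was an arbitrary nonzero codeword, $d_H(C) \ge (\tau+1)p^k$.

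The only genuinely delicate point is the degree bookkeeping: one must check that the bound $\deg(f(x)) < n\bigl(p^{s-k} - (\tau-1)p^{s-k-1} - 1\bigr)$ really does translate into $m < p^{s-k} - (\tau-1)p^{s-k-1} - 1$, i.e.\ that $(x^n+\myConstant)^m$ dividing $f(x)$ with $\deg f < n\,M$ forces $m < M$ — this is immediate since $\deg\bigl((x^n+\myConstant)^m\bigr) = nm$. A secondary subtlety worth a sentence is confirming that the range $(p-1)p^{s-1} < i < p^s$ is genuinely partitioned by the parameters $k$ and $\tau$ as claimed in~(\ref{Preliminaries.Partition}), and that $C \subset \langle (x^n+\myConstant)^{p^s - p^{s-k} + (\tau-1)p^{s-k-1}+1}\rangle$ is what lets the bound for the single exponent $p^s - p^{s-k} + (\tau-1)p^{s-k-1}+1$ serve the whole block — but that monotonicity of ideals under increasing exponent is exactly the same reduction already used in Corollary~\ref{Irreducible.Corollary.Weight.Exactly.beta.plus.two}. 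Everything else is a direct transcription of the argument for Lemma~\ref{Irreducible.Lemma.Weight.Greater.Than.beta.plus.two}.
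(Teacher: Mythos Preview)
Your proof is correct and follows essentially the same approach as the paper's own proof: take an arbitrary nonzero codeword, reduce the multiplier $f(x)$ to have degree below $n\bigl(p^{s-k}-(\tau-1)p^{s-k-1}-1\bigr)$, extract the maximal power $(x^n+\myConstant)^m$ from $f(x)$, bound $m$, apply Lemma~\ref{Preliminaries.Lemma.weight.tau.k}, and finish with the weight-retaining inequality~(\ref{Inequality.Lower.Bound.g(x).xn+c.N}). Your parenthetical remark about the $x^{2n}$ versus $x^n$ in the statement of Lemma~\ref{Preliminaries.Lemma.weight.tau.k} is on point---the paper's proof of that lemma in fact works with $(x^n+\myConstant)$ throughout, and the paper itself invokes it for $(x^n+\myConstant)$ here without further comment.
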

\begin{proof}
	Let $0 \neq c(x) \in C$, then there is $0 \neq f(x) \in \F_q[x]$\ such that 
	\be \nn
		c(x) \equiv f(x)(x^n + \myConstant)^{p^s - p^{s-k} + (\tau -1)p^{s-k-1} + 1} \mod (x^n + \myConstant)^{p^s}.
	\ee
	We may assume that 
	\be \label{Irreducible.Lemma.Weight.Greater.Than.tau.k.deg.f}
		\deg (f(x)) < np^{s-k} -n(\tau -1)p^{s-k-1} - n .
	\ee
	Let $m$\ be the largest nonnegative integer with $(x^n + \myConstant)^m | f(x)$.
	Then there exists $g(x) \in \F_q[x]$\ such that $f(x) = g(x)(x^n + \myConstant)^m$.
	Note that
	\be\nn
		p^{s-k} - (\tau -1)p^{s-k-1} - 1 = (p - \tau)p^{s-k-1} + \cdots + (p-1)p + p-1
	\ee
	and
	\be\nn
		p^s - p^{s-k} + (\tau -1)p^{s-k-1} + 1 = (p-1)p^{s-1} + \cdots + (p-1)p^{s-k} + (\tau -1)p^{s-k-1} + 1.
	\ee
	By (\ref{Irreducible.Lemma.Weight.Greater.Than.tau.k.deg.f}), we have $m < p^{s-k} - (\tau -1)p^{s-k-1} -1$.
	So, by Lemma \ref{Preliminaries.Lemma.weight.tau.k}, we get
	\be\label{Irreducible.Lemma.Weight.Greater.Than.tau.k.Weight.x^n.myConstant}
		w_H( (x^n + \myConstant)^{m + p^s - p^{s-k} + (\tau -1)p^{s-k-1} + 1} )
 		\ge p^k(\tau + 1).
	\ee
	The maximality of $m$\ implies $x^n + \myConstant \nmid g(x)$\ and therefore
	$g(x) \mod x^n + \myConstant \neq 0$. So we have
	\be\label{Irreducible.Lemma.Weight.Greater.Than.tau.k.Modulo.Positive}
		w_H( g(x) \mod x^n + \myConstant ) > 0.
	\ee
	Now using (\ref{Inequality.Lower.Bound.g(x).xn+c.N}),
	(\ref{Irreducible.Lemma.Weight.Greater.Than.tau.k.Weight.x^n.myConstant}) and
	(\ref{Irreducible.Lemma.Weight.Greater.Than.tau.k.Modulo.Positive}),
	we obtain
	\be
		w_H( c(x) ) & = & w_H( g(x)(x^n + \myConstant)^{m + p^s - p^{s-k} + (\tau -1)p^{s-k-1} + 1} ) \nn \\
		& \ge & w_H( g(x) \mod x^n + \myConstant )w_H( (x^n + \myConstant)^{p^s - p^{s-k} + (\tau -1)p^{s-k-1} + 1 + m} ) \nn \\
		& \ge & p^k (\tau + 1)\nn. 		  
	\ee
	This completes the proof.
\end{proof}

For $(p-1)p^{s-1} < i < p^s$, we determine $d_H(C)$\ in 
Corollary \ref{Irreducible.Corollary.Weight.Greater.Than.tau.k} where we show the existence
of a codeword that achieves the lower bound given in 
Lemma \ref{Irreducible.Lemma.Weight.Greater.Than.tau.k}.
\begin{corollary}\label{Irreducible.Corollary.Weight.Greater.Than.tau.k}
	Let $1 \le \tau \le p-1$, $1 \le k \le s-1$\ and $i$\ be integers such that
	\be
		p^s-p^{s-k}+(\tau -1)p^{s-k-1}+1 \le i \le p^s-p^{s-k}+\tau p^{s-k-1}.\nn
	\ee
	Let $C = \langle (x^n + \myConstant)^i \rangle$. Then $d_H(C) = (\tau + 1)p^k$.
\end{corollary}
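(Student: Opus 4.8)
The plan is to follow the pattern of Corollary~\ref{Irreducible.Corollary.Weight.Exactly.beta.plus.two}: derive the lower bound from an inclusion of codes together with Lemma~\ref{Irreducible.Lemma.Weight.Greater.Than.tau.k}, and the matching upper bound by producing an explicit codeword whose Hamming weight is computed via the formula (\ref{Equality.weight.xn+c.N}).

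For the lower bound, I would observe that the hypothesis $i \ge p^s - p^{s-k} + (\tau-1)p^{s-k-1} + 1$ gives the inclusion
\[
C = \langle (x^n + \myConstant)^i \rangle \subseteq \langle (x^n + \myConstant)^{\,p^s - p^{s-k} + (\tau-1)p^{s-k-1} + 1} \rangle ,
\]
so that Lemma~\ref{Irreducible.Lemma.Weight.Greater.Than.tau.k} applies to the right-hand code and yields $d_H(C) \ge (\tau+1)p^k$ at once.

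For the upper bound, the candidate codeword is $(x^n + \myConstant)^{N}$ with $N := p^s - p^{s-k} + \tau p^{s-k-1}$. Since $1 \le \tau \le p-1$ we have $N \le p^s - p^{s-k} + (p-1)p^{s-k-1} = p^s - p^{s-k-1} < p^s$, so $(x^n+\myConstant)^N$ is a nonzero element of $\sR$; and since $N \ge i$ it lies in $C$. To compute its weight I would write down the $p$-adic expansion of $N$: from $p^s - p^{s-k} = (p-1)\bigl(p^{s-1} + \cdots + p^{s-k}\bigr)$ one reads off that the digits of $N$ are $b_{s-1} = \cdots = b_{s-k} = p-1$, $b_{s-k-1} = \tau$, and $b_j = 0$ for $0 \le j < s-k-1$. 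Substituting into (\ref{Equality.weight.xn+c.N}) (with $e = s$) gives $w_H\bigl((x^n+\myConstant)^N\bigr) = (p-1+1)^k (\tau+1)\cdot 1^{\,s-k-1} = (\tau+1)p^k$, whence $d_H(C) \le (\tau+1)p^k$. Combining the two bounds yields $d_H(C) = (\tau+1)p^k$.

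There is no real obstacle here; the only point requiring a little care is the bookkeeping of the $p$-adic digits of $N$, together with the edge case $k = s-1$, where $b_{s-k-1}$ is the constant digit $b_0$ and there are no trailing zeros — this does not affect the product.
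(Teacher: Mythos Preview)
Your proof is correct and follows essentially the same route as the paper: the lower bound via the inclusion $C \subset \langle (x^n+\gamma)^{p^s-p^{s-k}+(\tau-1)p^{s-k-1}+1}\rangle$ together with Lemma~\ref{Irreducible.Lemma.Weight.Greater.Than.tau.k}, and the upper bound via the codeword $(x^n+\gamma)^{p^s-p^{s-k}+\tau p^{s-k-1}}$ whose weight is computed from (\ref{Equality.weight.xn+c.N}). Your version is in fact slightly more detailed than the paper's, since you spell out the $p$-adic expansion of $N$ and check $N<p^s$ explicitly, whereas the paper just invokes (\ref{Equality.weight.xn+c.N}) directly.
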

\begin{proof}
	Lemma \ref{Irreducible.Lemma.Weight.Greater.Than.tau.k} and 
	$C \subset \langle (x^n + \myConstant)^{p^s-p^{s-k}+(\tau - 1) p^{s-k-1} + 1} \rangle $\ implies
	\be\label{Irreducible.Corollary.Weight.Greater.Than.tau.k.Lower.Bound}
		d_H(C) \ge (\tau + 1)p^k.
	\ee
	We know, by (\ref{Equality.weight.xn+c.N}), that
	\be\label{Irreducible.Corollary.Weight.Greater.Than.tau.k.Weight}
		w_H( (x^N + \myConstant)^{p^s-p^{s-k}+ \tau p^{s-k-1}} ) = (\tau + 1)p^k.
	\ee
	Clearly $(x^N + \myConstant)^{p^s-p^{s-k}+ \tau p^{s-k-1}} \in C$\ as 
	$p^s-p^{s-k} + \tau p^{s-k-1} \ge i$. 
	Thus (\ref{Irreducible.Corollary.Weight.Greater.Than.tau.k.Weight}) implies
	\be\label{Irreducible.Corollary.Weight.Greater.Than.tau.k.Upper.Bound}
		d_H(C) \le (\tau + 1)p^k.
	\ee
	Combining (\ref{Irreducible.Corollary.Weight.Greater.Than.tau.k.Lower.Bound}) and
	(\ref{Irreducible.Corollary.Weight.Greater.Than.tau.k.Upper.Bound}),
	we get $d_H(C) = (\tau+1)p^k$.
\end{proof}

We summarize our results in the following theorem.

\begin{theorem}\label{Irreducible.Theorem.Main}
	Let $p$\ be a prime number, $\F_q$\ a finite field of characteristic $p$,
	$\myConstant \in \F_{q}\setminus \{ 0 \}$\ and $n$\ be a positive integer.
	Suppose that $x^n + \myConstant \in \F_{q}[x]$\ is irreducible. Then the $\lambda$-cyclic codes
	over $\F_{q}$, of length $np^s$, are of the form $C[i] = \langle (x^n + \myConstant)^i \rangle$,
	where $0 \le i \le p^s$\ and $\lambda = - \myConstant ^{p^{s}}  $. 
	If $i = 0$, then $C$\ is the whole space $\F_q^{2np^s}$\ and if
	$i = p^s$, then $C$\ is the zero space $\{ \mathbf{0} \}$.
	For the remaining values of $i$, the minimum Hamming
	distance of $C[i]$\ is given by
	 $$ 
	\begin{array}{l}
    		\dd d_H(C[i])
    		 = \left\lbrace
    		\begin{array}{ll}
        		2, &  \mbox{if}\ \ 1 \le i \le p^{s-1},\\
        		\beta+2, & \mbox{if}\ \ \beta p ^{s-1} +1 \le i \le (\beta+1)p^{s-1}\ \mbox{where}\ \ 1 \le \beta \le p-2, \\
        		(\tau + 1)p^k, & \mbox{if}\ \ p^s-p^{s-k}+(\tau -1)p^{s-k-1}+1 \le i \le p^s-p^{s-k}+\tau p^{s-k-1}\\
                	& \mbox{where}\ \ 1 \leq \tau \le p-1\ \ \mbox{and}\ \ 1 \le k \le s-1.
    		\end{array} \right.
 	\end{array}
 	$$
\end{theorem}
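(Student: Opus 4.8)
The plan is to assemble the statement from the lemmas and corollaries already established, after first pinning down the structure of the ambient ring. The starting point is the observation that $\gamma^{p^s}=-\lambda$ together with the Frobenius identity in characteristic $p$ gives
\[
	x^{np^s}-\lambda = (x^n)^{p^s}+\gamma^{p^s} = (x^n+\gamma)^{p^s},
\]
so that $\sR = \F_q[x]/\langle (x^n+\gamma)^{p^s}\rangle$. Since $x^n+\gamma$ is irreducible, the monic divisors of $(x^n+\gamma)^{p^s}$ are precisely the powers $(x^n+\gamma)^i$ with $0\le i\le p^s$; by the discussion in Section \ref{Section.Preliminaries} on generators of ideals in the principal ideal ring $\sR$, every $\lambda$-cyclic code of length $np^s$ is therefore of the form $C[i]=\langle (x^n+\gamma)^i\rangle$. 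This justifies the first assertion of the theorem, and the two degenerate cases $i=0$ (giving $C[i]=\sR=\F_q^{np^s}$) and $i=p^s$ (giving $C[i]=\{\mathbf 0\}$) are immediate.

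Next I would dispose of the remaining range $1\le i\le p^s-1$ by invoking the partition (\ref{Preliminaries.Partition}) of $\{1,2,\dots,p^s-1\}$ into the block $\{1,\dots,p^{s-1}\}$, the blocks $\{i:\beta p^{s-1}+1\le i\le(\beta+1)p^{s-1}\}$ for $1\le\beta\le p-2$, and the blocks $\{i:p^s-p^{s-k}+(\tau-1)p^{s-k-1}+1\le i\le p^s-p^{s-k}+\tau p^{s-k-1}\}$ for $1\le k\le s-1$, $1\le\tau\le p-1$. For $i$ in the first block the value $d_H(C[i])=2$ is exactly Lemma \ref{Irreducible.Lemma.Weight.Exactly.Two}; for $i$ in a block of the second type the value $d_H(C[i])=\beta+2$ is exactly Corollary \ref{Irreducible.Corollary.Weight.Exactly.beta.plus.two}; and for $i$ in a block of the third type the value $d_H(C[i])=(\tau+1)p^k$ is exactly Corollary \ref{Irreducible.Corollary.Weight.Greater.Than.tau.k}. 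Since (\ref{Preliminaries.Partition}) shows these blocks are pairwise disjoint and exhaust $\{1,\dots,p^s-1\}$, the three cases in the displayed formula are mutually exclusive and cover every relevant $i$, which completes the proof.

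In truth there is no serious obstacle left at this stage: all the analytic content — the lower bounds coming from the weight‑retaining inequality (\ref{Inequality.Lower.Bound.g(x).xn+c.N}) combined with Lemmas \ref{Preliminaries.Lemma.weight.beta} and \ref{Preliminaries.Lemma.weight.tau.k}, and the matching upper bounds exhibited by the explicit codewords $(x^n+\gamma)^{(\beta+1)p^{s-1}}$ and $(x^n+\gamma)^{p^s-p^{s-k}+\tau p^{s-k-1}}$ — has already been carried out in the preceding results. The one point that deserves a line of care is the bookkeeping: one must check that the index ranges quoted in the three cases of the theorem coincide verbatim with the three families of blocks in (\ref{Preliminaries.Partition}), so that the cited corollaries apply on the nose; this is precisely the reason the partition was set up in Section \ref{Section.Preliminaries} in the first place. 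If anything were to be the "hard part", it would be verifying that no value of $i$ slips between the blocks at the boundaries $i=p^{s-1}$, $i=(p-1)p^{s-1}=p^s-p^{s-1}$, and $i=p^s-p^{s-k-1}$, but the chains of strict inequalities displayed just before (\ref{Preliminaries.Partition}) settle exactly this.
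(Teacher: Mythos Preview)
Your proposal is correct and matches the paper's own approach: Theorem~\ref{Irreducible.Theorem.Main} is presented in the paper as a summary with no separate proof, the structural description of $\sR$ and its ideals having been given at the start of Section~\ref{Section.Irreducible}, and the three distance values being exactly Lemma~\ref{Irreducible.Lemma.Weight.Exactly.Two}, Corollary~\ref{Irreducible.Corollary.Weight.Exactly.beta.plus.two}, and Corollary~\ref{Irreducible.Corollary.Weight.Greater.Than.tau.k} applied across the partition~(\ref{Preliminaries.Partition}). Your write-up simply makes this assembly explicit, and the Frobenius computation $x^{np^s}-\lambda=(x^n+\gamma)^{p^s}$ you spell out is precisely the factorization the paper uses at the opening of Section~\ref{Section.Irreducible}.
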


\begin{remark}
	If we replace $n$\ with $1$ and $\myConstant$\ with $-1$\ in Theorem \ref{Irreducible.Theorem.Main},
	then we obtain the main results of \cite{D2008} and \cite{OZOZ_1}.
	Namely, we obtain \cite[Theorem 4.11]{D2008} and \cite[Theorem 3.4]{OZOZ_1}.
\end{remark}

In the rest of this section, we assume that $p$\ is an odd prime number and 
$a$\ is a positive integer.

Now we will apply Theorem \ref{Irreducible.Theorem.Main} to a particular case. Namely, we will consider the negacyclic codes
over $\F_{p^a}$ of length $2p^s$. In order to apply Theorem \ref{Irreducible.Theorem.Main},
the polynomial $x^2  + 1$\ must be irreducible over $\F_{p^a}$. A complete irreducibility criterion 
for $x^2+1$\ is given in the following lemma.

\begin{lemma}\label{Irreducible.Lemma.Irreducibility.Criterion}
	Let $p$\ be an odd prime and $a$\ be a positive integer.
	The polynomial $x^2 + 1 \in \F_{p^a}[x]$\ is irreducible
	if and only if $p = 4k + 3$\ for some $k \in \N $\ and
	$a$\ is odd.
\end{lemma}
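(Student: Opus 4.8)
The plan is to reduce irreducibility of $x^2+1$ over $\F_{p^a}$ to a congruence condition on $p^a$ modulo $4$. Since $x^2+1$ has degree $2$, it is irreducible over $\F_{p^a}$ if and only if it has no root in $\F_{p^a}$, i.e. if and only if $-1$ is not a square in $\F_{p^a}$. As $p$ is odd we have $-1 \neq 0$, so this is a question purely about the multiplicative group $\F_{p^a}^{\times}$.

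Next I would invoke the structure of $\F_{p^a}^{\times}$, which is cyclic of order $p^a-1$; since $p$ is odd this order is even, so the squares form the unique subgroup of index $2$, and $-1$ is the unique element of order $2$. Writing $\F_{p^a}^{\times}=\langle g\rangle$, we have $-1 = g^{(p^a-1)/2}$, and this lies in the subgroup of squares if and only if $(p^a-1)/2$ is even, i.e. if and only if $4 \mid p^a-1$, i.e. $p^a \equiv 1 \pmod 4$. Hence $x^2+1$ is irreducible over $\F_{p^a}$ precisely when $p^a \equiv 3 \pmod 4$ (the only other possibility for an odd prime power).

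Finally I would evaluate $p^a \bmod 4$. Since $p$ is odd, $p \equiv 1$ or $p \equiv 3 \pmod 4$. If $p \equiv 1 \pmod 4$, then $p^a \equiv 1 \pmod 4$ for every $a$, so $x^2+1$ is always reducible (in fact it splits into two distinct linear factors, as $\pm i$ are distinct in odd characteristic). If $p \equiv 3 \pmod 4$, then $p^a \equiv (-1)^a \pmod 4$, so $p^a \equiv 3 \pmod 4$ exactly when $a$ is odd. Combining the two cases, $x^2+1$ is irreducible over $\F_{p^a}$ if and only if $p \equiv 3 \pmod 4$ (that is, $p = 4k+3$ for some $k \in \N$) and $a$ is odd, which is the claimed criterion.

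I do not expect any genuine obstacle here; the only point requiring a little care is the classical fact that $-1$ is a square in $\F_{p^a}$ if and only if $p^a \equiv 1 \pmod 4$, and that is settled cleanly by the cyclic-group computation above. One could instead argue via the norm map $N_{\F_{p^{2a}}/\F_{p^a}}$ or by a counting argument on squares, but the cyclic-group version is the most direct and self-contained.
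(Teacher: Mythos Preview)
Your proof is correct and follows essentially the same approach as the paper: both reduce irreducibility of $x^2+1$ to whether $-1$ is a square in $\F_{p^a}^{\times}$, and both settle this via the cyclic structure of the multiplicative group, i.e.\ via whether $4$ divides $p^a-1$. Your presentation is slightly more streamlined (a single equivalence $-1$ is a square $\iff p^a\equiv 1\pmod 4$ via the generator, rather than the paper's separate order-$4$-element arguments for each direction), but the underlying argument is the same.
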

\begin{proof}
	If $p = 4k+3$\ and $a$\ is odd, then $p^a = 4k_1 + 3$\ for some $k_1 \in \N$.
	So $\F_{p^a} \setminus \{ 0 \}$\ has $4k_1 + 2$\ elements. Now assume that
	there exists $\omega \in \F_{p^a} \setminus \{ 0 \} $\ with $\omega ^ 2 = -1$.
	Then $\omega^4 = 1$. This implies that the multiplicative group of $\F_{p^a}$\ has an element 
	of order $4$. This is a contradiction. Therefore $x^2 + 1$\ has no
	root in $\F_{p^a}$\ and hence $x^2 + 1$\ is irreducible.
	For the converse, if $p = 4k+1$, then $p^a = 4k_2 + 1$\ for some
	$k_2 \in \N$. So $\F_{p^a} \setminus \{ 0 \}$\ has $4k_2$\ elements
	and therefore there exists $\alpha \in \F_{p^a} \setminus \{ 0 \}$\ such that
	$\alpha^{2k_2} = -1$. Having a root in $\F_{p^a}$, $x^2+1$\ is reducible over $\F_{p^a}$.
	If $p= 4k + 3$\ and $a$\ is even, then again we have $p^a = 4k_3 + 1$\ for some
	$k_3 \in \N$\ and similarly $x^2+1$\ is reducible over $\F_{p^a}$. 
	This completes the proof.
\end{proof}

Let $C$\ be a negacyclic code of length $2p^s$\ over $\F_{p^a}$.
If $x^2 + 1$ is irreducible over $\F_{p^a}$,
then the minimum Hamming distance of $C$\ is given in the following theorem.

\begin{theorem}\label{Irreducible.Theorem.Particular}
	Let $p = 4k + 3$\ be a prime for some $k \in \N$ and let $a\in \N$\ be an odd number.
	Then the negacyclic codes
	over $\F_{p^a}$, of length $2p^s$, are of the form $C[i] = \langle (x^2 + 1)^i \rangle$,
	where $0 \le i \le p^s$, and
	 $$ 
	\begin{array}{l}
    		\dd d_H(C[i])
    		 = \left\lbrace
    		\begin{array}{ll}
        		2, &  \mbox{if}\ \ 1 \le i \le p^{s-1},\\
        		\beta+2, & \mbox{if}\ \ \beta p ^{s-1} +1 \le i \le (\beta+1)p^{s-1}\ \mbox{where}\ \ 1 \le \beta \le p-2, \\
        		(\tau + 1)p^k, & \mbox{if}\ \ p^s-p^{s-k}+(\tau -1)p^{s-k-1}+1 \le i \le p^s-p^{s-k}+\tau p^{s-k-1}\\
                	& \mbox{where}\ \ 1 \leq \tau \le p-1\ \ \mbox{and}\ \ 1 \le k \le s-1.
    		\end{array} \right.
 	\end{array}
 	$$
\end{theorem}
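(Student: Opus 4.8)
The plan is to obtain Theorem~\ref{Irreducible.Theorem.Particular} as an immediate specialization of Theorem~\ref{Irreducible.Theorem.Main}, taking $n = 2$ and $\myConstant = 1$. First I would observe that a negacyclic code of length $2p^{s}$ over $\F_{p^{a}}$ is, by definition, a $\lambda$-cyclic code with $\lambda = -1$, hence an ideal of $\F_{p^{a}}[x]/\langle x^{2p^{s}} + 1\rangle$. Since $\Char \F_{p^{a}} = p$ we have the elementary identity $x^{2p^{s}} + 1 = (x^{2} + 1)^{p^{s}}$, so the setup of Section~\ref{Section.Irreducible} applies with $N = 2p^{s}$, $n = 2$, $\myConstant = 1$, and $\lambda = -\myConstant^{p^{s}} = -1$, exactly as required by the hypotheses of Theorem~\ref{Irreducible.Theorem.Main}.

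Next I would verify the one nontrivial input, namely the irreducibility hypothesis: Theorem~\ref{Irreducible.Theorem.Main} demands that $x^{n} + \myConstant = x^{2} + 1$ be irreducible over $\F_{p^{a}}$. Under the standing assumptions $p = 4k+3$ and $a$ odd, this is precisely the content of Lemma~\ref{Irreducible.Lemma.Irreducibility.Criterion}. With irreducibility in hand, Theorem~\ref{Irreducible.Theorem.Main} tells us that the negacyclic codes of length $2p^{s}$ over $\F_{p^{a}}$ are exactly $C[i] = \langle (x^{2}+1)^{i}\rangle$ for $0 \le i \le p^{s}$, that $C[0]$ is the whole space and $C[p^{s}] = \{\mathbf{0}\}$, and that for $1 \le i \le p^{s}-1$ the value $d_{H}(C[i])$ is governed by the partition (\ref{Preliminaries.Partition}) of $\{1,\dots,p^{s}-1\}$ and is given by the three-case formula in the statement. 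Substituting $n = 2$, $\myConstant = 1$ into that formula reproduces verbatim the formula asserted in Theorem~\ref{Irreducible.Theorem.Particular}.

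Since the conclusion is a direct substitution into Theorem~\ref{Irreducible.Theorem.Main}, I do not expect any genuine obstacle; the proof is essentially bookkeeping. The only steps worth writing out explicitly are the factorization $x^{2p^{s}} + 1 = (x^{2}+1)^{p^{s}}$ in characteristic $p$ — which is what lets us identify negacyclic codes of length $2p^{s}$ with the ideals $\langle (x^{2}+1)^{i}\rangle$ — and the invocation of Lemma~\ref{Irreducible.Lemma.Irreducibility.Criterion} to license the application of Theorem~\ref{Irreducible.Theorem.Main}. Everything else is quotation of the general theorem.
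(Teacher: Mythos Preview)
Your proposal is correct and matches the paper's approach: Theorem~\ref{Irreducible.Theorem.Particular} is stated in the paper without a separate proof precisely because it is the immediate specialization of Theorem~\ref{Irreducible.Theorem.Main} to $n=2$, $\myConstant=1$, with Lemma~\ref{Irreducible.Lemma.Irreducibility.Criterion} supplying the irreducibility hypothesis. Your write-up in fact makes explicit the bookkeeping (the factorization $x^{2p^s}+1=(x^2+1)^{p^s}$ and the identification $\lambda=-\myConstant^{p^s}=-1$) that the paper leaves implicit.
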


For the other values of $p$\ and $a$, $x^2 + 1$\
is reducible over $\F_{p^a}$\ and in this case, we compute the minimum
Hamming distance of $C$\ in Section \ref{Section.Reducible}.

\section{Constacyclic codes of length $2np^s$}
\label{Section.Reducible}
We assume that $p$\ is an odd prime number, $n$\ and $s$\ are positive integers,
$\F_q$\ is a finite field of characteristic $p$\ and 
$\lambda, \xi, \psi \in \F_q \setminus \{ 0 \}$ throughout this section.

Suppose that $\psi ^{p^s} = \lambda$\ and $x^{2n} - \psi$\ factors into two irreducible
polynomials over $\F_q$\ as
\be\label{Reducible.Equation.Factorization}
	x^{2n} - \psi = (x^n - \xi)(x^n + \xi).
\ee
In this section, we will find the minimum Hamming distance of 
$\lambda$-cyclic codes, of length $2np^s$, over $\F_q$\ where
(\ref{Reducible.Equation.Factorization}) is satisfied.
As mentioned before, such $\lambda$-cyclic codes correspond to the ideals
of the finite ring
\be
	\sR = \frac{\F_q[x]}{ \langle x^{2np^s} - \lambda \rangle }.\nn
\ee
Since the monic polynomials dividing $x^{2np^s} - \lambda$\ are exactly
the elements of the set $\{ (x^n - \xi)^i(x^n + \xi)^j: \quad 0 \le i,j \le p^s \}$,
the $\lambda$-cyclic codes, of length $2np^s$, over $\F_q$\ are of the form
$\langle (x^n - \xi)^i(x^n + \xi)^j \rangle$, where $0 \le i,j \le p^s$\ are integers.

Let $C = \langle (x^n - \xi)^i(x^n + \xi)^j \rangle$.
If $(i,j) = (0,0)$, then $C = \sR$. If $(i,j) = (p^s, p^s)$, then
$C = \{ 0 \}$. For the remaining values of $(i,j)$, we consider the partition of the set
$\{ 1,2,\dots,p^s-1 \}$\ given in (\ref{Preliminaries.Partition}).

In order to simplify and improve the presentation of our results,
from Lemma \ref{Reducible.Lemma.Weight.3} till Corollary \ref{Reducible.Corollary.ips.j.tau.K},
we consider only the cases where $i \ge j$ explicitly.
We do so because the cases where $j > i$\ can be treated similarly as the
corresponding case of $i> j$.

Now we give an overview of the results in this section.
If $i = 0$, or $j= 0$, or $0 \le i,j \le p^{s-1}$, then the minimum Hamming distance of 
$C$\ can easily found to be 2 as shown in Lemma \ref{Reducible.Lemma.Weight.i.is.0.or.j.is.0} 
and Lemma \ref{Reducible.Lemma.Weight.i.and.j.small}.

If $0 < j \le p^{s-1}$\ and $p^{s-1} + 1 \le i \le p^s$, then $d_H(C)$\ is computed in
Lemma \ref{Reducible.Lemma.Weight.3}, Corollary \ref{Reducible.Corollary.Weight.3},
Lemma \ref{Reducible.Lemma.Weight.4}\ and Corollary \ref{Reducible.Corollary.Weight.4}.

If $p^{s-1} + 1 \le j \le i \le (p-1)p^{s-1}$, then $d_H(C)$\ is computed in
Lemma \ref{Reducible.Lemma.beta.beta.prime} and Corollary \ref{Reducible.Corollary.beta.beta.prime}.

If $p^{s-1} + 1 \le j \le (p-1)p^{s-1} < i \le p^s -1$, then $d_H(C)$\ is computed
in Lemma \ref{Reducible.Lemma.beta.tauK} and Corollary \ref{Reducible.Corollary.beta.tauK}.

If $(p-1)p^{s-1} + 1 \le j \le i \le p^s -1$, then $d_H(C)$\ is computed in
Lemma \ref{Reducible.Lemma.tauK.same.k}, Corollary \ref{Reducible.Corollary.tauK.same.k},
Lemma \ref{Reducible.Lemma.tauK.different.k} and Corollary \ref{Reducible.Corollary.tauK.different.k}.

Finally if $i=p^s$\ and $0 < j < p^s-1$, then $d_H(C)$\ is computed from 
Lemma \ref{Reducible.Lemma.ips.j.small}
till Corollary \ref{Reducible.Corollary.ips.j.tau.K}.

At the end of this section, we summarize our results in Theorem \ref{Reducible.Theorem.Main}.

We begin our computations with the case where $i = 0$\ or $j=0$.

\begin{lemma}\label{Reducible.Lemma.Weight.i.is.0.or.j.is.0}
	Let $0 < i,j \le p^s$\ be integers, let $C = \langle (x^n - \xi)^i \rangle $\ and
	$D = \langle (x^n + \xi)^j \rangle $. Then $d_H(C) = d_H(D) = 2$.
\end{lemma}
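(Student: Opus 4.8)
The plan is to combine the universal lower bound of Lemma~\ref{Preliminaries.Lemma.Weight.Greater.Than.Two} with an explicit codeword of Hamming weight exactly $2$ in each of $C$ and $D$. First I would check that $C$ and $D$ are proper nonzero ideals of $\sR$, so that Lemma~\ref{Preliminaries.Lemma.Weight.Greater.Than.Two} applies. They are nonzero because $(x^n-\xi)^i$ and $(x^n+\xi)^j$ are polynomials of degrees $ni$ and $nj$, both strictly less than $2np^s$, hence nonzero elements of $\sR$. They are proper because $i,j\ge 1$ forces $\gcd\bigl((x^n-\xi)^i,\,x^{2np^s}-\lambda\bigr)$ and $\gcd\bigl((x^n+\xi)^j,\,x^{2np^s}-\lambda\bigr)$ to be nonconstant (recall $x^{2np^s}-\lambda=(x^n-\xi)^{p^s}(x^n+\xi)^{p^s}$), so neither generator is a unit in $\sR$. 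Therefore $d_H(C)\ge 2$ and $d_H(D)\ge 2$.

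For the matching upper bound I would exhibit a weight-$2$ element of each ideal. Since $1\le i\le p^s$ we have $(x^n-\xi)^i\mid (x^n-\xi)^{p^s}$, so $(x^n-\xi)^{p^s}\in C$; and because $\Char\F_q=p$ and $p^s$ is a power of $p$, the freshman's dream gives $(x^n-\xi)^{p^s}=x^{np^s}-\xi^{p^s}$. As $\xi\neq 0$ and $np^s<2np^s$, this is a nonzero element of $\sR$ with exactly two nonzero coefficients, so $w_H\bigl((x^n-\xi)^{p^s}\bigr)=2$ and hence $d_H(C)\le 2$. Combined with the lower bound, $d_H(C)=2$. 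The identical argument applied to $(x^n+\xi)^{p^s}=x^{np^s}+\xi^{p^s}\in D$ yields $d_H(D)=2$.

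There is essentially no obstacle here; the only point requiring a moment's care is verifying that the binomials $x^{np^s}\pm\xi^{p^s}$ do not collapse in $\sR$, which holds precisely because their degree $np^s$ lies strictly below the modulus degree $2np^s$ and $\xi\neq 0$, so both coefficients survive and the weight is genuinely $2$.
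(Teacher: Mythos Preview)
Your proposal is correct and follows essentially the same approach as the paper: invoke Lemma~\ref{Preliminaries.Lemma.Weight.Greater.Than.Two} for the lower bound and exhibit the weight-$2$ codewords $(x^n-\xi)^{p^s}=x^{np^s}-\xi^{p^s}\in C$ and $(x^n+\xi)^{p^s}=x^{np^s}+\xi^{p^s}\in D$ for the upper bound. You simply spell out a bit more explicitly why $C,D$ are proper nonzero ideals and why the binomials retain weight $2$ in $\sR$, which the paper leaves implicit.
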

\begin{proof}
	Since 
	\be 
		(x^n - \xi)^{p^s - i}(x^n - \xi)^{i} & = & x^{np^s} - \xi^{p^s} \in C \quad \mbox{and} \nn\\
		(x^n + \xi)^{p^s - j}(x^n + \xi)^{j} & = & x^{np^s} + \xi^{p^s} \in D, \nn
	\ee
	we have $d_H(C), d_H(D) \le 2$. On the other hand, $d_H(C), d_H(D) \ge 2$\ by Lemma
	\ref{Preliminaries.Lemma.Weight.Greater.Than.Two}.
	Hence $d_H(C) = d_H(D) = 2$.
\end{proof}

\begin{lemma}\label{Reducible.Lemma.Weight.i.and.j.small}
	Let $C = \langle (x^n - \xi)^i(x^n + \xi)^j \rangle$, for some integers $0 \le i,j \le p^{s-1}$\
	with $(i,j) \neq (0,0)$. Then $d_H(C) = 2$.
\end{lemma}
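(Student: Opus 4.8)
The plan is to prove the two bounds $d_H(C)\ge 2$ and $d_H(C)\le 2$ separately, in the same spirit as Lemma~\ref{Reducible.Lemma.Weight.i.is.0.or.j.is.0}.

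First I would check that $C$ is a nonzero proper ideal of $\sR$, so that Lemma~\ref{Preliminaries.Lemma.Weight.Greater.Than.Two} applies. Since $x^{2np^s}-\lambda=(x^n-\xi)^{p^s}(x^n+\xi)^{p^s}$ and $0\le i,j\le p^{s-1}<p^s$, the generator $(x^n-\xi)^i(x^n+\xi)^j$ is a genuine divisor of $x^{2np^s}-\lambda$ of degree $n(i+j)\le 2np^{s-1}<2np^s$, hence a nonzero element of $\sR$; and it fails to be a unit precisely because $(i,j)\neq(0,0)$. Thus $\{0\}\neq C\subsetneq\sR$, and Lemma~\ref{Preliminaries.Lemma.Weight.Greater.Than.Two} gives $d_H(C)\ge 2$.

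For the reverse inequality I would exhibit an explicit weight-$2$ codeword. Because $i,j\le p^{s-1}$, the polynomial $(x^n-\xi)^{p^{s-1}-i}(x^n+\xi)^{p^{s-1}-j}$ lies in $\F_q[x]$, and multiplying the generator by it yields
\[
(x^n-\xi)^{p^{s-1}}(x^n+\xi)^{p^{s-1}}=\bigl((x^n-\xi)(x^n+\xi)\bigr)^{p^{s-1}}=(x^{2n}-\psi)^{p^{s-1}}=x^{2np^{s-1}}-\psi^{p^{s-1}},
\]
where the second equality uses (\ref{Reducible.Equation.Factorization}) and the last uses the Frobenius identity in characteristic $p$. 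This element belongs to $C$; since $2np^{s-1}<2np^s$ it represents a nonzero element of $\sR$ with no further reduction modulo $x^{2np^s}-\lambda$, and since $\psi\neq 0$ it has exactly two nonzero coefficients. Hence $d_H(C)\le 2$, and combining with the first step gives $d_H(C)=2$.

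There is essentially no serious obstacle here; the only points requiring care are bookkeeping—verifying that the exponents $p^{s-1}-i$ and $p^{s-1}-j$ are nonnegative and that $2np^{s-1}<2np^s$ so the constructed element is genuinely of weight $2$ in $\sR$. In particular the hypothesis $i\ge j$ plays no role, so this case need not be split.
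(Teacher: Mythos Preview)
Your proof is correct and follows essentially the same approach as the paper: invoke Lemma~\ref{Preliminaries.Lemma.Weight.Greater.Than.Two} for the lower bound, then multiply the generator by $(x^n-\xi)^{p^{s-1}-i}(x^n+\xi)^{p^{s-1}-j}$ to produce the weight-$2$ codeword $x^{2np^{s-1}}-\xi^{2p^{s-1}}$ (equivalently $x^{2np^{s-1}}-\psi^{p^{s-1}}$, since $\psi=\xi^2$). Your additional checks that $C$ is a nonzero proper ideal and that the degree stays below $2np^s$ are welcome care that the paper leaves implicit.
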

\begin{proof}
	By Lemma \ref{Preliminaries.Lemma.Weight.Greater.Than.Two}, we have $d_H(C) \ge 2$
	and
	\be\nn
		 (x^n - \xi)^{i} (x^n + \xi)^{j}(x^n - \xi)^{p^{s-1} - i} (x^n + \xi)^{p^{s-1} - j}
		 = x^{2np^{s-1}} - \xi^{2p^{s-1}} \in C 	
	\ee
	implies that $d_H(2) \le 2$. Hence $d_H(C) = 2$.
\end{proof}

Let $C = \langle(x^n - \xi)^i(x^n + \xi)^j \rangle$\ for some integers $0 \le i,j \le p^s$\ with $(0,0)\neq (i,j) \neq (p^s,p^s) $.
Let $0 \neq c(x) \in C$, then there exists $0 \neq f(x) \in \F_q[x]$\ such that
$c(x) \equiv f(x)(x^n - \xi)^i(x^n + \xi)^j \mod x^{2np^s} - \lambda $.
Dividing $f(x)$\ by $(x^n - \xi)^{p^{s}-i}(x^n + \xi)^{p^s -j}$, we get
\be\nn
	f(x) = q(x)(x^n - \xi)^{p^s - i}(x^n + \xi)^{p^s -j} + r(x)
\ee
where $q(x)$, $r(x) \in \F_{q}[x]$\ and, either $r(x) = 0$\ or $\deg (r(x)) < 2np^s - ni - nj$.
Since
\be
	c(x) & \equiv & f(x)(x^n - \xi)^i(x^n + \xi)^j \nn \\
	& \equiv & (q(x)(x^n - \xi)^{p^s - i}(x + \xi)^{p^s - j} + r(x) )(x^n - \xi)^i(x^n + \xi)^j \nn \\
	& \equiv & q(x)(x^n - \xi)^{p^s}(x + \xi)^{p^s} + r(x)(x^n - \xi)^i(x^n + \xi)^j \nn \\
	& \equiv & r(x)(x^n - \xi)^i(x^n + \xi)^j \mod x^{2np^s} - \lambda , \nn 
\ee
we may assume, without loss of generality, that $\deg (f(x)) < 2np^s - ni - nj$.
Moreover $w_H( r(x)(x^n - \xi)^i(x^n + \xi)^j ) = w_H(c) $\ as 
$\deg (r(x)(x^n - \xi)^i(x^n + \xi)^j) < 2np^s$.

Let $i_0$\ and $j_0$\ be the largest integers with $(x^n - \xi)^{i_0} | f(x)$\ and 
$(x^n + \xi)^{j_0} | f(x)$. Then there exists $g(x) \in  \F_q[x]$\ such that 
$f(x) = (x^n - \xi)^{i_0}(x^n + \xi)^{j_0}g(x)$\ and $(x^n - \xi) \nmid g(x)$, $(x^n + \xi) \nmid g(x)$.
Clearly $\deg (f(x)) < 2np^s - ni -nj $\ implies $i_0 + j_0 < 2p^s - i -j$.
Therefore $i_0 < p^s - i$\ or $j_0 < p^s - j$\ must hold.
 
So if $i_0 \ge p^s - i $, then $j_0 < p^s - j$. For such cases,
the following lemma will be used in our computations.

\begin{lemma}\label{Reducible.Lemma.Weight.Bound.When.i0.large}
	Let $i,j,i_0,j_0$\ be nonnegative integers such that $i \ge j$, $i_0 \ge p^s - i$\ and
	$j_0 < p^s -j$. Let $c(x) = (x^n - \xi)^{i_0 + i} (x^n + \xi)^{j_0 + j} g(x)$\ with
	$x^n - \xi \nmid g(x)$\ and $x^n + \xi \nmid g(x)$.
	Then $w_H( c(x) ) \ge 2w_H( (x^{2n} - \xi ^2)^{j_0 + j} )$.
\end{lemma}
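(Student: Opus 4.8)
The plan is to recognize that the hypotheses $i_0 \ge p^s - i$ and $j_0 < p^s - j$ force $i_0 + i \ge p^s > j_0 + j$, so that $c(x)$ contains strictly more than $j_0+j$ copies of the factor $x^n - \xi$; pairing $j_0+j$ of these with the $j_0+j$ copies of $x^n+\xi$ produces a power of $x^{2n} - \xi^2$. Concretely I would set $N = j_0 + j$ and $h(x) = (x^n - \xi)^{\,i_0 + i - N} g(x)$, which is a genuine polynomial since $i_0 + i - N \ge 1$, and rewrite $c(x) = h(x)\,(x^{2n} - \xi^2)^N$.

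Next I would invoke the weight-retaining estimate (\ref{Inequality.Lower.Bound.g(x).xn+c.N}), applied with $n$ replaced by $2n$ and $\myConstant$ replaced by $-\xi^2 \in \F_q \setminus \{0\}$ — the same type of substitution already used in Lemma \ref{Preliminaries.Lemma.weight.tau.k} — to get
\[
 w_H(c(x)) = w_H\!\big(h(x)(x^{2n} - \xi^2)^N\big) \ \ge\ w_H\!\big(h(x) \bmod (x^{2n} - \xi^2)\big)\cdot w_H\!\big((x^{2n} - \xi^2)^N\big).
\]
Since the target quantity is $2\,w_H((x^{2n} - \xi^2)^{j_0+j})$, the whole statement reduces to showing $w_H\!\big(h(x) \bmod (x^{2n} - \xi^2)\big) \ge 2$. (In the degenerate case $N = 0$ there is nothing to prove beyond noting that $c(x)$ is a nonzero polynomial divisible by $x^n - \xi$, hence not a monomial, so $w_H(c(x)) \ge 2 = 2\,w_H(1)$.)

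The main obstacle is precisely this last weight bound, which I would settle using the coprime factorization $x^{2n} - \xi^2 = (x^n - \xi)(x^n + \xi)$ into irreducibles (coprime since $2\xi \ne 0$). Put $r(x) = h(x) \bmod (x^{2n} - \xi^2)$. As $i_0 + i - N \ge 1$, the factor $x^n - \xi$ divides $h(x)$, hence divides $r(x)$. On the other hand $x^n + \xi$ divides neither $g(x)$ (by hypothesis) nor the power $(x^n - \xi)^{\,i_0+i-N}$ (coprimality), so $x^n + \xi \nmid h(x)$; being irreducible, $x^n+\xi$ then satisfies $r(x) \not\equiv 0 \bmod (x^n+\xi)$, and in particular $r(x) \ne 0$. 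Now $r(x)$ is a nonzero polynomial of degree $< 2n$ divisible by the degree-$n$ polynomial $x^n - \xi$; if it were a monomial $\alpha x^e$ with $\alpha \in \F_q\setminus\{0\}$, then $x^n - \xi \mid \alpha x^e$ in $\F_q[x]$, which forces $x^n - \xi \mid \alpha$ (because $\gcd(x^n-\xi,x)=1$, the constant term $-\xi$ being nonzero), an absurdity. Hence $w_H(r(x)) \ge 2$, and substituting this into the displayed inequality gives $w_H(c(x)) \ge 2\,w_H((x^{2n} - \xi^2)^{j_0+j})$, completing the argument. The only place any care is needed is the bookkeeping ensuring $i_0+i-N \ge 1$ and that $-\xi^2 \ne 0$, $2\xi \ne 0$, all of which are immediate from the standing assumptions of the section.
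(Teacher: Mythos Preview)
Your argument is correct and follows essentially the same route as the paper: both factor $c(x)$ as $(x^{2n}-\xi^2)^{j_0+j}$ times $h(x)=(x^n-\xi)^{i_0+i-j_0-j}g(x)$, apply the weight-retaining inequality (\ref{Inequality.Lower.Bound.g(x).xn+c.N}), and then argue that the remainder $r(x)=h(x)\bmod(x^{2n}-\xi^2)$ has Hamming weight at least $2$. The only cosmetic difference is in this last step: the paper evaluates at roots $\theta_1,\theta_2$ of $x^n\mp\xi$ in an extension to see that $r$ is nonzero with $r(\theta_1)=0$, whereas you stay inside $\F_q[x]$ and use the divisibility $x^n-\xi\mid r(x)$ together with $x^n+\xi\nmid r(x)$ to rule out $r$ being a monomial --- your version is arguably a bit more explicit about why a single-term $r$ is impossible.
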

\begin{proof}
	Since $i_0 \ge p^s - i$\ and $-j_0 \ge -p^s + j + 1$, we have $i_0 - j_0 \ge j - i + 1$\ or equivalently $i_0 - j_0 + i - j \ge 1$.
	So
	\be\nn
		c(x) = (x^{2n} - \xi^2)^{j_0 + j}(x^n - \xi)^{i_0 - j_0 + i - j}g(x).
	\ee
	Dividing $(x^n - \xi)^{i_0 - j_0 + i - j}g(x)$\ by $x^{2n}- \xi^2$, we get
	\be\label{Reducible.Lemma.Weight.Bound.When.i0.large.Division}
		(x^n - \xi)^{i_0 - j_0 + i - j}g(x) = (x^{2n} - \xi^2)q(x) + r(x)
	\ee
	for some $q(x), r(x) \in \F_{q}[x]$\ with $r(x) = 0$\ or $\deg (r(x)) < 2n$.
	Let $\theta_1$\ and $\theta_2$\ be any roots of $x^n - \xi$\ and $x^n + \xi$,
	respectively, in some extension of $\F_q$.
	Obviously $\theta_1$\ and $\theta_2$\ are roots of $(x^{2n} - \xi^2)q(x)$.
	First we observe that $r(\theta _1) = 0$\ as $\theta_1$\ is a root of LHS of
	(\ref{Reducible.Lemma.Weight.Bound.When.i0.large.Division}).
	Second we observe that $r(\theta_2) \neq 0$\ as $\theta_2$\ is not a root of
	LHS of (\ref{Reducible.Lemma.Weight.Bound.When.i0.large.Division}).
	So it follows that $r(x)$\ is a nonzero and nonconstant polynomial implying
	$w_H(r(x)) \ge 2$. Therefore
	\be\label{Reducible.Lemma.Weight.Bound.When.i0.large.Bound}
		w_H( (x^n - \xi)^{i_0 - j_0 + i - j}g(x)\mod x^{2n} - \xi^2 )
		= w_H( r(x) ) \ge 2.
	\ee
	Using (\ref{Inequality.Lower.Bound.g(x).xn+c.N}) and
	(\ref{Reducible.Lemma.Weight.Bound.When.i0.large.Bound}), we obtain
	\be\nn
		w_H( c(x) ) & = & w_H( (x^{2n} - \xi^2)^{j_0 + j}(x^n - \xi)^{i_0 - j_0 + i - j}g(x) )\\
			& \ge & w_H( (x^{2n} - \xi^2)^{j_0 + j} )  
				w_H( (x^n - \xi)^{i_0 - j_0 + i - j}g(x)\mod x^{2n} - \xi^2 )\nn \\
			& \ge & 2w_H( (x^{2n} - \xi^2)^{j_0 + j} ).\nn	
	\ee
\end{proof}

Now we have the machinery to obtain the minimum Hamming distance of $C$\
for the ranges $p^{s-1} < i \le p^s$\ and $0 < j\le p^s$.

In what follows, for a particular range of $i$\ and $j$, we first give a lower bound on $d_H(C)$\
in the related lemma. Then in the next corollary, we determine $d_H(C)$\ by showing the
existence of a codeword that achieves the previously found lower bound.

We compute $d_H(C)$\ when $0 < j \le p^{s-1} < i \le 2p^{s-1}$\ in the following lemma and corollary.

\begin{lemma}\label{Reducible.Lemma.Weight.3}
	Let $C = \langle (x^n - \xi)^{p^{s-1} + 1}(x^n + \xi) \rangle$.
	Then $d_H(C) \ge 3$.
\end{lemma}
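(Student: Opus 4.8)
The plan is to use the reduction recorded just before this statement. Let $C=\langle (x^n-\xi)^{p^{s-1}+1}(x^n+\xi)\rangle$ and let $0\neq c(x)\in C$. Then we may write $c(x)=(x^n-\xi)^{i_0+i}(x^n+\xi)^{j_0+j}g(x)$ as a polynomial of degree $<2np^s$ whose number of nonzero coefficients is $w_H(c)$, where $i=p^{s-1}+1$, $j=1$, neither $x^n-\xi$ nor $x^n+\xi$ divides $g(x)$, and $i_0+j_0<2p^s-i-j$. The last inequality forces $i_0<p^s-i$ or $j_0<p^s-j$, and I would split the proof into the two cases $i_0\ge p^s-i$ and $i_0<p^s-i$. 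Since $d_H(C)\ge 2$ by Lemma \ref{Preliminaries.Lemma.Weight.Greater.Than.Two}, it suffices to show $w_H(c)\ge 3$ in each case.

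In the case $i_0\ge p^s-i$, integrality of $i_0+j_0<2p^s-i-j$ gives $j_0\le 2p^s-i-j-1-i_0\le p^s-j-1<p^s-j$, so Lemma \ref{Reducible.Lemma.Weight.Bound.When.i0.large} applies (its hypothesis $i\ge j$ holds since $p^{s-1}+1\ge 1$), yielding $w_H(c)\ge 2\,w_H((x^{2n}-\xi^2)^{j_0+1})$; since $w_H((x^{2n}-\xi^2)^N)\ge 2$ for every integer $N\ge 1$ by (\ref{Equality.weight.xn+c.N}), we get $w_H(c)\ge 4$. In the case $i_0<p^s-i$ we have $i_0+i<p^s$ and $i_0<p^s-i=p^s-p^{s-1}-1$. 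Viewing $c(x)=\big((x^n+\xi)^{j_0+1}g(x)\big)(x^n-\xi)^{i_0+i}$ and applying the weight-retaining bound (\ref{Inequality.Lower.Bound.g(x).xn+c.N}) with $-\xi$ in the role of $\myConstant$, we obtain $w_H(c)\ge w_H\!\big((x^n+\xi)^{j_0+1}g(x)\bmod(x^n-\xi)\big)\cdot w_H\!\big((x^n-\xi)^{i_0+i}\big)$. The first factor is $\ge 1$ because the irreducible $x^n-\xi$ divides neither $g(x)$ (by maximality of $i_0$) nor the distinct irreducible $x^n+\xi$, hence not their product $(x^n+\xi)^{j_0+1}g(x)$; and for the second factor Lemma \ref{Preliminaries.Lemma.weight.beta} with $\beta=1$ and $m=i_0$ — valid since $1\le p-2$ ($p$ is odd) and $i_0<p^s-p^{s-1}-1$ — gives $w_H((x^n-\xi)^{i_0+p^{s-1}+1})\ge 3$. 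Hence $w_H(c)\ge 3$.

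The single boundary point is $i_0=0$ in the second case, where Lemma \ref{Preliminaries.Lemma.weight.beta} is stated only for positive $m$; there one instead reads $w_H((x^n-\xi)^{p^{s-1}+1})$ straight off (\ref{Equality.weight.xn+c.N}), getting $4$ when $s\ge 2$ (the base-$p$ expansion of $p^{s-1}+1$ has its only nonzero digits at positions $0$ and $s-1$, both equal to $1$) and $3$ when $s=1$. Putting the two cases together shows $w_H(c)\ge 3$ for every nonzero $c(x)\in C$, i.e.\ $d_H(C)\ge 3$. I do not expect any genuine obstacle beyond bookkeeping: the delicate point is merely to check that the dichotomy really places us inside the hypotheses of Lemma \ref{Reducible.Lemma.Weight.Bound.When.i0.large} ($i_0\ge p^s-i$ \emph{and} $j_0<p^s-j$ simultaneously), and to peel off the harmless $i_0=0$ case by hand.
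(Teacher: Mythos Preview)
Your argument is correct and follows essentially the same route as the paper: the same $i_0$-dichotomy, the same appeal to Lemma~\ref{Reducible.Lemma.Weight.Bound.When.i0.large} when $i_0\ge p^s-i$, and the same use of Lemma~\ref{Preliminaries.Lemma.weight.beta} together with the weight-retaining bound (\ref{Inequality.Lower.Bound.g(x).xn+c.N}) when $i_0<p^s-i$. Your explicit handling of the boundary $i_0=0$ (where Lemma~\ref{Preliminaries.Lemma.weight.beta} is only stated for positive $m$) is in fact more careful than the paper, which tacitly applies that lemma at $m=0$.
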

\begin{proof}
	Pick $0 \neq c(x) \in C$\ where $c(x) \equiv f(x)(x^n - \xi)^{p^{s-1} + 1}(x^n + \xi) \mod x^{2np^s} - \lambda$\
	for some $0 \neq f(x) \in \F_q[x]$\ with $\deg (f(x)) < 2np^s - np^{s-1} - 2n$.
	Let $i_0$\ and $j_0$\ be the largest integers with $(x^n - \xi)^{i_0} | f(x)$\ and
	$ (x^n + \xi)^{j_0} | f(x)$. Then $f(x)$\ is of the form 
	$f(x) = (x^n - \xi)^{i_0}(x^n + \xi)^{j_0}g(x)$\ for some $g(x) \in \F_q[x]$\ with 
	$ x^n - \xi \nmid g(x)$\ and $ x^n + \xi \nmid g(x)$. 
	Note that $i_0 < p^s - p^{s-1} -1$\ or $j_0 < p^s - 1$\ holds.
	
	If $i_0 < p^s -p^{s-1} -1$, then, by Lemma \ref{Preliminaries.Lemma.weight.beta},
	\be\label{Reducible.Lemma.Weight.3.Case.1.xn.xi.gt3}
		w_H( (x^n - \xi)^{i_0 + p^{s-1} + 1} ) \ge 3.
	\ee
	Moreover the inequality
	\be\label{Reducible.Lemma.Weight.3.Case1.mod.Ineq}
		w_H( g(x)(x^n + \xi)^{j_0 + 1}\mod x^n - \xi ) > 0
	\ee
	holds since $x^n - \xi \nmid g(x)$.
	Now using (\ref{Inequality.Lower.Bound.g(x).xn+c.N}), (\ref{Reducible.Lemma.Weight.3.Case.1.xn.xi.gt3}) and
	(\ref{Reducible.Lemma.Weight.3.Case1.mod.Ineq}), we obtain
	
	\be\label{Reducible.Lemma.Weight.3.Case.1}
		\begin{array}{rcl}
			w_H(c(x)) & = & w_H( f(x)(x^n - \xi)^{p^{s-1}+1}(x^n + \xi) )\\
				& = & w_H( (x^n - \xi)^{i_0 + p^{s-1}  + 1} (x^n + \xi)^{j_0 + 1}g(x))\\
				& \ge & w_H( (x^n - \xi)^{i_0 + p^{s-1} + 1} )
					w_H( (x^n + \xi)^{j_0 + 1}g(x)\mod x^n - \xi )\\
				& \ge& 3.
		\end{array}
	\ee
	
	If $i_0 \ge p^s - p^{s-1} -1$, then $j_0 < p^s -1$.
	Clearly $w_H( (x^{2n} - \xi ^2)^{j_0 + j} ) \ge 2$.
	So, by Lemma \ref{Reducible.Lemma.Weight.Bound.When.i0.large}, we have
	\be\label{Reducible.Lemma.Weight.3.Case.2}
		w_H( c(x) ) \ge 2 w_H( (x^{2n} - \xi ^2)^{j_0 + j} ) \ge 4.
	\ee
	
	Now combining (\ref{Reducible.Lemma.Weight.3.Case.1}) and
	(\ref{Reducible.Lemma.Weight.3.Case.2}), we obtain
	$w_H(c(x)) \ge 3$, and hence $d_H(C) \ge 3$.
\end{proof}
\begin{corollary}\label{Reducible.Corollary.Weight.3}
	Let $i,j$\ be integers with $2p^{s-1} \ge i > p^{s-1} \ge j > 0$\ and
	let $C = \langle (x^n - \xi)^{i}(x^n + \xi)^{j} \rangle$. Then $d_H(C) = 3$.
\end{corollary}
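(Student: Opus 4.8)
The plan is the two–sided argument already used for Corollaries \ref{Irreducible.Corollary.Weight.Exactly.beta.plus.two} and \ref{Irreducible.Corollary.Weight.Greater.Than.tau.k}: deduce $d_H(C)\ge 3$ from Lemma \ref{Reducible.Lemma.Weight.3} by an ideal–containment argument, and exhibit an explicit weight-$3$ codeword in $C$ to get $d_H(C)\le 3$.

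For the lower bound, note that $i>p^{s-1}$ forces $i\ge p^{s-1}+1$ and $j>0$ forces $j\ge 1$, so the generator $(x^n-\xi)^i(x^n+\xi)^j$ is a multiple in $\F_q[x]$ of $(x^n-\xi)^{p^{s-1}+1}(x^n+\xi)$; hence $C\subseteq\langle(x^n-\xi)^{p^{s-1}+1}(x^n+\xi)\rangle$. Since a nonzero linear subcode has minimum distance no smaller than that of the ambient code, Lemma \ref{Reducible.Lemma.Weight.3} gives $d_H(C)\ge 3$.

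For the upper bound I would take $c(x)=(x^n-\xi)^{2p^{s-1}}(x^n+\xi)^{2p^{s-1}}=(x^{2n}-\xi^2)^{2p^{s-1}}$. It lies in $C$ because $i\le 2p^{s-1}$ and $j\le p^{s-1}\le 2p^{s-1}$, so the generator divides $c(x)$. Since $p$ is odd, $\deg c(x)=4np^{s-1}<2np^{s}$, so $c(x)$ already has degree below $2np^{s}$, represents a nonzero element of $\sR$, and its Hamming weight in $\sR$ equals the number of nonzero coefficients of the polynomial $c(x)$. Identifying $x^{2n}$ with the variable in (\ref{Equality.weight.xn+c.N}) and using that the $p$-adic expansion of $2p^{s-1}$ has the single nonzero digit $2$ at position $s-1$, we obtain $w_H(c(x))=2+1=3$. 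Combining $d_H(C)\le 3$ with the lower bound yields $d_H(C)=3$.

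The only step that requires care is the choice of $c(x)$: the naive candidate $(x^n-\xi)^{2p^{s-1}}(x^n+\xi)^{p^{s-1}}$ has weight $4$, so one must instead pass to the symmetric power $(x^{2n}-\xi^2)^{2p^{s-1}}$ — which is still available in $C$ precisely because $j\le p^{s-1}\le 2p^{s-1}$ — and then check, using that $p$ is odd, both that the leading coefficient $-2$ does not vanish and that $\deg c(x)<2np^{s}$, so that reduction modulo $x^{2np^{s}}-\lambda$ cannot lower its weight.
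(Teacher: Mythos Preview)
Your proof is correct and follows essentially the same approach as the paper: the lower bound comes from the containment $C\subset\langle (x^n-\xi)^{p^{s-1}+1}(x^n+\xi)\rangle$ together with Lemma~\ref{Reducible.Lemma.Weight.3}, and the upper bound is obtained from the explicit codeword $(x^{2n}-\xi^2)^{2p^{s-1}}=x^{4np^{s-1}}-2\xi^{2p^{s-1}}x^{2np^{s-1}}+\xi^{4p^{s-1}}$, which has weight~$3$. Your additional remarks checking that $p$ odd guarantees both $-2\ne 0$ and $\deg c(x)=4np^{s-1}<2np^s$ are good clarifications that the paper leaves implicit.
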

\begin{proof}
	Since $C \subset \langle (x^n - \xi)^{p^{s-1} + 1}(x^n + \xi) \rangle$, we know,
	by Lemma \ref{Reducible.Lemma.Weight.3}, that $d_H(C) \ge 3$.
	For $(x^n - \xi)^{p^{s-1}}(x^n + \xi)^{p^{s-1}} \in C$, we have
	\be\nn 
		(x^n - \xi)^{2p^{s-1}}(x^n + \xi)^{2p^{s-1}} = (x^{2n} - \xi^2)^{2p^{s-1}}
	 	= x^{4np^{s-1}} - 2\xi^{2p^{s-1}} x^{2np^{s-1}} + \xi^{4p^{s-1}} .
	\ee
	So $d_H(C) \le 3$\ and hence $d_H(C) = 3$.
\end{proof}

For $2p^{s-1} < i < p^s$\ and $0 < j \le p^{s-1}$, $d_H(C)$\ is computed in the following lemma
and corollary.
\begin{lemma}\label{Reducible.Lemma.Weight.4}
	Let $C = \langle (x^n - \xi)^{2p^{s-1} + 1}(x^n + \xi) \rangle $.
	Then $d_H(C) \ge 4$.
\end{lemma}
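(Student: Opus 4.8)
The plan is to run the proof of Lemma \ref{Reducible.Lemma.Weight.3} essentially verbatim, with the role played there by $\beta = 1$ taken over by $\beta = 2$ (this is consistent with the partition (\ref{Preliminaries.Partition}): $2p^{s-1}+1$ lies in a middle block exactly when $\beta = 2 \le p-2$). First I would pick $0 \neq c(x) \in C$ and, using the reduction recorded just before Lemma \ref{Reducible.Lemma.Weight.3}, write $c(x) \equiv f(x)(x^n-\xi)^{2p^{s-1}+1}(x^n+\xi) \bmod x^{2np^s}-\lambda$ for some $0 \neq f(x) \in \F_q[x]$ with $\deg f(x) < 2np^s - 2np^{s-1} - 2n$. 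Extracting the maximal powers of the two coprime irreducible factors, I would write $f(x) = (x^n-\xi)^{i_0}(x^n+\xi)^{j_0}g(x)$ with $x^n - \xi \nmid g(x)$ and $x^n + \xi \nmid g(x)$. The degree bound gives $i_0 + j_0 < 2p^s - 2p^{s-1} - 2$, so either $i_0 < p^s - 2p^{s-1} - 1$ or $j_0 < p^s - 1$; these thresholds are precisely $p^s - i$ and $p^s - j$ for $i = 2p^{s-1}+1$ and $j = 1$.

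In the first case I would apply Lemma \ref{Preliminaries.Lemma.weight.beta} with $\beta = 2$, $m = i_0$ (and the constant $-\xi$) to get $w_H((x^n-\xi)^{i_0+2p^{s-1}+1}) \ge 4$; since $x^n - \xi$ is coprime to both $g(x)$ and $x^n+\xi$, we have $w_H(g(x)(x^n+\xi)^{j_0+1} \bmod x^n-\xi) > 0$, so from $w_H(c(x)) = w_H((x^n-\xi)^{i_0+2p^{s-1}+1}(x^n+\xi)^{j_0+1}g(x))$ and the weight-retaining inequality (\ref{Inequality.Lower.Bound.g(x).xn+c.N}) we get
$$w_H(c(x)) \ge w_H\left((x^n-\xi)^{i_0+2p^{s-1}+1}\right) w_H\left(g(x)(x^n+\xi)^{j_0+1}\bmod x^n-\xi\right) \ge 4.$$
In the second case we have $i_0 \ge p^s - 2p^{s-1} - 1 = p^s - i$, $j_0 < p^s - 1 = p^s - j$ and $i \ge j$, so Lemma \ref{Reducible.Lemma.Weight.Bound.When.i0.large} applies and yields $w_H(c(x)) \ge 2\,w_H((x^{2n}-\xi^2)^{j_0+1}) \ge 4$, the last step because $(x^{2n}-\xi^2)^{j_0+1}$ is a nonzero nonconstant polynomial and hence has at least two nonzero coefficients. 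Combining the two cases shows $w_H(c(x)) \ge 4$ for every $0 \neq c(x) \in C$, whence $d_H(C) \ge 4$.

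I do not expect a genuine obstacle: all the substantive work is already packaged in Lemmas \ref{Preliminaries.Lemma.weight.beta} and \ref{Reducible.Lemma.Weight.Bound.When.i0.large} and in the bound (\ref{Inequality.Lower.Bound.g(x).xn+c.N}). The only point requiring care is the bookkeeping in the degree split: one must verify that the two inequalities it produces match exactly the hypotheses of those lemmas — $m < p^s - \beta p^{s-1} - 1$ with $\beta = 2$ in the first case, and $i_0 \ge p^s - i$, $j_0 < p^s - j$ in the second — which they do for the specific exponents $i = 2p^{s-1}+1$, $j = 1$. A companion corollary, along the lines of Corollary \ref{Reducible.Corollary.Weight.3}, would then upgrade this to equality by exhibiting an explicit weight-$4$ codeword of the form $(x^{2n}-\xi^2)^{3p^{s-1}}$, whose weight is $4$ by (\ref{Equality.weight.xn+c.N}).
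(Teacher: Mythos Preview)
Your proof is correct and follows the paper's argument essentially line for line: the same reduction, the same case split on $i_0$ versus $p^s - 2p^{s-1} - 1$, and the same appeals to Lemma~\ref{Preliminaries.Lemma.weight.beta} (with $\beta = 2$), inequality~(\ref{Inequality.Lower.Bound.g(x).xn+c.N}), and Lemma~\ref{Reducible.Lemma.Weight.Bound.When.i0.large}. One small caveat on your closing remark: for the companion corollary the paper uses the codeword $(x^n-\xi)^{p^s}(x^n+\xi)^{p^{s-1}}$ rather than $(x^{2n}-\xi^2)^{3p^{s-1}}$, since the latter need not lie in $C$ for all $i$ in the range $2p^{s-1} < i < p^s$ (and degenerates to $0$ when $p=3$).
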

\begin{proof}
	Pick $0 \neq c(x) \in C$\ where $c(x) \equiv f(x)(x^n - \xi)^{p^{s-1} + 1}(x^n + \xi) \mod x^{2np^s} - \lambda$\
	for some $0 \neq f(x) \in \F_q[x]$\ with $\deg (f(x)) < 2np^s - np^{s-1} - 2n$.
	Let $i_0$\ and $j_0$\ be the largest integers with $(x^n - \xi)^{i_0} | f(x)$\ and 
	$(x^n + \xi)^{j_0} | f(x)$. Then $f(x)$\ is of the form
	$f(x) = (x^n - \xi)^{i_0}(x^n + \xi)^{j_0}g(x)$\ for some $g(x) \in \F_q[x]$\ with $x^n - \xi \nmid g(x)$\
	and $x^n + \xi \nmid g(x)$. Note that $i_0 < p^s - 2p^{s-1} -1$\ or $j_0 < p^s - 1$\ holds since
	$\deg (f(x)) < 2np^s - 2np^{s-1} - 2n$.
	
	If $i_0 < p^s - 2p^{s-1} - 1$, then, by Lemma \ref{Preliminaries.Lemma.weight.beta}, we have
	\be\label{Reducible.Lemma.Weight.4.Ineq1}
		w_H( (x^n - \xi)^{i_0 + 2p^{s-1} + 1} ) \ge 4.
	\ee
	Since $x^n - \xi \nmid g(x)$,
	\be\label{Reducible.Lemma.Weight.4.Ineq2}
		w_H( g(x)(x^n + \xi)^{j_0 + 1} \mod x^n -\xi ) > 0
	\ee
	holds. Now using (\ref{Reducible.Lemma.Weight.4.Ineq1}), (\ref{Reducible.Lemma.Weight.4.Ineq2}) and
	(\ref{Inequality.Lower.Bound.g(x).xn+c.N}), we obtain
	\be
		w_H( c(x) ) & = & w_H( f(x)(x^n - \xi)^{2p^{s-1}+1}(x^n + \xi) )\nn\\
			& = & w_H( (x^n - \xi)^{i_0 + 2p^{s-1} + 1}(x^n + \xi)^{j_0 + 1}g(x) )\nn\\
			& \ge & w_H( (x^n + \xi)^{j_0 + 1}g(x) \mod x^n - \xi )w_H( (x^n - \xi)^{i_0 + 2p^{s-1} +1} ) \nn\\
			& \ge & 4.\nn
	\ee
	
	If $i_0 \ge p^s - 2p^{s-1} -1$, then $j_0 < p^s -1$.
	Clearly $w_H( (x^{2n}- \xi^2)^{j_0 + 1} )\ge 2$.
	So, by Lemma \ref{Reducible.Lemma.Weight.Bound.When.i0.large}, we have 
	\be\nn
		w_H( c(x) ) \ge 2w_H( (x^{2n} - \xi^2)^{j_0 + 1} )\ge 4.
	\ee
	Hence $d_H(C) \ge 4$.
\end{proof}

\begin{corollary}\label{Reducible.Corollary.Weight.4}
	Let $2p^{s-1} < i < p^s$\ and $0 < j \le p^{s-1}$\ be integers, and let
	$C = \langle (x^n - \xi)^i(x^n + \xi)^j \rangle$. Then $d_H(C) = 4$.
\end{corollary}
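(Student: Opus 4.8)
The plan is to prove $d_H(C) = 4$ by the same sandwich strategy used in Corollary \ref{Reducible.Corollary.Weight.3}: first obtain the lower bound $d_H(C) \ge 4$ from an ideal containment together with Lemma \ref{Reducible.Lemma.Weight.4}, and then exhibit one explicit codeword of Hamming weight exactly $4$ to conclude $d_H(C) \le 4$.

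For the lower bound I would argue as follows. Since $i > 2p^{s-1}$ forces $i \ge 2p^{s-1}+1$ and $j > 0$ forces $j \ge 1$, the generator $(x^n-\xi)^i(x^n+\xi)^j$ is divisible in $\F_q[x]$ by $(x^n-\xi)^{2p^{s-1}+1}(x^n+\xi)$, so $C \subseteq \langle (x^n-\xi)^{2p^{s-1}+1}(x^n+\xi) \rangle$. Lemma \ref{Reducible.Lemma.Weight.4} then gives $d_H(C) \ge 4$ immediately.

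For the upper bound the witness I would use is $c(x) = (x^n-\xi)^{p^s}(x^n+\xi)^{p^{s-1}}$. It lies in $C$ because $i \le p^s$ and $j \le p^{s-1}$ yield the factorization $c(x) = (x^n-\xi)^{p^s-i}(x^n+\xi)^{p^{s-1}-j}\cdot (x^n-\xi)^i(x^n+\xi)^j$; moreover $\deg c(x) = np^s + np^{s-1} < 2np^s$, so $c(x)$ is already a nonzero reduced representative and its weight as a codeword equals $w_H(c(x))$ as a polynomial. Applying the last remark of Section \ref{Section.Preliminaries} with $\myConstant_1 = -\xi$, $\myConstant_2 = \xi$ and exponent $p^{s-1}$ (legitimate since $0 < p^{s-1} < p^s$), followed by (\ref{Equality.weight.xn+c.N}) for the $p$-adic expansion $p^{s-1} = 1\cdot p^{s-1}$, I get $w_H(c(x)) = 2\,w_H\!\left((x^n+\xi)^{p^{s-1}}\right) = 2\cdot 2 = 4$. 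Hence $d_H(C) \le 4$, and together with the lower bound this proves $d_H(C) = 4$.

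I do not expect a serious obstacle here; the only subtlety is the choice of the witness codeword. The naive analogue of the codeword used in Corollary \ref{Reducible.Corollary.Weight.3}, namely $(x^{2n}-\xi^2)^{3p^{s-1}}$, does not work uniformly: its weight equals $4$ only when $3 \le p-1$, and even then it belongs to $C$ only when $i \le 3p^{s-1}$, which may fail for $i$ in the range $2p^{s-1} < i < p^s$. The mixed power $(x^n-\xi)^{p^s}(x^n+\xi)^{p^{s-1}}$ sidesteps both difficulties simultaneously, which is why I would organize the argument around it.
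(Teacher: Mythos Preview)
Your argument is correct and follows exactly the paper's sandwich strategy: the lower bound via $C \subseteq \langle (x^n-\xi)^{2p^{s-1}+1}(x^n+\xi)\rangle$ together with Lemma~\ref{Reducible.Lemma.Weight.4}, and the upper bound via an explicit weight-$4$ codeword.

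The one difference is the choice of witness. The paper takes $(x^n-\xi)^{p^s}(x^n+\xi)$, but as literally written this element lies in $C$ only when $j=1$, since membership of a reduced polynomial in $\langle (x^n-\xi)^i(x^n+\xi)^j\rangle$ requires divisibility by $(x^n+\xi)^j$ in $\F_q[x]$. Your witness $(x^n-\xi)^{p^s}(x^n+\xi)^{p^{s-1}}$ belongs to $C$ for every $j$ in the stated range $0<j\le p^{s-1}$ and still has weight $2\cdot w_H\big((x^n+\xi)^{p^{s-1}}\big)=4$, so your version is actually the cleaner one.
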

\begin{proof}
	Since $C \subset \langle (x^n - \xi)^{2p^{s-1} + 1}(x^n + \xi) \rangle $,
	we know, by Lemma \ref{Reducible.Lemma.Weight.4}, that $d_H(C) \ge 4$.
	For $(x^n - \xi)^{p^s}(x^n + \xi)\in C$, we have
	$w_H( (x^n - \xi)^{p^s}(x^n + \xi) ) = 4$. Thus $d_H(C) \le 4$\ and hence $d_H(C) = 4$.
\end{proof}

Next we consider the cases where $p^{s-1} < j \le i \le p^s$. We begin with computing
$d_H(C)$\ when $p^{s-1} < j \le i \le (p-1)p^{s-1}$\ in the following lemma and corollary.

\begin{lemma}\label{Reducible.Lemma.beta.beta.prime}
	Let $1 \le \beta^{'} \le \beta \le p-2$\ be integers and
	$C = \langle (x^n - \xi)^{\beta p^{s-1} + 1}(x^n + \xi)^{\beta^{'}p^{s-1} + 1}  \rangle$.
	Then $d_H(C) \ge \min \{ \beta + 2, 2(\beta^{'} + 2) \}$.
\end{lemma}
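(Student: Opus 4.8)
The plan is to imitate the structure of Lemma \ref{Reducible.Lemma.Weight.3} and Lemma \ref{Reducible.Lemma.Weight.4}, splitting into two cases according to which of the two exponents $i_0$, $j_0$ is ``small''. First I would take an arbitrary $0 \neq c(x) \in C$ and, using the reduction performed in the paragraph preceding Lemma \ref{Reducible.Lemma.Weight.Bound.When.i0.large}, write $c(x) = (x^n-\xi)^{i_0+\beta p^{s-1}+1}(x^n+\xi)^{j_0+\beta' p^{s-1}+1}g(x)$ with $x^n-\xi \nmid g(x)$, $x^n+\xi\nmid g(x)$, and with the degree bound $\deg f(x) < 2np^s - n\beta p^{s-1} - n\beta' p^{s-1} - 2n$ forcing $i_0 + j_0 < 2p^s - \beta p^{s-1} - \beta' p^{s-1} - 2$. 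In particular at least one of $i_0 < p^s - \beta p^{s-1} - 1$ or $j_0 < p^s - \beta' p^{s-1} - 1$ holds; but since $\beta' \le \beta$, the failure of the first inequality ($i_0 \ge p^s-\beta p^{s-1}-1 \ge p^s - (p-1)p^{s-1} - 1$, combined with $i\ge j$) is exactly the hypothesis needed to invoke Lemma \ref{Reducible.Lemma.Weight.Bound.When.i0.large}.

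The two cases then run as follows. \textbf{Case 1: $i_0 < p^s - \beta p^{s-1} - 1$.} By Lemma \ref{Preliminaries.Lemma.weight.beta} applied with $m = i_0$, we get $w_H((x^n-\xi)^{i_0+\beta p^{s-1}+1}) \ge \beta+2$. Setting $h(x) = (x^n+\xi)^{j_0+\beta' p^{s-1}+1}g(x)$, the maximality of $i_0$ gives $x^n-\xi \nmid g(x)$ and hence $h(x) \mod (x^n-\xi) \neq 0$, so $w_H(h(x)\bmod x^n-\xi) > 0$; applying the weight-retaining bound (\ref{Inequality.Lower.Bound.g(x).xn+c.N}) with $(x^n+\myConstant)^N = (x^n-\xi)^{i_0+\beta p^{s-1}+1}$ and the factor $h(x)$ yields $w_H(c(x)) \ge w_H(h(x)\bmod x^n-\xi)\cdot w_H((x^n-\xi)^{i_0+\beta p^{s-1}+1}) \ge \beta+2$. \textbf{Case 2: $i_0 \ge p^s - \beta p^{s-1} - 1$,} hence $j_0 < p^s - \beta' p^{s-1} - 1 \le p^s - 1$. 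Here I apply Lemma \ref{Reducible.Lemma.Weight.Bound.When.i0.large} with $(i,j) = (\beta p^{s-1}+1,\ \beta' p^{s-1}+1)$ to get $w_H(c(x)) \ge 2 w_H((x^{2n}-\xi^2)^{j_0 + \beta' p^{s-1}+1})$, and then bound $w_H((x^{2n}-\xi^2)^{j_0+\beta' p^{s-1}+1})$ from below: since $j_0 < p^s - \beta' p^{s-1} - 1$, Lemma \ref{Preliminaries.Lemma.weight.beta} (with $x^{2n}$ in place of $x^n$, which is legitimate since (\ref{Equality.weight.xn+c.N}) holds for $x^n$ with any $n$) gives $w_H((x^{2n}-\xi^2)^{j_0+\beta' p^{s-1}+1}) \ge \beta'+2$, whence $w_H(c(x)) \ge 2(\beta'+2)$. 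Combining the two cases gives $w_H(c(x)) \ge \min\{\beta+2,\ 2(\beta'+2)\}$ for every nonzero $c(x)$, i.e. $d_H(C) \ge \min\{\beta+2,\ 2(\beta'+2)\}$.

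The main technical point to get right is the bookkeeping in Case 2: one must check that the hypotheses $i \ge j$, $i_0 \ge p^s - i$, $j_0 < p^s - j$ of Lemma \ref{Reducible.Lemma.Weight.Bound.When.i0.large} are genuinely satisfied here, which uses $\beta' \le \beta$ (so $j = \beta' p^{s-1}+1 \le \beta p^{s-1}+1 = i$) together with the failure of the Case 1 inequality. A secondary subtlety is confirming that Lemma \ref{Preliminaries.Lemma.weight.beta} may be applied with argument $x^{2n}$ rather than $x^n$; this is harmless because its proof only uses (\ref{Equality.weight.xn+c.N}), which is stated for arbitrary $n$. Everything else is a routine combination of (\ref{Inequality.Lower.Bound.g(x).xn+c.N}), Lemma \ref{Preliminaries.Lemma.weight.beta}, and Lemma \ref{Reducible.Lemma.Weight.Bound.When.i0.large}, exactly paralleling the earlier lemmas in this section.
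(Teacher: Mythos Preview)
Your proof is correct and follows essentially the same approach as the paper: both split according to whether $i_0 < p^s-\beta p^{s-1}-1$ (then apply Lemma~\ref{Preliminaries.Lemma.weight.beta} and (\ref{Inequality.Lower.Bound.g(x).xn+c.N})) or $i_0 \ge p^s-\beta p^{s-1}-1$ (then apply Lemma~\ref{Reducible.Lemma.Weight.Bound.When.i0.large} followed by Lemma~\ref{Preliminaries.Lemma.weight.beta} for $x^{2n}-\xi^2$). The only difference is that the paper treats $\beta=\beta'$ separately, rewriting the generator as $(x^{2n}-\xi^2)^{\beta p^{s-1}+1}$ and arguing directly with a single $m$; your unified argument covers this case as well (Lemma~\ref{Reducible.Lemma.Weight.Bound.When.i0.large} only needs $i\ge j$, and its proof still yields $i_0-j_0+i-j\ge 1$ when $i=j$), so the paper's case split is in fact unnecessary.
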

\begin{proof}
	Let $0 \neq c(x) \in C$. Then there exists $0 \neq f(x) \in \F_q[x]$\ such that
	$c(x) \equiv (x^n - \xi)^{\beta p^{s-1} + 1}(x^n + \xi)^{\beta^{'} p^{s-1} + 1} \mod x^{2np^s} - \lambda$.
	We may assume that $\deg (f(x)) < 2n^s - n\beta p^{s-1} - n\beta^{'}p^{s-1} -2n$.
	We consider the cases $\beta = \beta^{'}$\ and $\beta < \beta^{'}$\ separately.
	
	First, we assume that $\beta = \beta^{'}$.
	Then $C = \langle (x^n - \xi)^{\beta p^{s-1} + 1}(x^n + \xi)^{\beta^{'}p^{s-1} + 1} \rangle = 
		\langle (x^{2n} - \xi^2)^{\beta p^{s-1} + 1}\rangle$.
	Let $m$\ be the largest nonnegative integer with $(x^{2n} - \xi^2)^m | f(x)$.
	We have $m < p^s - \beta p^{s-1} - 1$\ as $\deg (f(x)) < 2np^s - 2n\beta p^{s-1}- 2n$.
	So, by Lemma \ref{Preliminaries.Lemma.weight.beta}, we get
	\be\label{Reducible.Lemma.beta.beta.prime.Weight.beta.beta.prime.equal}
		w_H( (x^{2n} - \xi^2)^{\beta p^{s-1} + 1 + m} ) \ge \beta + 2.
	\ee	
	Clearly $f(x)$\ is of the form $f(x) = (x^{2n} - \xi^2)^m g(x)$\ for some $g(x) \in \F_q[x]$\ 
	where $x^{2n} - \xi^2 \nmid g(x)$. So $ g(x) \mod x^{2n} - \xi^2 \neq 0$\ and therefore
	\be\label{Reducible.Lemma.beta.beta.prime.Weight.Modulo.beta.beta.prime.equal}
		w_H (g(x) \mod x^{2n} - \xi^2) > 0.	
	\ee
	So if $\beta = \beta^{'}$, then using
	(\ref{Reducible.Lemma.beta.beta.prime.Weight.beta.beta.prime.equal}),
	(\ref{Reducible.Lemma.beta.beta.prime.Weight.Modulo.beta.beta.prime.equal}) and
	(\ref{Inequality.Lower.Bound.g(x).xn+c.N}), we get
	\be\nn\label{Reducible.Lemma.beta.beta.prime.Case.beta.beta.prime.equal}
		w_H( c(x) ) & = & w_H( (x^{2n} - \xi^2)^{m + \beta p^{s-1} + 1} g(x) )\\
			& \ge &  w_H (g(x) \mod x^{2n} - \xi^2) w_H( (x^{2n} - \xi^2)^{m + \beta p^{s-1} + 1} )\nn\\
			& \ge & \beta + 2 \nn.
	\ee

	Second, we assume that $\beta^{'} < \beta$.
	For $c(x) \equiv f(x)(x^n - \xi)^{\beta p^{s-1} + 1}(x^n + \xi)^{\beta^{'}p^{s-1} + 1} \mod x^{2np^s} - \lambda$,
	let $i_0$\ and $j_0$\ be the largest integers with $(x^n - \xi)^{i_0} | f(x)$\ and $(x^n + \xi)^{j_0} | f(x)$.
	Since $\deg (f(x)) < 2np^s -n\beta p^{s-1} - n\beta^{'}p^{s-1} - 2n$, we have
	\be\nn
		i_0 + j_0 < 2p^s - \beta p^{s-1} - \beta^{'}p^{s-1} -2. 
	\ee
	Thus $i_0 < p^s - \beta p^{s-1} - 1$\ or $j_0 < p^s - \beta^{'}p^{s-1} - 1$\ holds.
	
	If $i_0 < p^s - \beta p^{s-1} - 1$, then, by Lemma \ref{Preliminaries.Lemma.weight.beta}, we have
	\be\label{Reducible.Lemma.beta.beta.prime.weight.i0.Small}
		w_H( (x^n - \xi)^{i_0 + \beta p^{s-1} + 1} ) \ge \beta + 2.
	\ee	
	Note that $(x^n + \xi)^{\beta^{'}p^{s-1} + 1}g(x) \mod x^n - \xi \neq 0$\ since 
	$x^n - \xi \nmid (x^n + \xi)^{\beta^{'}p^{s-1} + 1}g(x)$. Therefore 
	\be\label{Reducible.Lemma.beta.beta.prime.Weight.Modulo.beta.large}
		w_H ((x^n + \xi)^{\beta^{'}p^{s-1} + 1}g(x) \mod x^n - \xi ) > 0.	
	\ee
	Using (\ref{Inequality.Lower.Bound.g(x).xn+c.N}), 
	(\ref{Reducible.Lemma.beta.beta.prime.weight.i0.Small}) and
	(\ref{Reducible.Lemma.beta.beta.prime.Weight.Modulo.beta.large}), we obtain
	\be\label{Reducible.Lemma.beta.beta.prime.Case2.1}
		\begin{array}{rcl}
			w_H( c(x) ) & = & w_H( (x^n - \xi)^{i_0 + \beta p^{s-1} + 1}(x^n + \xi)^{j_0 + \beta^{'} p^{s-1} + 1}g(x) )\\
			& \ge & w_H( (x^n + \xi)^{\beta^{'}p^{s-1} + 1}g(x) \mod x^n - \xi )w_H( (x^n - \xi)^{i_0 + \beta p^{s-1} + 1} )\\
			& \ge &  \beta + 2. 
		\end{array}
	\ee
	
	If $i_0 \ge p^s - \beta p^{s-1} - 1$, then $j_0 < p^s - \beta^{'}p^{s-1} - 1$.
	By Lemma \ref{Reducible.Lemma.Weight.Bound.When.i0.large}, we get
	\be\label{Reducible.Lemma.beta.beta.prime.weight.PreBound.i0.Large}
		w_H( c(x) )\ge 2w_H( (x^{2n} - \xi^2)^{j_0 + \beta^{'}p^{s-1} + 1} ).
	\ee
	For $w_H( (x^{2n} - \xi^2)^{j_0 + \beta^{'}p^{s-1} + 1} )$, 
	we use Lemma \ref{Preliminaries.Lemma.weight.beta} and get
	\be\label{Reducible.Lemma.beta.beta.prime.weight.beta.Prime}
		w_H( (x^{2n} - \xi^2)^{j_0 + \beta^{'}p^{s-1} + 1} ) \ge \beta^{'} + 2.
	\ee
	Combining (\ref{Reducible.Lemma.beta.beta.prime.weight.PreBound.i0.Large}) and
	(\ref{Reducible.Lemma.beta.beta.prime.weight.beta.Prime}), we obtain
	\be\label{Reducible.Lemma.beta.beta.prime.Weight.i0Large}
		w_H( c(x) ) \ge 2(\beta^{'} + 2).
	\ee
	So if $\beta^{'} < \beta$, then, by (\ref{Reducible.Lemma.beta.beta.prime.Case2.1}) and
	(\ref{Reducible.Lemma.beta.beta.prime.Weight.i0Large}), we get that
	\be\nn
		w_H( c(x) ) \ge \min \{ \beta + 2, 2(\beta^{'} + 2) \}.
	\ee
	In both cases, namely $\beta = \beta^{'}$\ and $\beta^{'} < \beta $,
	we have shown that
	\be\nn
		d_H(C) \ge \min \{ \beta + 2, 2(\beta^{'} + 2) \}.
	\ee
\end{proof}

\begin{corollary}\label{Reducible.Corollary.beta.beta.prime}
	Let $j \le i $, $1\le \beta^{'} \le \beta \le p - 2$\ be integers
	such that 
	\be\nn
		\begin{array}{rcccl}
			\beta p^{s-1} + 1 & \le & i & \le & (\beta + 1 )p^{s-1}\quad \mbox{and}\\
			\beta^{'}p^{s-1} + 1 & \le & j & \le & (\beta^{'} + 1 )p^{s-1}.
		\end{array}
	\ee 
	Let $C = \langle (x^n - \xi)^i(x^n + \xi)^j \rangle$.
	Then $d_H(C) = \min \{ \beta + 2, 2(\beta^{'} + 2) \}$.
\end{corollary}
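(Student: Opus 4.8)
The plan is to combine the lower bound already available from Lemma~\ref{Reducible.Lemma.beta.beta.prime} with two explicit codewords that realize the claimed value. For the lower bound, note that $\beta p^{s-1}+1\le i$ and $\beta'p^{s-1}+1\le j$ force $(x^n-\xi)^{\beta p^{s-1}+1}(x^n+\xi)^{\beta'p^{s-1}+1}$ to divide the generator $(x^n-\xi)^i(x^n+\xi)^j$ of $C$, so $C\subseteq\langle(x^n-\xi)^{\beta p^{s-1}+1}(x^n+\xi)^{\beta'p^{s-1}+1}\rangle$ and Lemma~\ref{Reducible.Lemma.beta.beta.prime} gives $d_H(C)\ge\min\{\beta+2,\,2(\beta'+2)\}$. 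It then remains to produce a codeword of weight $\beta+2$ and, when $2(\beta'+2)<\beta+2$, also one of weight $2(\beta'+2)$.

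The main point for the first codeword is that, since $i\le(\beta+1)p^{s-1}$ and $j\le(\beta'+1)p^{s-1}\le(\beta+1)p^{s-1}$, the generator of $C$ divides
\[ u(x)=(x^{2n}-\xi^2)^{(\beta+1)p^{s-1}}=(x^n-\xi)^{(\beta+1)p^{s-1}}(x^n+\xi)^{(\beta+1)p^{s-1}}. \]
As $\deg u(x)=2n(\beta+1)p^{s-1}\le 2np^s-2np^{s-1}<2np^s$, this $u(x)$ is a nonzero element of $C$, and applying (\ref{Equality.weight.xn+c.N}) with $2n$ in place of $n$, $-\xi^2$ in place of $\myConstant$, and $N=(\beta+1)p^{s-1}$ (whose $p$-adic expansion has the single nonzero digit $\beta+1\le p-1$ in position $s-1$) gives $w_H(u(x))=\beta+2$. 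Hence $d_H(C)\le\beta+2$.

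For the second codeword one is in the case $2(\beta'+2)<\beta+2$, that is $\beta\ge 2\beta'+3$, which together with $\beta\le p-2$ forces $2(\beta'+1)\le p-3<p$. I would take, using $(x^n-\xi)^{p^s}=x^{np^s}-\xi^{p^s}$,
\[ v(x)=(x^{2n}-\xi^2)^{(\beta'+1)p^{s-1}}(x^{np^s}-\xi^{p^s})=(x^n-\xi)^{p^s+(\beta'+1)p^{s-1}}(x^n+\xi)^{(\beta'+1)p^{s-1}}. \]
Since $i\le p^s\le p^s+(\beta'+1)p^{s-1}$ and $j\le(\beta'+1)p^{s-1}$, the generator divides $v(x)$, and $\deg v(x)=np^s+2n(\beta'+1)p^{s-1}<2np^s$ precisely because $2(\beta'+1)<p$, so $v(x)$ is a nonzero codeword of $C$. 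Writing $v(x)=x^{np^s}(x^{2n}-\xi^2)^{(\beta'+1)p^{s-1}}-\xi^{p^s}(x^{2n}-\xi^2)^{(\beta'+1)p^{s-1}}$, the monomials of the two summands occupy the disjoint exponent ranges $[\,np^s,\ np^s+2n(\beta'+1)p^{s-1}\,]$ and $[\,0,\ 2n(\beta'+1)p^{s-1}\,]$ (again because $2(\beta'+1)p^{s-1}<p^s$), so there is no cancellation and $w_H(v(x))=2\,w_H((x^{2n}-\xi^2)^{(\beta'+1)p^{s-1}})=2(\beta'+2)$ by (\ref{Equality.weight.xn+c.N}). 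Thus $d_H(C)\le 2(\beta'+2)$ here, and combined with the previous paragraph $d_H(C)\le\min\{\beta+2,\,2(\beta'+2)\}$; together with the lower bound this yields the asserted equality.

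The only step requiring any thought is the elementary arithmetic showing that the regime $2(\beta'+2)<\beta+2$ is contained in $2(\beta'+1)<p$, which is exactly what guarantees both $\deg v(x)<2np^s$ and the non-overlapping of the two blocks of monomials in $v(x)$. Everything else is a direct application of Lemma~\ref{Reducible.Lemma.beta.beta.prime}, of divisibility of the generator by $u(x)$ and $v(x)$, and of the weight formula (\ref{Equality.weight.xn+c.N}).
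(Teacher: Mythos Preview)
Your proof is correct and follows the same strategy as the paper: the lower bound via Lemma~\ref{Reducible.Lemma.beta.beta.prime} and the inclusion $C\subset\langle(x^n-\xi)^{\beta p^{s-1}+1}(x^n+\xi)^{\beta' p^{s-1}+1}\rangle$, then explicit codewords for the upper bound, with the same witness $(x^{2n}-\xi^2)^{(\beta+1)p^{s-1}}$ for the value $\beta+2$. The only difference is in the witness for $2(\beta'+2)$: the paper uses $(x^n-\xi)^{p^s}(x^n+\xi)^{(\beta'+1)p^{s-1}}$, whose weight is $2(\beta'+2)$ unconditionally by the remark at the end of Section~\ref{Section.Preliminaries}, whereas you take the further multiple $(x^n-\xi)^{p^s}(x^{2n}-\xi^2)^{(\beta'+1)p^{s-1}}$, which forces you to restrict to the case $2(\beta'+2)<\beta+2$ to guarantee $\deg v(x)<2np^s$ and the non-overlap of the two monomial blocks---a harmless restriction since the other case is covered by $u(x)$, but slightly less clean than the paper's choice.
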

\begin{proof}
	We know, by Lemma \ref{Reducible.Lemma.beta.beta.prime}, 
	that $d_H(C) \ge \min \{ \beta + 2, 2(\beta^{'} + 2) \}$.
	So it suffices to show $d_H(C) \le \min \{ \beta + 2, 2(\beta^{'} + 2) \}$.
	
	First, $(\beta + 1)p^{s-1} \ge i,j$\ implies that
	\be\nn
		(x^n - \xi)^{(\beta + 1)p^{s-1}}(x^n + \xi)^{(\beta + 1)p^{s-1}} = 
		(x^{2n} - \xi^2)^{(\beta + 1)p^{s-1}} \in C.
	\ee
	By (\ref{Equality.weight.xn+c.N}), we get
	\be\nn
		w_H((x^{2n} - \xi^2)^{(\beta + 1)p^{s-1}}) = \beta + 2.
	\ee
	Therefore
	\be\label{Reducible.Corollary.beta.beta.prime.Bound.beta}
		d_H(C) \le \beta + 2.
	\ee
	
	Second, we consider $(x^n - \xi)^{p^s}(x^n + \xi)^{(\beta + 1)p^{s-1}} \in C$.
	Using (\ref{Equality.weight.xn+c.N}) and the fact that $p^s > (\beta + 1)p^{s-1}$,
	we get
	\be\nn
		w_H( (x^n - \xi)^{p^s}(x^n + \xi)^{(\beta + 1)p^{s-1}} )
		= 2w_H( (x^n + \xi)^{(\beta^{'} + 1)p^{s-1}} ) = 2(\beta^{'} + 2).
	\ee
	So
	\be\label{Reducible.Corollary.beta.beta.prime.Bound.beta.prime}
		d_H(C) \le 2(\beta^{'} + 2).
	\ee
	Combining (\ref{Reducible.Corollary.beta.beta.prime.Bound.beta}) and 
	(\ref{Reducible.Corollary.beta.beta.prime.Bound.beta.prime}), 
	we deduce that $d_H(C) = \min \{ \beta + 2, 2(\beta^{'} + 2) \}$.
	Therefore
		$d_H(C) = \min \{ \beta + 2, 2(\beta^{'} + 2) \}$.
\end{proof}

The following lemma and corollary deals with the case where $p^{s-1} < j \le (p-1)p^{s-1} < i < p^s$.

\begin{lemma}\label{Reducible.Lemma.beta.tauK}
	Let $1 \le \tau \le p-1$, $1\le \beta \le p-2$, $1\le k \le s-1$\ be integers and
	$C = \langle (x^n - \xi)^{p^s - p^{s-k} + (\tau - 1)p^{s-k-1} + 1}(x^n + \xi)^{\beta p^{s-1} +1} \rangle$. Then $d_H(C) \ge 2 (\beta + 2)$.
\end{lemma}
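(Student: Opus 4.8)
The plan is to follow the two-case strategy already used in Lemmas \ref{Reducible.Lemma.Weight.3} and \ref{Reducible.Lemma.Weight.4}. I would fix a nonzero $c(x)\in C$ and, exactly as in the discussion preceding Lemma \ref{Reducible.Lemma.Weight.Bound.When.i0.large}, write $c(x)=f(x)(x^n-\xi)^{i}(x^n+\xi)^{j}$ in $\F_q[x]$ with $i=p^s-p^{s-k}+(\tau-1)p^{s-k-1}+1$, $j=\beta p^{s-1}+1$ and $\deg f(x)<2np^s-ni-nj$. Let $i_0,j_0$ be maximal with $(x^n-\xi)^{i_0}\mid f(x)$ and $(x^n+\xi)^{j_0}\mid f(x)$, and write $f(x)=(x^n-\xi)^{i_0}(x^n+\xi)^{j_0}g(x)$ with $x^n-\xi\nmid g(x)$ and $x^n+\xi\nmid g(x)$; then the degree bound gives $i_0+j_0<2p^s-i-j$, so $i_0<p^s-i$ or $j_0<p^s-j$. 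Note also $i\ge(p-1)p^{s-1}+1>(p-2)p^{s-1}+1\ge j$, so $i>j$.

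If $i_0<p^s-i=p^{s-k}-(\tau-1)p^{s-k-1}-1$, then Lemma \ref{Preliminaries.Lemma.weight.tau.k} applied to $x^n-\xi$ gives $w_H\big((x^n-\xi)^{i_0+i}\big)\ge(\tau+1)p^k$; since $x^n-\xi$ is coprime to $x^n+\xi$ and does not divide $g(x)$, the polynomial $(x^n+\xi)^{j_0+j}g(x)$ has nonzero residue modulo $x^n-\xi$, so (\ref{Inequality.Lower.Bound.g(x).xn+c.N}) with modulus $x^n-\xi$ yields $w_H(c(x))\ge(\tau+1)p^k\ge 2p\ge 2(\beta+2)$, the last two inequalities because $\tau\ge1$, $k\ge1$, $\beta\le p-2$. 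If instead $i_0\ge p^s-i$, then $i_0+j_0<2p^s-i-j$ forces $j_0<p^s-j=p^s-\beta p^{s-1}-1$, so Lemma \ref{Reducible.Lemma.Weight.Bound.When.i0.large} (its hypotheses $i\ge j$, $i_0\ge p^s-i$, $j_0<p^s-j$ all hold) gives $w_H(c(x))\ge 2\,w_H\big((x^{2n}-\xi^2)^{j_0+j}\big)$, and Lemma \ref{Preliminaries.Lemma.weight.beta} applied to $x^{2n}-\xi^2$ (valid since $j_0<p^s-\beta p^{s-1}-1$) gives $w_H\big((x^{2n}-\xi^2)^{j_0+\beta p^{s-1}+1}\big)\ge\beta+2$, hence again $w_H(c(x))\ge2(\beta+2)$.

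In both cases $w_H(c(x))\ge 2(\beta+2)$, so $d_H(C)\ge2(\beta+2)$. I do not expect a real obstacle: the only non-mechanical point is the elementary observation that the $(x^n-\xi)$-part already carries weight $(\tau+1)p^k\ge2(\beta+2)$, so it can never be the binding factor; everything else is the standard bookkeeping with maximal $(x^n\mp\xi)$-multiplicities and the weight-retaining inequality (\ref{Inequality.Lower.Bound.g(x).xn+c.N}). The main thing to be careful about is tracking which of $i_0,j_0$ is the ``large'' one and matching the shifted exponents to the hypotheses of Lemmas \ref{Preliminaries.Lemma.weight.beta} and \ref{Preliminaries.Lemma.weight.tau.k}.
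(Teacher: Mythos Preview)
Your proposal is correct and follows essentially the same approach as the paper's own proof: the same representation $c(x)=(x^n-\xi)^{i_0+i}(x^n+\xi)^{j_0+j}g(x)$, the same case split on whether $i_0<p^s-i$ or $i_0\ge p^s-i$, and the same appeals to Lemma~\ref{Preliminaries.Lemma.weight.tau.k} (first case, yielding $(\tau+1)p^k\ge 2p\ge 2(\beta+2)$) and to Lemma~\ref{Reducible.Lemma.Weight.Bound.When.i0.large} combined with Lemma~\ref{Preliminaries.Lemma.weight.beta} (second case). Your explicit verification that $i>j$ so that Lemma~\ref{Reducible.Lemma.Weight.Bound.When.i0.large} applies is a useful clarification that the paper leaves implicit.
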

\begin{proof}
	Let $0 \neq c(x) \in C$. Then there exists $0 \neq f(x) \in \F_{q}[x]$\
	such that $c(x) \equiv (x^n - \xi)^{p^s - p^{s-k} + (\tau -1)p^{s-k-1} + 1}(x^n + \xi)^{\beta p^{s-1} + 1}f(x) \mod x^{2np^s}- \lambda$\ and $\deg (f(x)) < np^s + np^{s-k} - n(\tau - 1)p^{s-k-1} - n \beta p^{s-1} -2n$. Let $i_0$\ and $j_0$\ be the largest integers with $(x^n - \xi)^{i_0} | f(x)$\ and
	$(x^n + \xi)^{j_0} | f(x)$. Then $f(x)$\ is of the form
	$f(x) = (x^n - \xi)^{i_0}(x^n + \xi)^{j_0}g(x)$\ for some $g(x) \in \F_q[x]$\ such that
	$x^n - \xi \nmid g(x)$\ and $x^n + \xi \nmid g(x)$. 
	Clearly $i_0 + j_0 < p^s + p^{s-k} - (\tau - 1)p^{s-k-1} -  \beta p^{s-1} -2$.
	So $i_0 < p^{s-k} - (\tau -1)p^{s-k-1} - 1$\ or 
	$j_0 < p^s - \beta p^{s-1} -1$\ holds.
	
	If $i_0 < p^{s-k} - (\tau -1)p^{s-k-1} - 1$, then, by Lemma 
	\ref{Reducible.Lemma.Weight.Bound.When.i0.large},
	we have
	\be\label{Reducible.Lemma.beta.tauK.i0.small.Ineq1}
		w_H( (x^n - \xi)^{i_0 + p^s - p^{s-k} + (\tau -1)p^{s-k-1} + 1} ) \ge (\tau + 1)p^k.
	\ee
	Since $x^n - \xi \nmid g(x)$,
	\be\label{Reducible.Lemma.beta.tauK.i0.small.Ineq2}
		w_H( (x^n + \xi)^{j_0 + \beta p^{s-1} + 1}g(x) \mod x^n - \xi) > 0.
	\ee
	Using (\ref{Reducible.Lemma.beta.tauK.i0.small.Ineq1}),
	(\ref{Reducible.Lemma.beta.tauK.i0.small.Ineq2}) and 
	(\ref{Inequality.Lower.Bound.g(x).xn+c.N}), we obtain
	\be\nn
		w_H( c(x) ) & = & w_H( (x^n - \xi)^{i_0 + p^s - p^{s-k} + (\tau -1)p^{s-k-1} + 1}
				(x^n + \xi)^{j_0 + \beta p^{s-1} + 1}g(x) )\nn\\
			& \ge & w_H( (x^n + \xi)^{j_0 + \beta p^{s-1} + 1}g(x)\mod x^n - \xi )
				w_H((x^n - \xi)^{i_0 + p^s - p^{s-k} + (\tau -1)p^{s-k-1} + 1})\nn\\
			& \ge & (\tau + 1)p^k\nn\\
			& \ge & 2p\nn\\
			& \ge & 2(\beta + 2).\nn
	\ee
	
	If $i_0 \ge p^{s-k} - (\tau -1)p^{s-k-1} -1$, then 
	$j_0 < p^s - \beta p^{s-1} -1$. So, by Lemma \ref{Reducible.Lemma.Weight.Bound.When.i0.large},
	we get
	\be\label{Reducible.Lemma.beta.tauK.io.Large.Ineq}
		w_H( c(x) ) \ge 2w_H( (x^2 - \xi^2)^{j_0 + \beta p^{s-1} + 1} ).
	\ee
	For $w_H( (x^2 - \xi^2)^{j_0 + \beta p^{s-1} + 1} )$,  
	we use Lemma \ref{Preliminaries.Lemma.weight.beta} and get
	\be\label{Reducible.Lemma.beta.tauK.io.Large.Eq}
		w_H( (x^{2n} - \xi^2)^{j_0 + \beta p^{s-1} + 1} ) = \beta + 2.
	\ee
	Combining (\ref{Reducible.Lemma.beta.tauK.io.Large.Ineq})\ and
	(\ref{Reducible.Lemma.beta.tauK.io.Large.Eq}), we obtain
	$w_H( c(x) ) \ge \beta + 2$.
	So $d_H(C) \ge \beta + 2$.
\end{proof}

\begin{corollary}\label{Reducible.Corollary.beta.tauK}
	Let $i,j,1 \le \tau \le p-1, 1\le \beta \le p -2$\ and
	$1 \le k \le s-1$\ be integers such that 
	\be\nn
		\begin{array}{rcccl}
			p^s - p^{s-k} + (\tau -1)p^{s-k-1} + 1 & \le & i & \le & p^s - p^{s-k} + \tau p^{s-k-1}\quad \mbox{and}\\
			\beta p^{s-1} + 1 & \le & j & \le & (\beta + 1)p^{s-1}.
		\end{array}
	\ee
	Let $C = \langle (x^n - \xi)^i(x^n + \xi)^j \rangle $. 
	Then $d_H(C) = 2 (\beta + 2)$.
\end{corollary}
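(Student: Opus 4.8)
The plan is to sandwich $d_H(C)$ between $2(\beta+2)$ from below and from above; the lower bound will be inherited from Lemma~\ref{Reducible.Lemma.beta.tauK}, while for the upper bound I will exhibit an explicit codeword of weight $2(\beta+2)$.

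First I would note that the hypotheses $i\ge p^s-p^{s-k}+(\tau-1)p^{s-k-1}+1$ and $j\ge\beta p^{s-1}+1$ make $(x^n-\xi)^i(x^n+\xi)^j$ a polynomial multiple of $(x^n-\xi)^{p^s-p^{s-k}+(\tau-1)p^{s-k-1}+1}(x^n+\xi)^{\beta p^{s-1}+1}$, so that
\[
 C\subseteq\bigl\langle (x^n-\xi)^{p^s-p^{s-k}+(\tau-1)p^{s-k-1}+1}(x^n+\xi)^{\beta p^{s-1}+1}\bigr\rangle .
\]
Applying Lemma~\ref{Reducible.Lemma.beta.tauK} to the code on the right gives $d_H(C)\ge 2(\beta+2)$.

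For the upper bound I would test the codeword $(x^n-\xi)^{p^s}(x^n+\xi)^{(\beta+1)p^{s-1}}$. Since $i<p^s$ (because $\tau\le p-1$ and $k\le s-1$) and $j\le(\beta+1)p^{s-1}$, this polynomial equals $(x^n-\xi)^{p^s-i}(x^n+\xi)^{(\beta+1)p^{s-1}-j}$ times the generator of $C$, hence lies in $C$; it is nonzero in $\sR$ since $(\beta+1)p^{s-1}<p^s$ (here $\beta\le p-2$ is used) prevents $(x^n+\xi)^{p^s}$ from dividing it, and its degree $np^s+n(\beta+1)p^{s-1}$ is strictly less than $2np^s$, so its Hamming weight as a codeword equals its polynomial Hamming weight. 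The last remark of Section~\ref{Section.Preliminaries}, applied with $\myConstant_1=-\xi$, $\myConstant_2=\xi$ and exponent $(\beta+1)p^{s-1}$, which lies in $(0,p^s)$, together with the weight formula~(\ref{Equality.weight.xn+c.N}) applied to the $p$-adic expansion of $(\beta+1)p^{s-1}$, whose sole nonzero digit is $\beta+1$ in position $s-1$, then gives
\[
 w_H\bigl((x^n-\xi)^{p^s}(x^n+\xi)^{(\beta+1)p^{s-1}}\bigr)=2\,w_H\bigl((x^n+\xi)^{(\beta+1)p^{s-1}}\bigr)=2(\beta+2),
\]
so $d_H(C)\le 2(\beta+2)$. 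Combining the two inequalities yields $d_H(C)=2(\beta+2)$.

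I do not expect a genuine obstacle here: the substantive work was already done in Lemma~\ref{Reducible.Lemma.beta.tauK}, and this corollary is routine. The only points to watch are the exponent bookkeeping in the containment displayed above and the inequality $(\beta+1)p^{s-1}<p^s$, which is precisely where the hypothesis $\beta\le p-2$ enters and which is needed for both the weight-doubling remark and the $p$-adic weight formula to apply.
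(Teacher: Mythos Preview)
Your proof is correct and follows the same two-step strategy as the paper: the lower bound comes from the containment $C\subset\langle(x^n-\xi)^{p^s-p^{s-k}+(\tau-1)p^{s-k-1}+1}(x^n+\xi)^{\beta p^{s-1}+1}\rangle$ together with Lemma~\ref{Reducible.Lemma.beta.tauK}, and the upper bound from an explicit codeword of the form $(x^n-\xi)^{p^s}(x^n+\xi)^{e}$. The only difference is the exponent $e$: the paper takes $e=\beta p^{s-1}+1$, whereas you take $e=(\beta+1)p^{s-1}$; your choice is in fact the cleaner one, since $(\beta+1)p^{s-1}\ge j$ is immediate from the hypotheses and the $p$-adic expansion of $(\beta+1)p^{s-1}$ has a single nonzero digit, giving $w_H((x^n+\xi)^{(\beta+1)p^{s-1}})=\beta+2$ directly.
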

\begin{proof}
	Since $\langle (x^n - \xi)p^{p^s - p^{s-k}+ (\tau -1)p^{s-k-1} +1}(x^n + \xi)^{\beta p^{s-1} + 1} \rangle \subset C$, we know, by Lemma \ref{Reducible.Lemma.beta.tauK}, that
	$d_H(C) \ge 2(\beta + 2)$. So it suffices to show $d_H(C) \le 2(\beta + 2)$.
	We consider $(x^n - \xi)^{p^s}(x^n + \xi)^{\beta p^{s-1} + 1} \in C$.
	Note that
	\be\nn
		w_H( (x^n - \xi)^{\beta p^{s-1} + 1} ) = \beta + 2
	\ee
	by Lemma \ref{Equality.weight.xn+c.N}. So, using the fact that
	$p^s > \beta p^{s-1} + 1$, we obtain
	\be\nn 
		w_H( (x^n - \xi)^{p^s}(x^n + \xi)^{\beta p^{s-1} + 1} ) = 2 (\beta + 2).
	\ee
	So $d_H(C) \le 2(\beta + 2)$, and hence $d_H(C) = 2(\beta + 2)$.
\end{proof}

From Lemma \ref{Reducible.Lemma.tauK.same.k} till Corollary \ref{Reducible.Corollary.tauK.different.k},
we compute $d_H(C)$\ when $(p-1)p^{s-1} < j \le i < p^s$. 

\begin{lemma}\label{Reducible.Lemma.tauK.same.k}
	Let $1 \le k \le s-1, 1 \le \tau^{'} \le \tau \le p-1$,
	\be
		i & = & p^s - p^{s-k} + (\tau - 1)p^{s-k-1} + 1 \quad \mbox{and}\nn\\
		j & = & p^s - p^{s-k} + (\tau^{'} - 1)p^{s-k-1} + 1 \nn 
	\ee
	be integers and $C = \langle (x^n - \xi)^i(x^n + \xi)^j \rangle$.
	Then $d_H(C) \ge \min \{ 2(\tau^{'}+1)p^k, (\tau + 1)p^k \}$.
\end{lemma}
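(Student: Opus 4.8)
The plan is to follow the template of Lemma~\ref{Reducible.Lemma.beta.beta.prime} and Lemma~\ref{Reducible.Lemma.beta.tauK}. Fix $0 \neq c(x) \in C$ and choose $0 \neq f(x) \in \F_q[x]$ with $c(x) \equiv f(x)(x^n-\xi)^i(x^n+\xi)^j \bmod x^{2np^s}-\lambda$ and, exactly as in the discussion preceding Lemma~\ref{Reducible.Lemma.Weight.Bound.When.i0.large}, $\deg f(x) < 2np^s - ni - nj$, so that $w_H(c(x)) = w_H(f(x)(x^n-\xi)^i(x^n+\xi)^j)$. Note $x^{2np^s}-\lambda = (x^{2n}-\xi^2)^{p^s}$ and $p^s - i = p^{s-k}-(\tau-1)p^{s-k-1}-1$, $p^s - j = p^{s-k}-(\tau'-1)p^{s-k-1}-1$. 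I treat the cases $\tau = \tau'$ (where $i=j$) and $\tau' < \tau$ (where $i>j$) separately.

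In the case $\tau = \tau'$ we have $i = j$ and $C = \langle (x^{2n}-\xi^2)^i \rangle$. Let $m$ be the largest nonnegative integer with $(x^{2n}-\xi^2)^m \mid f(x)$, so $f(x) = (x^{2n}-\xi^2)^m g(x)$ with $x^{2n}-\xi^2 \nmid g(x)$; the degree bound forces $2nm < 2np^s - 2ni$, i.e. $m < p^{s-k}-(\tau-1)p^{s-k-1}-1$. Then Lemma~\ref{Preliminaries.Lemma.weight.tau.k}, with $\myConstant$ replaced by $-\xi^2$, gives $w_H((x^{2n}-\xi^2)^{m+i}) \geq (\tau+1)p^k$, and since $w_H(g(x)\bmod x^{2n}-\xi^2) > 0$, inequality~(\ref{Inequality.Lower.Bound.g(x).xn+c.N}) yields $w_H(c(x)) \geq (\tau+1)p^k \geq \min\{2(\tau'+1)p^k,(\tau+1)p^k\}$.

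In the case $\tau' < \tau$ let $i_0, j_0$ be the largest integers with $(x^n-\xi)^{i_0}\mid f(x)$ and $(x^n+\xi)^{j_0}\mid f(x)$, so $f(x) = (x^n-\xi)^{i_0}(x^n+\xi)^{j_0}g(x)$ with $x^n-\xi\nmid g(x)$ and $x^n+\xi\nmid g(x)$, and the degree bound gives $i_0+j_0 < 2p^s-i-j$, whence $i_0 < p^s-i$ or $j_0 < p^s-j$. If $i_0 < p^s-i = p^{s-k}-(\tau-1)p^{s-k-1}-1$, then Lemma~\ref{Preliminaries.Lemma.weight.tau.k} gives $w_H((x^n-\xi)^{i_0+i}) \geq (\tau+1)p^k$, and since $x^n-\xi \nmid (x^n+\xi)^{j_0+j}g(x)$ we have $w_H((x^n+\xi)^{j_0+j}g(x)\bmod x^n-\xi) > 0$; applying~(\ref{Inequality.Lower.Bound.g(x).xn+c.N}) with $(x^n-\xi)^{i_0+i}$ in the role of $(x^n+\myConstant)^N$ gives $w_H(c(x)) \geq (\tau+1)p^k$. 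If instead $i_0 \geq p^s-i$, then $j_0 < p^s-j$; since $i>j$, Lemma~\ref{Reducible.Lemma.Weight.Bound.When.i0.large} gives $w_H(c(x)) \geq 2\,w_H((x^{2n}-\xi^2)^{j_0+j})$, and since $j_0 < p^s-j = p^{s-k}-(\tau'-1)p^{s-k-1}-1$, Lemma~\ref{Preliminaries.Lemma.weight.tau.k} (with $\myConstant$ replaced by $-\xi^2$ and the parameters $\tau',k$) gives $w_H((x^{2n}-\xi^2)^{j_0+j}) \geq (\tau'+1)p^k$, so $w_H(c(x)) \geq 2(\tau'+1)p^k$. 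In every case $w_H(c(x)) \geq \min\{2(\tau'+1)p^k,(\tau+1)p^k\}$, hence $d_H(C) \geq \min\{2(\tau'+1)p^k,(\tau+1)p^k\}$.

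The $p$-adic bookkeeping behind the degree inequalities is already absorbed into Lemmas~\ref{Preliminaries.Lemma.weight.tau.k} and~\ref{Reducible.Lemma.Weight.Bound.When.i0.large}, so nothing new is needed there; the applicability of Lemma~\ref{Preliminaries.Lemma.weight.tau.k} to powers of $x^{2n}-\xi^2$ is legitimate because~(\ref{Equality.weight.xn+c.N}) holds verbatim for any positive power of $x$. The one point requiring genuine care — the \textbf{main obstacle} — is matching cases to bounds: the branch $i_0 \geq p^s-i$ produces the factor $2$ from the squared divisor $x^{2n}-\xi^2$ together with the $(\tau'+1)p^k$ estimate on the $(x^n+\xi)$-side, yielding $2(\tau'+1)p^k$, while the branch $i_0 < p^s-i$ (and the degenerate case $\tau=\tau'$) yields $(\tau+1)p^k$ from the $(x^n-\xi)$-side; these are precisely the two quantities appearing in the asserted minimum, and one must verify that in the $i_0$-large branch the hypothesis $j_0 < p^s-j$ is forced so that Lemma~\ref{Preliminaries.Lemma.weight.tau.k} is indeed applicable to the exponent $j_0+j$.
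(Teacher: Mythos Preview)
Your proof is correct and follows essentially the same route as the paper. The only difference is that you split off the degenerate case $\tau=\tau'$ (where $i=j$) and handle it via powers of $x^{2n}-\xi^2$, mimicking Lemma~\ref{Reducible.Lemma.beta.beta.prime}; the paper skips this and runs the $i_0,j_0$ dichotomy uniformly, which still works when $i=j$ because Lemma~\ref{Reducible.Lemma.Weight.Bound.When.i0.large} only needs $i\ge j$ and the inequalities $i_0\ge p^s-i$, $j_0<p^s-j=p^s-i$ force $i_0-j_0+i-j\ge 1$ anyway.
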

\begin{proof}
	Let $0 \neq c(x) \in C$. Then there exists $0 \neq f(x) \in \F_q[x]$\ such that
	$c(x) \equiv f(x)(x^n - \xi)^i(x^n + \xi)^j \mod x^{2np^s} - \lambda$\ and
	$\deg (f(x)) < 2np^s - in -jn$. Let $i_0$\ and $j_0$\ be the largest integers
	with $(x^n - \xi)^{i_0} | f(x)$\ and $(x^n + \xi)^{j_0} | f(x)$.
	Then $f(x)$\ is of the form 
	$f(x) = (x^n - \xi)^{i_0}(x^n + \xi)^{j_0}g(x)$\ for some $g(x) \in \F_q[x]$\
	with $x^n - \xi \nmid g(x)$\ and $x^n + \xi \nmid g(x)$.
	Clearly $i_0 + j_0 < 2p^{s} - i - j$\ and therefore $i_0 < p^s -i$\ or $j_0 < p^s -j$
	holds.
	
	If $i_0 < p^s -i$, then by Lemma \ref{Preliminaries.Lemma.weight.tau.k}, we have
	\be\label{Reducible.Lemma.tauK.same.k.small.io.Ineq1}
		w_H( (x^n - \xi)^{i_0 + i} ) \ge (\tau + 1)p^k.
	\ee
	Since $x^n - \xi \nmid g(x)$, we have
	$g(x)(x^n + \xi)^{j_0 + j} \not \equiv 0 \mod x^n - \xi$\ and therefore
	\be\label{Reducible.Lemma.tauK.same.k.small.io.Ineq2}
		w_H( g(x)(x^n + \xi)^{j + j_0} \mod x^n - \xi ) > 0.
	\ee
	Using (\ref{Reducible.Lemma.tauK.same.k.small.io.Ineq1}),
	(\ref{Reducible.Lemma.tauK.same.k.small.io.Ineq2}) and 
	(\ref{Inequality.Lower.Bound.g(x).xn+c.N}), we obtain
	\be\label{Reducible.Lemma.tauK.same.k.small.io}
		\begin{array}{rcl}
			w_H( c(x) ) & = & w_H( (x^n - \xi)^{i + i_0}(x^n + \xi)^{j + j_0}g(x) )\\
				& \ge & w_H( g(x)(x^n + \xi)^{j + j_0}g(x) \mod x^n - \xi )
					w_H( (x^n - \xi)^{i + i_0} ) \\
				& \ge & (\tau + 1)p^k.
		\end{array}
	\ee
	
	If $i_0 \ge p^s -i$, then $j_0 < p^s - j$. 
	So, by Lemma \ref{Reducible.Lemma.Weight.Bound.When.i0.large},
	we have
	\be\label{Reducible.Lemma.tauK.same.k.large.io.Ineq1}
		w_H( c(x) ) \ge 2w_H( (x^{2n} - \xi^2)^{j_0 + j} ).
	\ee
	For $w_H( (x^{2n} - \xi^2)^{j_0 + j} )$, 
	we use Lemma \ref{Preliminaries.Lemma.weight.tau.k} and get
	\be\label{Reducible.Lemma.tauK.same.k.large.io.Ineq2}
		w_H( (x^{2n} - \xi^2)^{j_0 + j} ) \ge (\tau^{'} + 1)p^{k^{'}}.
	\ee
	Combining (\ref{Reducible.Lemma.tauK.same.k.large.io.Ineq1}) and
	(\ref{Reducible.Lemma.tauK.same.k.large.io.Ineq2}), we obtain
	\be\label{Reducible.Lemma.tauK.same.k.large.io}
		w_H( c(x) ) \ge 2(\tau^{'} + 1)p^{k^{'}}.
	\ee
	Now, using (\ref{Reducible.Lemma.tauK.same.k.small.io}) and
	(\ref{Reducible.Lemma.tauK.same.k.large.io}), we deduce that
	$w_H(c(x)) \ge \min\{ 2(\tau^{'} + 1)p^{k^{'}}, (\tau + 1)p^k \}$.
	Hence $d_H(C) \ge \min \{ 2(\tau^{'}+1)p^k, (\tau + 1)p^k \}$.
\end{proof}

\begin{corollary}\label{Reducible.Corollary.tauK.same.k}
	Let $j \le i $, $1\le k \le s-1$, $1\le \tau^{'} \le \tau \le p-1 $\
	be integers such that
	\be\nn
		\begin{array}{rcccl}
			p^s - p^{s-k} + (\tau -1)p^{s-k-1} + 1 & \le & i & \le & p^s - p^{s-k} + \tau p^{s-k-1} \quad \mbox{and} \\
			p^s - p^{s-k} + (\tau^{'} -1)p^{s-k-1} + 1 & \le & j & \le & p^s - p^{s-k} + \tau^{'} p^{s-k-1}.
		\end{array}
	\ee
	Let $C = \langle (x^n - \xi)^i(x^n + \xi)^j \rangle$. Then
	\be\nn
		d_H( C ) = \min \{ 2(\tau^{'} +1)p^{k^{'}}, (\tau + 1)p^k \}.
	\ee
\end{corollary}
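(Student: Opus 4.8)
The plan is to run the same two‑sided estimate used in the proofs of Corollary \ref{Irreducible.Corollary.Weight.Greater.Than.tau.k} and Corollary \ref{Reducible.Corollary.beta.tauK}. Write $i_{\min} = p^s - p^{s-k} + (\tau-1)p^{s-k-1} + 1$ and $j_{\min} = p^s - p^{s-k} + (\tau'-1)p^{s-k-1} + 1$ for the left endpoints of the two intervals; the hypotheses give $i_{\min} \le i$ and $j_{\min} \le j$, so the generator of $C$ is divisible by $(x^n - \xi)^{i_{\min}}(x^n + \xi)^{j_{\min}}$ and hence
\[
	C \ \subseteq\ \langle (x^n - \xi)^{i_{\min}}(x^n + \xi)^{j_{\min}} \rangle .
\]
By Lemma \ref{Reducible.Lemma.tauK.same.k} the right‑hand side has minimum distance at least $\min\{\, 2(\tau'+1)p^{k},\ (\tau+1)p^{k}\,\}$ (here the exponent ``$k'$'' appearing in the statement equals $k$, since $i$ and $j$ lie in the same block of the partition (\ref{Preliminaries.Partition})), and since a subcode never has smaller minimum distance, $d_H(C) \ge \min\{\, 2(\tau'+1)p^{k},\ (\tau+1)p^{k}\,\}$. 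It then remains to exhibit one codeword of $C$ of weight exactly $(\tau+1)p^{k}$ and one of weight exactly $2(\tau'+1)p^{k}$; this forces the reverse inequality $d_H(C) \le \min\{\, 2(\tau'+1)p^{k},\ (\tau+1)p^{k}\,\}$ and yields equality.

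For the first codeword, set $A = p^s - p^{s-k} + \tau p^{s-k-1}$. Since $\tau \le p-1$ we have $A < p^s$, and since $j \le i \le A$ the element $(x^{2n} - \xi^2)^{A} = (x^n - \xi)^{A}(x^n + \xi)^{A}$ is a multiple of $(x^n - \xi)^{i}(x^n + \xi)^{j}$, hence lies in $C$; moreover $\deg\bigl((x^{2n}-\xi^2)^{A}\bigr) = 2nA < 2np^s$, so the Hamming weight of the corresponding codeword equals $w_H\bigl((x^{2n}-\xi^2)^{A}\bigr)$. The $p$‑adic expansion of $A$ has digit $p-1$ in each of the positions $s-k, s-k+1, \dots, s-1$, digit $\tau$ in position $s-k-1$, and $0$ in the remaining positions, so (\ref{Equality.weight.xn+c.N}), read with $x^{2n}$ in the role of $x^n$, gives $w_H\bigl((x^{2n}-\xi^2)^{A}\bigr) = p^{k}(\tau+1)$.

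For the second codeword, set $B = p^s - p^{s-k} + \tau' p^{s-k-1}$, so that $0 < B < p^s$; since $i \le p^s$ and $j \le B$, the element $(x^n - \xi)^{p^s}(x^n + \xi)^{B}$ is a multiple of $(x^n - \xi)^{i}(x^n + \xi)^{j}$ and has degree $< 2np^s$, so by the last identity of Section \ref{Section.Preliminaries} the corresponding codeword has Hamming weight $2\,w_H\bigl((x^n+\xi)^{B}\bigr)$; reading the $p$‑adic digits of $B$ and applying (\ref{Equality.weight.xn+c.N}) gives $w_H\bigl((x^n+\xi)^{B}\bigr) = p^{k}(\tau'+1)$, i.e.\ weight $2(\tau'+1)p^{k}$. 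I do not expect any genuine obstacle here, since the argument is entirely parallel to the cases already settled; the only points that need care are verifying that the two displayed elements really lie in $C$ (which reduces to the inequalities $j \le i \le A$ and $j \le B < p^s$ forced by the hypotheses) and keeping the $p$‑adic bookkeeping straight so that the product $\prod_d (b_d+1)$ in (\ref{Equality.weight.xn+c.N}) evaluates to $p^{k}(\tau+1)$ and $p^{k}(\tau'+1)$ respectively.
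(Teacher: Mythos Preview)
Your proof is correct and follows essentially the same approach as the paper: the lower bound comes from the containment $C \subseteq \langle (x^n-\xi)^{i_{\min}}(x^n+\xi)^{j_{\min}}\rangle$ together with Lemma~\ref{Reducible.Lemma.tauK.same.k}, and the upper bound comes from the two explicit codewords $(x^{2n}-\xi^2)^{A}$ and $(x^n-\xi)^{p^s}(x^n+\xi)^{B}$ with $A = p^s-p^{s-k}+\tau p^{s-k-1}$ and $B = p^s-p^{s-k}+\tau' p^{s-k-1}$. Your choice of exponents $A$ and $B$ is in fact cleaner than what appears in the paper's text (which contains some typos in the displayed exponents), and your observation that $k' = k$ throughout is the correct reading of the statement.
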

\begin{proof}
	Since $\langle (x^n - \xi)^{p^s - p^{s-k} + (\tau -1)p^{s-k-1} + 1}(x^n + \xi)^{p^s - p^{s-k} + (\tau^{'} -1)p^{s-k-1} + 1} \rangle \subset C$, we have, 
	by Lemma \ref{Reducible.Lemma.tauK.same.k}, that
	$d_H(C) \ge \min \{ 2(\tau^{'} +1)p^{k^{'}}, (\tau + 1)p^k \}$.
	So it suffices to show $d_H(C) \le \min \{ 2(\tau^{'} +1)p^{k^{'}}, (\tau + 1)p^k \}$.
	
	First, we consider $(x^n - \xi)^{p^s}(x^n + \xi)^{p^s - p^{s-k} + (\tau^{'} - 1)p^{s-k-1}-1}\in C$.
	Since
	\be\nn
		w_H( (x^n + \xi)^{p^s - p^{s-1} + (\tau^{'} -1)p^{s-k-1}} ) = (\tau^{'} + 1)p^{k^{'}},
	\ee
	we have
	\be\nn
		w_H( (x^n - \xi)^{p^s}(x^n + \xi)^{p^s - p^{s-1} + (\tau^{'} -1)p^{s-k-1}} ) = 2(\tau ^{'} + 1)p^{k^{'}}.
	\ee
	Thus
	\be\label{Reducible.Corollary.tauK.same.k.Ineq1}
		d_H(C) \le 2(\tau^{'} + 1)p^{k^{'}}
	\ee
	
	Second, we consider $(x^{2n} - \xi^2)^{p^s - p^{s-k} + (\tau -1)p^{s-k-1} + 1} \in C$.
	By Lemma \ref{Equality.weight.xn+c.N}, we get
	\be\nn
		w_H( (x^{2n} - \xi^2)^{p^s - p^{s-k} + (\tau -1)p^{s-k-1} + 1} ) = (\tau + 1)p^k.
	\ee
	Thus
	\be\label{Reducible.Corollary.tauK.same.k.Ineq2}
		d_H(C) \le (\tau + 1)p^k.
	\ee
	Now combining (\ref{Reducible.Corollary.tauK.same.k.Ineq1}) and
	(\ref{Reducible.Corollary.tauK.same.k.Ineq2}), we deduce that
	$d_H(C) \le \min \{ 2(\tau^{'} +1)p^{k^{'}}, (\tau + 1)p^k \}$.
	Hence $d_H(C) = \min \{ 2(\tau^{'} +1)p^{k^{'}}, (\tau + 1)p^k \}$.
\end{proof}

\begin{lemma}\label{Reducible.Lemma.tauK.different.k}
	Let $1 \le k^{'} < k \le s-1$, $1\le \tau^{'},\tau < p-1$,
	\be
		i & = & p^s - p^{s-k} + (\tau -1)p^{s-k-1} + 1 \quad \mbox{and} \nn \\
		j & = & p^s - p^{s-k^{'}} + (\tau^{'} -1)p^{s-k^{'}-1} + 1 \nn
	\ee
	be integers and $C = \langle (x^n - \xi)^i(x^n + \xi)^j \rangle $.
	Then $d_H(C) \ge 2(\tau^{'} + 1)p^{k^{'}}$.
\end{lemma}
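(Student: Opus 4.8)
The plan is to mimic the argument of Lemma~\ref{Reducible.Lemma.tauK.same.k}. Fix $0 \neq c(x) \in C$ and, as in the discussion preceding Lemma~\ref{Reducible.Lemma.Weight.Bound.When.i0.large}, write $c(x) \equiv f(x)(x^n-\xi)^i(x^n+\xi)^j \mod x^{2np^s}-\lambda$ with $\deg f(x) < 2np^s - ni - nj$. Let $i_0$ and $j_0$ be the largest integers with $(x^n-\xi)^{i_0} \mid f(x)$ and $(x^n+\xi)^{j_0}\mid f(x)$, so $f(x) = (x^n-\xi)^{i_0}(x^n+\xi)^{j_0}g(x)$ with $x^n-\xi \nmid g(x)$ and $x^n+\xi \nmid g(x)$. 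The degree bound forces $i_0 + j_0 < 2p^s - i - j$, hence $i_0 < p^s - i$ or $j_0 < p^s - j$. Since $k' < k$ one checks $i > j$, so we remain in the convention $i \ge j$ fixed before Lemma~\ref{Reducible.Lemma.Weight.3}, and Lemma~\ref{Reducible.Lemma.Weight.Bound.When.i0.large} will be available.

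First, suppose $i_0 < p^s - i$. A direct computation gives $p^s - i = p^{s-k} - (\tau-1)p^{s-k-1} - 1$, so the hypothesis of Lemma~\ref{Preliminaries.Lemma.weight.tau.k} holds with $m = i_0$ and parameters $\tau, k$; applying it with the constant $-\xi$ in place of $\myConstant$ yields $w_H((x^n-\xi)^{i_0+i}) \ge (\tau+1)p^k$. Since $x^n - \xi \nmid g(x)$ we have $w_H(g(x)(x^n+\xi)^{j_0+j} \mod x^n-\xi) > 0$, so (\ref{Inequality.Lower.Bound.g(x).xn+c.N}) gives
$$
w_H(c(x)) = w_H\bigl((x^n-\xi)^{i_0+i}(x^n+\xi)^{j_0+j}g(x)\bigr) \ge (\tau+1)p^k .
$$
Because $\tau \ge 1$, $k \ge k'+1$ and $\tau'+1 \le p$, we have $(\tau+1)p^k \ge 2p^{k'+1} \ge 2(\tau'+1)p^{k'}$, so this case already gives $w_H(c(x)) \ge 2(\tau'+1)p^{k'}$.

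Next, suppose $i_0 \ge p^s - i$; then $j_0 < p^s - j$, and Lemma~\ref{Reducible.Lemma.Weight.Bound.When.i0.large} (using $i \ge j$) gives $w_H(c(x)) \ge 2\,w_H((x^{2n}-\xi^2)^{j_0+j})$. Now $p^s - j = p^{s-k'} - (\tau'-1)p^{s-k'-1} - 1$, so the hypothesis of Lemma~\ref{Preliminaries.Lemma.weight.tau.k} holds with $m = j_0$ and parameters $\tau', k'$; applied to the factor $x^{2n}-\xi^2$ (with $-\xi^2$ in place of $\myConstant$) it yields $w_H((x^{2n}-\xi^2)^{j_0+j}) \ge (\tau'+1)p^{k'}$, and therefore $w_H(c(x)) \ge 2(\tau'+1)p^{k'}$.

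Combining the two cases, $w_H(c(x)) \ge 2(\tau'+1)p^{k'}$ for every nonzero $c(x) \in C$, which is the assertion. The only delicate point is the $p$-adic bookkeeping $p^s - i = p^{s-k} - (\tau-1)p^{s-k-1} - 1$ and $p^s - j = p^{s-k'} - (\tau'-1)p^{s-k'-1} - 1$, which lets Lemma~\ref{Preliminaries.Lemma.weight.tau.k} be invoked with $(\tau,k)$ on the $(x^n-\xi)$-side and $(\tau',k')$ on the $(x^{2n}-\xi^2)$-side; coupled with the elementary estimate $(\tau+1)p^k \ge 2(\tau'+1)p^{k'}$ (valid precisely because $k'<k$), this is also why a single bound $2(\tau'+1)p^{k'}$ appears here rather than a minimum of two quantities as in Lemma~\ref{Reducible.Lemma.tauK.same.k}.
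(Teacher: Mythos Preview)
Your proof is correct and follows essentially the same approach as the paper's: the same reduction to $f(x)=(x^n-\xi)^{i_0}(x^n+\xi)^{j_0}g(x)$, the same case split on $i_0<p^s-i$ versus $i_0\ge p^s-i$, and the same applications of Lemma~\ref{Preliminaries.Lemma.weight.tau.k}, Lemma~\ref{Reducible.Lemma.Weight.Bound.When.i0.large}, and (\ref{Inequality.Lower.Bound.g(x).xn+c.N}). You are slightly more explicit than the paper in two places---verifying $i>j$ so that Lemma~\ref{Reducible.Lemma.Weight.Bound.When.i0.large} applies, and justifying the inequality $(\tau+1)p^k\ge 2(\tau'+1)p^{k'}$ via $k\ge k'+1$ and $\tau'+1\le p$---but these are exactly the details the paper leaves implicit.
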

\begin{proof}
	Let $0 \neq c(x) \in C$. Then there exists $0 \neq f(x) \in \F_q[x]$\ such that
	$c(x) \equiv (x^n - \xi)^i(x^n + \xi)^jf(x) \mod x^{2np^s} - \lambda$\ and
	$\deg (f(x)) < 2np^s -ni - nj$.
	Let $i_0$\ and $j_0$\ be the largest integers with $(x^n - \xi)^{i_0} | f(x)$\
	and $(x^n + \xi)^{j_0} | f(x)$. Then $f(x)$\ is of the form
	$f(x) = (x^n - \xi)^{i_0}(x^n + \xi)^{j_0}g(x)$\ for some $g(x) \in \F_q[x]$\ with
	$x^n - \xi \nmid g(x)$\ and $x^n + \xi \nmid g(x)$.
	Clearly $i_0 + j_0 < 2p^s - i- j$. So $i_0 < p^s - i$\ or $j_0 < p^s - j$\ holds.
	
	If $i_0 < p^s - i$, then, by Lemma \ref{Preliminaries.Lemma.weight.tau.k}, we have
	\be\label{Reducible.Lemma.tauK.different.k.Small.i0.Ineq1}
		w_H( (x^n - \xi)^{i + i_0} ) \ge (\tau + 1)p^k \ge 2(\tau^{'} + 1)p^{k^{'}}.
	\ee
	Since $x^n - \xi \nmid g(x)$, we have
	$(x^n + \xi)^{j_0 + j}g(x) \mod x^n - \xi \neq 0$\ and therefore
	\be\label{Reducible.Lemma.tauK.different.k.Small.i0.Ineq2}
		w_H( (x^n + \xi)^{j_0 + j}g(x) \mod x^n - \xi ) > 0.
	\ee 
	Using (\ref{Reducible.Lemma.tauK.different.k.Small.i0.Ineq1}),
	(\ref{Reducible.Lemma.tauK.different.k.Small.i0.Ineq2}) and
	(\ref{Inequality.Lower.Bound.g(x).xn+c.N}), we obtain
	\be
		w_H( c(x) ) & = & w_H( (x^n - \xi)^{i_0 + i}(x^n + \xi)^{j_0 + j}g(x) )\nn \\
			& \ge & w_H( (x^n + \xi)^{j_0 + j}g(x) \mod x^n - \xi )
				w_H( (x^n - \xi)^{i_0 + i} )\nn \\
			& \ge & 2(\tau^{'} + 1)p^{k^{'}}.\nn
	\ee
	
	If $i_0 \ge p^s - i$, then $j_0 < p^s - j$.
	So, by Lemma \ref{Reducible.Lemma.Weight.Bound.When.i0.large}, we have
	\be\label{Reducible.Lemma.tauK.different.k.Large.i0.Ineq1}
		w_H( c(x) ) \ge 2w_H( (x^{2n} - \xi^2)^{j_0 + j} ).
	\ee
	For $w_H( (x^{2n} - \xi^2)^{j_0 + j} )$,
	we use Lemma \ref{Preliminaries.Lemma.weight.tau.k} and get
	\be\label{Reducible.Lemma.tauK.different.k.Large.i0.Ineq2}
		w_H( (x^{2n} - \xi^2)^{j_0 + j} ) \ge (\tau^{'} + 1)p^{k^{'}}.
	\ee
	Now combining (\ref{Reducible.Lemma.tauK.different.k.Large.i0.Ineq1})
	and (\ref{Reducible.Lemma.tauK.different.k.Large.i0.Ineq2}), we obtain
	$ w_H( c(x) ) \ge 2(\tau ^ {'} + 1)p^{k^{'}}$.
	Hence $d_H(C) \ge 2 (\tau^{'} + 1)p^{k^{'}}$.
\end{proof}

\begin{corollary}\label{Reducible.Corollary.tauK.different.k}
	Let $i,j,1\le k^{'} < k \le s-1, 1\le \tau^{'},\tau \le p-1$\ be integers
	such that
	\be\nn
		\begin{array}{rcccl}
			 p^s - p^{s-k} + (\tau - 1)p^{s-k-1} + 1 & \le & i & \le & p^s - p^{s-k} + \tau p^{s-k-1} \quad \mbox{and}\\
			 p^s - p^{s-k^{'}} + (\tau^{'} - 1)p^{s-k^{'}-1} + 1&\le & j & \le & p^s - p^{s-k^{'}} + \tau p^{s-k^{'}-1}.
		\end{array}
	\ee
	Let $C = \langle (x^n - \xi)^i (x^n + \xi)^j \rangle$.
	Then $d_H(C) = 2 (\tau^{'} + 1)p^{k^{'}}$.
\end{corollary}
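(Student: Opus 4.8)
The plan is to follow exactly the two-step scheme of Corollary~\ref{Reducible.Corollary.tauK.same.k} and Corollary~\ref{Reducible.Corollary.beta.tauK}: first obtain the lower bound $d_H(C)\ge 2(\tau'+1)p^{k'}$ from an inclusion of ideals together with Lemma~\ref{Reducible.Lemma.tauK.different.k}, and then exhibit a single nonzero codeword of $C$ whose Hamming weight is exactly $2(\tau'+1)p^{k'}$.

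For the lower bound I would note that, since $i\ge p^s-p^{s-k}+(\tau-1)p^{s-k-1}+1$ and $j\ge p^s-p^{s-k'}+(\tau'-1)p^{s-k'-1}+1$, the generator $(x^n-\xi)^i(x^n+\xi)^j$ is a multiple in $\sR$ of $(x^n-\xi)^{p^s-p^{s-k}+(\tau-1)p^{s-k-1}+1}(x^n+\xi)^{p^s-p^{s-k'}+(\tau'-1)p^{s-k'-1}+1}$. Hence $C$ is contained in the code generated by the latter, and that code has minimum distance at least $2(\tau'+1)p^{k'}$ by Lemma~\ref{Reducible.Lemma.tauK.different.k}; so $d_H(C)\ge 2(\tau'+1)p^{k'}$.

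For the upper bound I would take
\[
  c(x)=(x^n-\xi)^{p^s}(x^n+\xi)^{p^s-p^{s-k'}+\tau' p^{s-k'-1}}.
\]
Since $i<p^s$, and $p^s-p^{s-k'}+\tau' p^{s-k'-1}$ is the right endpoint of the block of the partition (\ref{Preliminaries.Partition}) containing $j$ (so that $j\le p^s-p^{s-k'}+\tau' p^{s-k'-1}$), the polynomial $c(x)$ is a multiple of the generator of $C$ and therefore $c(x)\in C$; moreover $\deg(c(x))<2np^s$, so $w_H(c(x))$ equals the number of nonzero coefficients of $c(x)$. By the identity $w_H((x^n+\myConstant_1)^{p^s}(x^n+\myConstant_2)^t)=2\,w_H((x^n+\myConstant_2)^t)$ recorded at the end of Section~\ref{Section.Preliminaries}, we get $w_H(c(x))=2\,w_H((x^n+\xi)^{p^s-p^{s-k'}+\tau' p^{s-k'-1}})$. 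The $p$-adic expansion of $p^s-p^{s-k'}+\tau' p^{s-k'-1}$ has digit $p-1$ in each of the $k'$ positions $s-k',\dots,s-1$, digit $\tau'$ in position $s-k'-1$, and $0$ in every other position, so by (\ref{Equality.weight.xn+c.N}) its $(x^n+\xi)$-power has weight $(\tau'+1)p^{k'}$. Thus $w_H(c(x))=2(\tau'+1)p^{k'}$, giving $d_H(C)\le 2(\tau'+1)p^{k'}$; combined with the lower bound, $d_H(C)=2(\tau'+1)p^{k'}$.

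I do not expect a genuine obstacle here, since the argument is entirely parallel to the neighbouring corollaries. The only steps that need care are the two elementary verifications that $c(x)$ actually lies in $C$, namely $i\le p^s$ and $j\le p^s-p^{s-k'}+\tau' p^{s-k'-1}$ (both immediate from the stated ranges of $i,j$ and from $\tau,\tau'\le p-1$), and the bookkeeping of the $p$-adic digits of $p^s-p^{s-k'}+\tau' p^{s-k'-1}$ that is needed to invoke (\ref{Equality.weight.xn+c.N}).
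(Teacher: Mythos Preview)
Your proposal is correct and follows essentially the same approach as the paper's own proof: both obtain the lower bound from the ideal inclusion $C\subset\langle (x^n-\xi)^{p^s-p^{s-k}+(\tau-1)p^{s-k-1}+1}(x^n+\xi)^{p^s-p^{s-k'}+(\tau'-1)p^{s-k'-1}+1}\rangle$ together with Lemma~\ref{Reducible.Lemma.tauK.different.k}, and both exhibit the same codeword $(x^n-\xi)^{p^s}(x^n+\xi)^{p^s-p^{s-k'}+\tau' p^{s-k'-1}}$ for the upper bound. Your write-up is in fact more careful than the paper's (which contains several typos in the exponents at this spot), explicitly checking membership in $C$ and spelling out the $p$-adic digits needed for (\ref{Equality.weight.xn+c.N}).
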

\begin{proof}
	Since $\langle (x^n - \xi)^{p^s - p^{s-k} + (\tau -1)p^{s-k-1}+1} (x^n 4 \xi)^{p^s - p^{s-k^{'}} + (\tau^{'} -1)p^{s-k^{'}-1}+1} \rangle \in C$, we know, by Lemma
	\ref{Reducible.Lemma.tauK.different.k}, that $d_H(C) \ge 2(\tau^{'} + 1)p^{k^{'}}$.
	So it suffices to show $d_H(C) \le 2(\tau^{'} + 1)p^{k^{'}}$.
	We consider $(x^n - \xi)^{p^s}(x^n + \xi)^{p^s - p^{s-k^{'}} +1} \in C$.
	By Lemma \ref{Equality.weight.xn+c.N}, we have
	\be\nn
		w_H( (x^n + \xi)^{p^s - p^{s-k^{'}} \tau p^{s-k^{'}-1}} ) = (\tau ^{'} + 1)p^{k^{'}}.
	\ee
	Moreover since $(x^n - \xi)^{p^s} = x^{np^s} - \xi ^{p^s}$\ and
	$p^s > p^s - p^{s-k^{'}} +1$, we get
	\be\nn
		w_H( (x^n - \xi)^{p^s}(x^n + \xi)^{p^s - p^{s-k^{'}} +1} ) = 2(\tau ^{'} + 1)p^{k^{'}}.
	\ee
	So $d_H(C) \le 2(\tau ^{'} + 1)p^{k^{'}}$\ and therefore $d_H(C) = 2(\tau ^{'} + 1)p^{k^{'}}$.
\end{proof}

Finally it remains to consider the cases where $i = p^s$\ and $0 < j < p^s$.

\begin{lemma}\label{Reducible.Lemma.ips.j.small}
	Let $C = \langle (x^n - \xi)^{p^s}(x^n + \xi) \rangle$.
	Then $d_H(C) \ge 4$.
\end{lemma}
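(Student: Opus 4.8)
The plan is to re-use the normalization from the paragraph preceding the statement and then exploit that, in characteristic $p$, the factor $(x^n-\xi)^{p^s}$ collapses to the \emph{binomial} $x^{np^s}-\xi^{p^s}$; this is the same mechanism behind the last remark of Section \ref{Section.Preliminaries}, only with an arbitrary $f(x)$ in place of a binomial expansion.

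Concretely, I would take $0\neq c(x)\in C$ and write $c(x)\equiv f(x)(x^n-\xi)^{p^s}(x^n+\xi)\mod x^{2np^s}-\lambda$ with $0\neq f(x)\in\F_q[x]$. Inside $x^{2np^s}-\lambda=(x^n-\xi)^{p^s}(x^n+\xi)^{p^s}$ the complementary factor of the generator is $(x^n+\xi)^{p^s-1}$, of degree $n(p^s-1)$; dividing $f(x)$ by it lets me assume $\deg f(x)<np^s-n$, and then $c(x)=f(x)(x^n-\xi)^{p^s}(x^n+\xi)$ holds already as an identity in $\F_q[x]$, because the right-hand side has degree $<2np^s$.

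Now set $h(x)=f(x)(x^n+\xi)$, so $\deg h(x)<np^s$, and use $(x^n-\xi)^{p^s}=x^{np^s}-\xi^{p^s}$ to write
\[
c(x)=h(x)\bigl(x^{np^s}-\xi^{p^s}\bigr)=x^{np^s}h(x)-\xi^{p^s}h(x).
\]
Every exponent occurring in $x^{np^s}h(x)$ is $\geq np^s$ whereas every exponent occurring in $\xi^{p^s}h(x)$ is $<np^s$, so the two summands have disjoint support and $w_H(c(x))=2\,w_H(h(x))$. Finally $h(x)$ is a nonzero multiple of $x^n+\xi$, which is irreducible of positive degree and has nonzero constant term; since $\F_q[x]$ is a unique factorization domain, $x^n+\xi$ cannot divide a monomial, so $h(x)$ is not a monomial and $w_H(h(x))\geq 2$. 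Therefore $w_H(c(x))\geq 4$, and since $c(x)$ was arbitrary, $d_H(C)\geq 4$.

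The argument is almost entirely bookkeeping; the one substantive point is recognizing that the $p^s$-th power degenerates to $x^{np^s}-\xi^{p^s}$, which pushes the two copies of $h(x)=f(x)(x^n+\xi)$ into disjoint degree windows. The only steps needing care are the degree estimates ($\deg f<np^s-n$, hence $\deg h<np^s$) that make the supports disjoint and dispense with any reduction modulo $x^{2np^s}-\lambda$; after that one merely needs that $x^n+\xi$ does not divide a monomial --- alternatively, one may bound $w_H(h(x))\geq 2$ by applying (\ref{Inequality.Lower.Bound.g(x).xn+c.N}) after splitting off the largest power of $x^n+\xi$ dividing $f(x)$.
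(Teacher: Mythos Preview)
Your proof is correct and takes a genuinely different route from the paper's. The paper invokes its workhorse Lemma~\ref{Reducible.Lemma.Weight.Bound.When.i0.large}: after normalizing to $\deg f<np^s-n$ it picks out the maximal powers $i_0,j_0$ of $x^n-\xi$ and $x^n+\xi$ dividing $f$, notes that $i_0\ge 0=p^s-p^s$ and $j_0<p^s-1$ automatically place us in the hypotheses of that lemma, and concludes $w_H(c(x))\ge 2\,w_H\bigl((x^{2n}-\xi^2)^{j_0+1}\bigr)\ge 4$ via the weight-retaining inequality~(\ref{Inequality.Lower.Bound.g(x).xn+c.N}).

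Your argument instead exploits that here $i=p^s$ exactly, so $(x^n-\xi)^{p^s}$ collapses to the binomial $x^{np^s}-\xi^{p^s}$; the disjoint-support bookkeeping then yields $w_H(c(x))=2\,w_H(h(x))$ with an \emph{equality}, and the lower bound $w_H(h(x))\ge 2$ follows from the elementary observation that $x^n+\xi$ divides no monomial. This is cleaner and more self-contained for this particular lemma, avoiding both Lemma~\ref{Reducible.Lemma.Weight.Bound.When.i0.large} and the weight-retaining machinery. The paper's approach, on the other hand, is uniform with the rest of Section~\ref{Section.Reducible}: the same Lemma~\ref{Reducible.Lemma.Weight.Bound.When.i0.large} is reused verbatim in the companion Lemmas~\ref{Reducible.Lemma.ips.j.beta} and~\ref{Reducible.Lemma.ips.j.tau.K}, where the finer bounds $w_H\bigl((x^{2n}-\xi^2)^{j_0+j}\bigr)\ge \beta+2$ or $(\tau+1)p^k$ are what matter. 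Your disjoint-support trick would also handle those cases (one still gets $w_H(c)=2\,w_H(h)$ and then bounds $w_H(h)$ via~(\ref{Inequality.Lower.Bound.g(x).xn+c.N}) applied to $x^n+\xi$), so it is a legitimate alternative throughout the $i=p^s$ regime.
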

\begin{proof}
	Pick $0 \neq c(x) \in C$. Then there exists $0 \neq f(x) \in \F_q[x]$\
	such that $c(x) \equiv f(x)(x^n - \xi)^{p^s}(x^n + \xi) \mod x^{2np^s} - \lambda$\ and
	$\deg (f(x)) < 2np^s - np^s -n = np^s - n$. Let $i_0$\ and $j_0$\ be the largest
	nonnegative integers such that $(x^n - \xi)^{i_0} | f(x)$\ and $(x^n + \xi)^{j_0} | f(x)$.
	Clearly $i_0 + j_0 < p^s -1$\ as $\deg (f(x)) < np^s - n$.
	So, since $i_0 \ge p^s - p^s = 0$\ and $j_0 < p^s -1$, 
	by Lemma \ref{Reducible.Lemma.Weight.Bound.When.i0.large},
	we get
	\be\label{Reducible.Lemma.ips.j.small.Ineq1}
		w_H( c(x) ) \ge 2 w_H( (x^{2n} - \xi^2)^{j_0 + 1} ).
	\ee
	Obviously $w_H((x^{2n} - \xi^2)^{j_0 + 1}) \ge 2$\ and therefore, 
	by (\ref{Reducible.Lemma.ips.j.small.Ineq1}), we obtain
	$w_H( c(x) ) \ge 4$. Hence $d_H( C ) \ge 4$.
\end{proof}
\begin{corollary}\label{Reducible.Corollary.ips.j.small}
	Let $0 < j \le p^{s-1}$\ be an integer and
	$\langle (x^n - \xi)^{p^s}(x^n + \xi)^j \rangle$.
	Then $d_H(C) = 4$.
\end{corollary}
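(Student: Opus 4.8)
The plan is to prove the two inequalities $d_H(C)\ge 4$ and $d_H(C)\le 4$ separately, following the template of the corollaries above. Throughout, $C=\langle (x^n-\xi)^{p^s}(x^n+\xi)^j\rangle$ with $0<j\le p^{s-1}$.

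For the lower bound, observe that since $j\ge 1$ the generator $(x^n-\xi)^{p^s}(x^n+\xi)$ divides $(x^n-\xi)^{p^s}(x^n+\xi)^j$ as polynomials, so $C\subseteq\langle(x^n-\xi)^{p^s}(x^n+\xi)\rangle$ as ideals of $\sR$. A subcode has Hamming distance at least that of the ambient code, hence Lemma \ref{Reducible.Lemma.ips.j.small} gives
$d_H(C)\ge d_H\big(\langle(x^n-\xi)^{p^s}(x^n+\xi)\rangle\big)\ge 4.$
This is the step that carries all the content, and it is already available: Lemma \ref{Reducible.Lemma.ips.j.small} applies Lemma \ref{Reducible.Lemma.Weight.Bound.When.i0.large} with $i_0\ge p^s-p^s=0$ and $j_0<p^s-1$ to bound an arbitrary nonzero codeword from below by $2\,w_H((x^{2n}-\xi^2)^{j_0+1})\ge 4$.

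For the upper bound I would exhibit a codeword of weight exactly $4$. Since $p^{s-1}\ge j$, the element $(x^n-\xi)^{p^s}(x^n+\xi)^{p^{s-1}}$ lies in $C$. As $0<p^{s-1}<p^s$, the remark at the end of Section \ref{Section.Preliminaries} on the weight of $(x^n+\myConstant_1)^{p^s}(x^n+\myConstant_2)^i$ applies with $i=p^{s-1}$ and yields
\[
	w_H\!\big((x^n-\xi)^{p^s}(x^n+\xi)^{p^{s-1}}\big)=2\,w_H\!\big((x^n+\xi)^{p^{s-1}}\big).
\]
The $p$-adic expansion of $p^{s-1}$ has a single nonzero digit, equal to $1$, so (\ref{Equality.weight.xn+c.N}) gives $w_H((x^n+\xi)^{p^{s-1}})=2$ and the displayed weight is $4$; equivalently, $(x^n-\xi)^{p^s}(x^n+\xi)^{p^{s-1}}=(x^{np^s}-\xi^{p^s})(x^{np^{s-1}}+\xi^{p^{s-1}})$ is a polynomial of degree $np^s+np^{s-1}<2np^s$ whose four monomials have the pairwise distinct degrees $np^s+np^{s-1}$, $np^s$, $np^{s-1}$, $0$, hence do not cancel and give no reduction modulo $x^{2np^s}-\lambda$. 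Thus $d_H(C)\le 4$, and combining with the lower bound we get $d_H(C)=4$.

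I do not anticipate any genuine obstacle here: the only substantive estimate is Lemma \ref{Reducible.Lemma.ips.j.small}, which is available, and the remaining work is the containment observation together with one weight computation via (\ref{Equality.weight.xn+c.N}) or a direct count of monomials.
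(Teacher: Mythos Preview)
Your proof is correct and follows essentially the same route as the paper: the lower bound comes from the containment $C\subset\langle(x^n-\xi)^{p^s}(x^n+\xi)\rangle$ together with Lemma~\ref{Reducible.Lemma.ips.j.small}, and the upper bound from the codeword $(x^n-\xi)^{p^s}(x^n+\xi)^{p^{s-1}}$ of weight $4$. You supply more justification for the weight computation than the paper (which simply asserts it), but the argument is the same.
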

\begin{proof}
	Since $\langle (x^n - \xi)^{p^s}(x^n + \xi) \rangle \in C$,
	we know, by Lemma \ref{Reducible.Lemma.ips.j.small}, that $d_H(C) \ge 4$.
	So it suffices to show $d_H(C) \le 4$. 
	We consider $(x^n - \xi)^{p^s}(x^n + \xi)^{p^{s-1}} \in C$.
	Clearly $w_H( (x^n - \xi)^{p^s}(x^n + \xi)^{p^{s-1}} ) = 4$.
	So $d_H(C) \le 4$\ and hence $d_H(C) = 4$.
\end{proof}

For $i = p^s$\ and $p^{s-1} < j < p^s$, the minimum Hamming distance of $C$\ is
computed in the following lemmas and corollaries. Their proofs are similar to those of
Lemma \ref{Reducible.Lemma.ips.j.small} and Corollary \ref{Reducible.Lemma.ips.j.small}.

\begin{lemma}\label{Reducible.Lemma.ips.j.beta}
	Let $1 \le \beta \le p-2$\ be an integer and 
	$C = \langle (x^n - \xi)^{p^s}(x^n + \xi)^{\beta p^{s-1} +1} \rangle$.
	Then $d_H(C) \ge 2(\beta + 2)$.
\end{lemma}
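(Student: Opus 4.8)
The plan is to transcribe the proof of Lemma~\ref{Reducible.Lemma.ips.j.small} almost verbatim, the only change being that the exponent on $x^n+\xi$ is now $\beta p^{s-1}+1$ instead of $1$, so the crude estimate $w_H((x^{2n}-\xi^2)^{j_0+1})\ge 2$ used there gets replaced by the sharper bound $w_H((x^{2n}-\xi^2)^{j_0+\beta p^{s-1}+1})\ge \beta+2$ coming from Lemma~\ref{Preliminaries.Lemma.weight.beta}.

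First I would pick an arbitrary $0\neq c(x)\in C$ and write $c(x)\equiv f(x)(x^n-\xi)^{p^s}(x^n+\xi)^{\beta p^{s-1}+1}\bmod x^{2np^s}-\lambda$ for some $0\neq f(x)\in\F_q[x]$; exactly as in the discussion preceding Lemma~\ref{Reducible.Lemma.Weight.Bound.When.i0.large}, dividing $f(x)$ by $(x^n+\xi)^{p^s-\beta p^{s-1}-1}$ lets us assume $\deg(f(x))<2np^s-np^s-n\beta p^{s-1}-n=n(p^s-\beta p^{s-1}-1)$. Next let $i_0$ and $j_0$ be the largest integers with $(x^n-\xi)^{i_0}\mid f(x)$ and $(x^n+\xi)^{j_0}\mid f(x)$, so $f(x)=(x^n-\xi)^{i_0}(x^n+\xi)^{j_0}g(x)$ with $x^n-\xi\nmid g(x)$ and $x^n+\xi\nmid g(x)$. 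The degree bound forces $i_0+j_0<p^s-\beta p^{s-1}-1$, and since $i_0\ge 0$ this gives $j_0<p^s-\beta p^{s-1}-1$.

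The point to stress is that here $i=p^s$, so $i_0\ge 0=p^s-i$ holds automatically; hence, in contrast to the other lemmas of this section, there is \emph{no} case distinction to make. Since also $p^s\ge \beta p^{s-1}+1$, i.e. $i\ge j$, Lemma~\ref{Reducible.Lemma.Weight.Bound.When.i0.large} applies directly to $c(x)=(x^n-\xi)^{i_0+p^s}(x^n+\xi)^{j_0+\beta p^{s-1}+1}g(x)$ and yields
\[
  w_H(c(x))\ \ge\ 2\,w_H\bigl((x^{2n}-\xi^2)^{j_0+\beta p^{s-1}+1}\bigr).
\]
It then remains to bound the right–hand factor below by $\beta+2$. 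For $j_0\ge 1$ this is precisely Lemma~\ref{Preliminaries.Lemma.weight.beta}, applied with $x^n$ replaced by $x^{2n}$, with constant $-\xi^2$, and with $m=j_0<p^s-\beta p^{s-1}-1$; for the degenerate value $j_0=0$ one reads off directly from~(\ref{Equality.weight.xn+c.N}) that $w_H((x^{2n}-\xi^2)^{\beta p^{s-1}+1})=2(\beta+1)\ge\beta+2$. Combining, $w_H(c(x))\ge 2(\beta+2)$ for every nonzero $c(x)\in C$, and therefore $d_H(C)\ge 2(\beta+2)$.

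I do not expect any real obstacle: the argument is a direct parallel of Lemma~\ref{Reducible.Lemma.ips.j.small}, and the automatic availability of Lemma~\ref{Reducible.Lemma.Weight.Bound.When.i0.large} (because $p^s-i=0$) makes it even cleaner than the earlier lemmas. The only item needing a sliver of care is the edge case $j_0=0$, which lies just outside the literal hypothesis ``$m$ a positive integer'' of Lemma~\ref{Preliminaries.Lemma.weight.beta} and so is dispatched by the one-line computation above.
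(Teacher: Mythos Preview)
Your proposal is correct and follows exactly the approach the paper intends: the paper states that the proof is ``similar to those of Lemma~\ref{Reducible.Lemma.ips.j.small}'', and your argument is precisely that adaptation---apply Lemma~\ref{Reducible.Lemma.Weight.Bound.When.i0.large} (available automatically since $p^s-i=0$) and then replace the trivial bound by Lemma~\ref{Preliminaries.Lemma.weight.beta}. Your separate treatment of the border case $j_0=0$ is more careful than the paper, which throughout applies Lemma~\ref{Preliminaries.Lemma.weight.beta} without singling out $m=0$.
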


\begin{corollary}\label{Reducible.Corollary.ips.j.beta}
	Let $1 \le \beta \le p-2$, $\beta p^{s-1} + 1 \le j \le (\beta + 1)p^{s-1}$\
	be integers. Let $C = \langle (x^n - \xi)^{p^s}(x^n + \xi)^{j} \rangle $.
	Then $d_H(C) = 2(\beta + 2)$.
\end{corollary}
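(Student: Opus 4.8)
The plan is to combine the lower bound already provided by Lemma \ref{Reducible.Lemma.ips.j.beta} with an explicit low-weight codeword furnishing the matching upper bound, in exactly the same manner as Corollary \ref{Reducible.Corollary.ips.j.small}.

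First I would note that, since $\beta p^{s-1} + 1 \le j$, we have the inclusion of ideals
\[
    C = \langle (x^n - \xi)^{p^s}(x^n + \xi)^{j} \rangle \subseteq \langle (x^n - \xi)^{p^s}(x^n + \xi)^{\beta p^{s-1} + 1} \rangle .
\]
Hence Lemma \ref{Reducible.Lemma.ips.j.beta} applies to the larger ideal and gives at once $d_H(C) \ge 2(\beta + 2)$.

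For the reverse inequality I would exhibit the codeword $(x^n - \xi)^{p^s}(x^n + \xi)^{(\beta+1)p^{s-1}}$. It lies in $C$ because $(\beta + 1)p^{s-1} \ge j$, and it is a genuine representative of degree $< 2np^s$ since $\beta \le p-2$ forces $(\beta+1)p^{s-1} < p^s$. Its Hamming weight is computed in two steps. The last remark of Section \ref{Section.Preliminaries}, applied with $\myConstant_1 = -\xi$, $\myConstant_2 = \xi$ and the exponent $i = (\beta+1)p^{s-1}$ (which satisfies $0 < i < p^s$), yields
\[
    w_H\big( (x^n - \xi)^{p^s}(x^n + \xi)^{(\beta+1)p^{s-1}} \big) = 2\, w_H\big( (x^n + \xi)^{(\beta+1)p^{s-1}} \big),
\]
and since the $p$-adic expansion of $(\beta + 1)p^{s-1}$ has the single nonzero digit $\beta + 1$ in position $s-1$, formula (\ref{Equality.weight.xn+c.N}) gives $w_H\big( (x^n + \xi)^{(\beta+1)p^{s-1}} \big) = (\beta + 1) + 1 = \beta + 2$. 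Therefore this codeword has weight $2(\beta + 2)$, so $d_H(C) \le 2(\beta + 2)$.

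Combining the two bounds gives $d_H(C) = 2(\beta + 2)$. There is essentially no obstacle here: the substantive content is already carried by Lemma \ref{Reducible.Lemma.ips.j.beta}, and for the upper bound the only points requiring verification are that the chosen codeword indeed belongs to $C$ and that the weight identities of Section \ref{Section.Preliminaries} apply with the stated parameters — both routine.
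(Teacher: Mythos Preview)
Your proof is correct and follows exactly the template the paper indicates for this corollary (the paper omits the proof, noting it is similar to that of Corollary~\ref{Reducible.Corollary.ips.j.small}): the lower bound comes from the inclusion into $\langle (x^n-\xi)^{p^s}(x^n+\xi)^{\beta p^{s-1}+1}\rangle$ together with Lemma~\ref{Reducible.Lemma.ips.j.beta}, and the upper bound from the explicit codeword $(x^n-\xi)^{p^s}(x^n+\xi)^{(\beta+1)p^{s-1}}$ whose weight is $2(\beta+2)$ by the last remark of Section~\ref{Section.Preliminaries} and~(\ref{Equality.weight.xn+c.N}).
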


\begin{lemma}\label{Reducible.Lemma.ips.j.tau.K}
	Let $1 \le \tau \le p-1, 1\le k \le s-1,j$\ be integers and
	$C = \langle (x^n - \xi)^{p^s}(x^n + \xi)^{p^s - p^{s-k} + (\tau -1)p^{s-k-1} + 1} \rangle$.
	Then $d_H(C) \ge 2 (\tau + 1)p^k $.
\end{lemma}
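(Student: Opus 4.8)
The plan is to follow verbatim the template already used in Lemma~\ref{Reducible.Lemma.ips.j.small} and Lemma~\ref{Reducible.Lemma.ips.j.beta}. Fix $0 \neq c(x) \in C$ and write $c(x) \equiv f(x)(x^n - \xi)^{p^s}(x^n + \xi)^{j} \mod x^{2np^s} - \lambda$ with $0 \neq f(x) \in \F_q[x]$, where $j = p^s - p^{s-k} + (\tau - 1)p^{s-k-1} + 1$; by the reduction preceding Lemma~\ref{Reducible.Lemma.Weight.Bound.When.i0.large} we may assume $\deg(f(x)) < 2np^s - np^s - nj = np^s - nj$. Let $i_0,j_0$ be the largest integers with $(x^n - \xi)^{i_0} \mid f(x)$ and $(x^n + \xi)^{j_0} \mid f(x)$, and write $f(x) = (x^n - \xi)^{i_0}(x^n + \xi)^{j_0}g(x)$ with $x^n - \xi \nmid g(x)$ and $x^n + \xi \nmid g(x)$. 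The degree bound yields $i_0 + j_0 < p^s - j$.

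Since here $i = p^s$, the hypotheses of Lemma~\ref{Reducible.Lemma.Weight.Bound.When.i0.large} are automatically satisfied: $i \ge j$ because $j < p^s$, $i_0 \ge 0 = p^s - i$ trivially, and $j_0 \le i_0 + j_0 < p^s - j$. Hence, unlike the earlier lemmas in this family, no case split on $i_0$ is needed, and Lemma~\ref{Reducible.Lemma.Weight.Bound.When.i0.large} gives directly
\[
	w_H(c(x)) \ge 2\, w_H\!\left( (x^{2n} - \xi^2)^{j_0 + j} \right).
\]
It remains to bound the right-hand side. Substituting the value of $j$, the inequality $i_0 + j_0 < p^s - j$ becomes $j_0 \le i_0 + j_0 < p^{s-k} - (\tau-1)p^{s-k-1} - 1$, which is exactly the hypothesis of Lemma~\ref{Preliminaries.Lemma.weight.tau.k} applied with $\myConstant = -\xi^2$ and $m = j_0$ (the case $j_0 = 0$, forced in the extremal ranges, is covered by the $p$-adic expansion argument of that lemma). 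This gives
\[
	w_H\!\left( (x^{2n} - \xi^2)^{j_0 + j} \right) = w_H\!\left( (x^{2n} - \xi^2)^{j_0 + p^s - p^{s-k} + (\tau-1)p^{s-k-1} + 1} \right) \ge (\tau + 1)p^k,
\]
and combining the two displays yields $w_H(c(x)) \ge 2(\tau+1)p^k$ for every nonzero $c(x) \in C$, hence $d_H(C) \ge 2(\tau+1)p^k$.

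There is no real obstacle: the argument is a direct specialization of the machinery already in place. The only point deserving a moment's care is checking that the reduced degree bound on $f(x)$ translates precisely into the inequality $j_0 < p^{s-k} - (\tau-1)p^{s-k-1} - 1$ required by Lemma~\ref{Preliminaries.Lemma.weight.tau.k}; this is immediate after inserting the value of $j$. The companion Corollary~\ref{Reducible.Corollary.ips.j.tau.K} will then be obtained, as in the analogous corollaries, by using the inclusion of the displayed ideal in $C$ together with the codeword $(x^n - \xi)^{p^s}(x^n + \xi)^{p^s - p^{s-k} + \tau p^{s-k-1}} \in C$, whose weight equals $2(\tau+1)p^k$ by~(\ref{Equality.weight.xn+c.N}) and the last remark of Section~\ref{Section.Preliminaries}.
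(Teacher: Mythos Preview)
Your proof is correct and follows precisely the template the paper indicates (``similar to those of Lemma~\ref{Reducible.Lemma.ips.j.small}''): reduce $f$ modulo $(x^n-\xi)^{p^s-i}(x^n+\xi)^{p^s-j}$, observe that $i=p^s$ makes the hypothesis $i_0\ge p^s-i$ of Lemma~\ref{Reducible.Lemma.Weight.Bound.When.i0.large} automatic, and then feed the resulting bound $j_0<p^{s-k}-(\tau-1)p^{s-k-1}-1$ into Lemma~\ref{Preliminaries.Lemma.weight.tau.k}. Your aside about the companion corollary is also on the mark.
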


\begin{corollary}\label{Reducible.Corollary.ips.j.tau.K}
	Let $1 \le \tau \le p-1, 1\le k \le s-1,j$\ be integers such that
	\be\nn
		p^s - p^{s-k} + (\tau -1)p^{s-k-1} + 1 \le j \le p^s - p^{s-k} + \tau p^{s-k-1}.
	\ee
	Let $C = \langle (x^n - \xi)^{p^s}(x^n + \xi)^j \rangle$.
	Then $d_H(C) = 2 (\tau + 1)p^k$.
\end{corollary}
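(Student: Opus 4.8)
The plan is to argue exactly as in Corollary~\ref{Reducible.Corollary.ips.j.small} and the other corollaries of this section: obtain the lower bound $d_H(C) \ge 2(\tau+1)p^k$ from the preceding lemma via an inclusion of ideals, and then exhibit an explicit codeword of weight $2(\tau+1)p^k$ to get the reverse inequality. For the lower bound, note that $j \ge p^s - p^{s-k} + (\tau-1)p^{s-k-1} + 1$ forces $(x^n+\xi)^{p^s - p^{s-k} + (\tau-1)p^{s-k-1}+1} \mid (x^n+\xi)^j$, so
\[
	C = \langle (x^n - \xi)^{p^s}(x^n + \xi)^j \rangle \subseteq \langle (x^n - \xi)^{p^s}(x^n + \xi)^{p^s - p^{s-k} + (\tau-1)p^{s-k-1}+1} \rangle,
\]
and Lemma~\ref{Reducible.Lemma.ips.j.tau.K} immediately gives $d_H(C) \ge 2(\tau+1)p^k$.

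For the upper bound I would take the codeword $(x^n - \xi)^{p^s}(x^n + \xi)^{i}$ with $i = p^s - p^{s-k} + \tau p^{s-k-1}$; it lies in $C$ since $i \ge j$. Because $1 \le \tau \le p-1$ one has $\tau p^{s-k-1} \le (p-1)p^{s-k-1} < p^{s-k}$, hence $0 < i < p^s$, so the identity $w_H\big((x^n+\myConstant_1)^{p^s}(x^n+\myConstant_2)^{i}\big) = 2\,w_H\big((x^n+\myConstant_2)^{i}\big)$ established at the end of Section~\ref{Section.Preliminaries} applies and reduces the task to computing $w_H\big((x^n+\xi)^i\big)$. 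The $p$-adic expansion of $i$ has digit $p-1$ in positions $s-1,\dots,s-k$, digit $\tau$ in position $s-k-1$, and digit $0$ elsewhere, so by (\ref{Equality.weight.xn+c.N}) we get $w_H\big((x^n+\xi)^i\big) = (p-1+1)^{k}(\tau+1) = (\tau+1)p^k$. Therefore $w_H\big((x^n - \xi)^{p^s}(x^n + \xi)^{i}\big) = 2(\tau+1)p^k$, whence $d_H(C) \le 2(\tau+1)p^k$; combining with the lower bound yields $d_H(C) = 2(\tau+1)p^k$.

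There is no real obstacle here: the argument is a direct transcription of the earlier corollaries. The only step requiring a moment's care is the elementary bookkeeping that the base-$p$ digits of $p^s - p^{s-k} + \tau p^{s-k-1}$ are as claimed, so that (\ref{Equality.weight.xn+c.N}) produces the factor $(\tau+1)p^k$, together with the bound $i < p^s$ needed to invoke the doubling identity.
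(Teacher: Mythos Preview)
Your proof is correct and follows exactly the template the paper intends: the paper states that the proofs of Lemma~\ref{Reducible.Lemma.ips.j.tau.K} and Corollary~\ref{Reducible.Corollary.ips.j.tau.K} are similar to those of Lemma~\ref{Reducible.Lemma.ips.j.small} and Corollary~\ref{Reducible.Corollary.ips.j.small}, and your argument --- lower bound via the ideal inclusion and Lemma~\ref{Reducible.Lemma.ips.j.tau.K}, upper bound via the codeword $(x^n-\xi)^{p^s}(x^n+\xi)^{p^s-p^{s-k}+\tau p^{s-k-1}}$ together with the doubling identity and (\ref{Equality.weight.xn+c.N}) --- is precisely that analog.
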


We summarize our results in the following theorem.

\begin{theorem}\label{Reducible.Theorem.Main}
	Let $p$\ be an odd prime, $a,s,n$\ be arbitrary positive integers and
	$q = p^a$. Let $\lambda, \xi, \psi \in \F_q \setminus \{ 0 \}$\ such that
	$\lambda = \psi^{p^s}$. Suppose that the polynomial $x^{2n} - \psi$\ factors
	into two irreducible polynomials as
	$x^{2n} - \psi = (x^n - \xi)(x^n + \xi)$.
	Then all $\lambda$-cyclic codes, of length $2np^s$, over $\F_q$\
	are of the form $\langle (x^n - \xi)^i(x^n + \xi)^j \rangle \subset
	\F_q[x] / \langle x^{2np^s} - \lambda \rangle$, where
	$0 \le i,j \le p^s$\ are integers.
	Let $C = \langle (x^n - \xi)^i(x^n + \xi)^j \rangle \subset
	\F_q[x] / \langle x^{2np^s} - \lambda \rangle$.
	If $(i,j) = (0,0)$, then $C$\ is the whole space $\F_q^{2np^s}$,
	and if $(i,j) = (p^s,p^s)$, then $C$\ is the zero space $\{\mathbf{0}\}$.
	For the remaining values of $(i,j)$, the minimum Hamming distance of $C$\
	is given in Table \ref{table.Reducible.Hamming.Distance}.
\end{theorem}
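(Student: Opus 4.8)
The plan is to assemble Theorem \ref{Reducible.Theorem.Main} from the lemmas and corollaries already proved in this section, organised according to the partition (\ref{Preliminaries.Partition}). The structural part is essentially contained in the discussion preceding the theorem: since $\lambda=\psi^{p^s}$ and $x^{2n}-\psi=(x^n-\xi)(x^n+\xi)$ with the two factors distinct (as $p$ is odd and $\xi\neq 0$) monic irreducibles, we have $x^{2np^s}-\lambda=(x^{2n}-\psi)^{p^s}=(x^n-\xi)^{p^s}(x^n+\xi)^{p^s}$, a factorization into powers of distinct primes of $\F_q[x]$. By the correspondence between ideals of $\sR=\F_q[x]/\langle x^{2np^s}-\lambda\rangle$ and monic divisors of $x^{2np^s}-\lambda$, every $\lambda$-cyclic code of length $2np^s$ over $\F_q$ is of the form $C=\langle (x^n-\xi)^i(x^n+\xi)^j\rangle$ with $0\le i,j\le p^s$, and the degenerate pairs give $C=\sR=\F_q^{2np^s}$ when $(i,j)=(0,0)$ and $C=\langle x^{2np^s}-\lambda\rangle=\{\mathbf 0\}$ when $(i,j)=(p^s,p^s)$.

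Next I would reduce to the case $i\ge j$. Replacing $\xi$ by $-\xi$ is again an admissible choice in (\ref{Reducible.Equation.Factorization}) and merely interchanges the two irreducible factors, so the code $\langle (x^n-\xi)^i(x^n+\xi)^j\rangle$ built from $-\xi$ coincides with $\langle (x^n-\xi)^j(x^n+\xi)^i\rangle$ built from $\xi$; since the distance formulas below involve only $i$, $j$, $p$ and $s$ and not the chosen root, this yields $d_H(\langle (x^n-\xi)^i(x^n+\xi)^j\rangle)=d_H(\langle (x^n-\xi)^j(x^n+\xi)^i\rangle)$. Hence it suffices to fill in the entries of Table \ref{table.Reducible.Hamming.Distance} for $i\ge j$, those with $j>i$ following by exchanging $i$ and $j$.

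Finally, for $i\ge j$ with $(i,j)$ not one of the two degenerate pairs, I would locate $i$ and $j$ among the blocks of (\ref{Preliminaries.Partition}) together with the extreme values $0$ and $p^s$, and quote the matching result in each case: Lemma \ref{Reducible.Lemma.Weight.i.is.0.or.j.is.0} when $j=0$; Lemma \ref{Reducible.Lemma.Weight.i.and.j.small} when $1\le i\le p^{s-1}$ (so also $0\le j\le p^{s-1}$); Corollaries \ref{Reducible.Corollary.Weight.3}, \ref{Reducible.Corollary.Weight.4} and \ref{Reducible.Corollary.ips.j.small} when $0<j\le p^{s-1}$ and $i$ lies in $(p^{s-1},2p^{s-1}]$, in $(2p^{s-1},p^s)$, or equals $p^s$, respectively; Corollary \ref{Reducible.Corollary.beta.beta.prime} when $p^{s-1}<j\le i\le(p-1)p^{s-1}$; Corollary \ref{Reducible.Corollary.beta.tauK} when $p^{s-1}<j\le(p-1)p^{s-1}<i<p^s$; Corollary \ref{Reducible.Corollary.ips.j.beta} when $p^{s-1}<j<p^s$ and $i=p^s$; Corollary \ref{Reducible.Corollary.tauK.same.k} or Corollary \ref{Reducible.Corollary.tauK.different.k} when $(p-1)p^{s-1}<j\le i<p^s$, according to whether $i$ and $j$ fall in blocks with the same or distinct $k$-index; and Corollary \ref{Reducible.Corollary.ips.j.tau.K} when $(p-1)p^{s-1}<j<p^s$ and $i=p^s$. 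Because (\ref{Preliminaries.Partition}) is a genuine partition of $\{1,\dots,p^s-1\}$, these cases are mutually exclusive and exhaust all admissible $(i,j)$ with $i\ge j$, so Table \ref{table.Reducible.Hamming.Distance} is completely determined. Most of the work is thus the bookkeeping already carried out in the cited statements; the point requiring care is that in the ``distinct $k$-index'' subcase the hypothesis $i\ge j$ forces the block of $i$ to have the larger index, so that it is the formula $2(\tau^{'}+1)p^{k^{'}}$ of Corollary \ref{Reducible.Corollary.tauK.different.k} (with $k^{'}$ the index of $j$) that applies, and similarly that each block boundary is assigned to exactly one case.
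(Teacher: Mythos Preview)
Your proposal is correct and follows exactly the paper's approach: Theorem \ref{Reducible.Theorem.Main} is stated without a separate proof because it is merely the summary of Lemmas \ref{Reducible.Lemma.Weight.i.is.0.or.j.is.0}--\ref{Reducible.Lemma.ips.j.tau.K} and their companion corollaries, and you have correctly matched each range of $(i,j)$ to the corresponding result (your symmetry argument for $j>i$ via $\xi\mapsto -\xi$ is a clean way to formalise what the paper phrases as ``treated similarly''). One small slip: you assign Corollary \ref{Reducible.Corollary.ips.j.beta} to the full range $p^{s-1}<j<p^s$, but it only covers $p^{s-1}<j\le(p-1)p^{s-1}$; since you then invoke Corollary \ref{Reducible.Corollary.ips.j.tau.K} for $(p-1)p^{s-1}<j<p^s$ anyway, this is clearly just a typo in the stated range rather than a gap.
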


\begin{remark}\label{Reducible.Remark.Hamming.Distance.table}
	There are some symmetries in most of the cases, so we made the following simplification 
	in Table \ref{table.Reducible.Hamming.Distance}. 
	For the cases with *, i.e., the cases except 2 and 7,
	we gave the minimum Hamming distance of $C$\ when $i \ge j$. The corresponding case with $j \ge i$
	has the same minimum Hamming distance. For example in 1*, the corresponding case is $i= 0 $\ and 
	$0 \le j \le p^s$, and the minimum Hamming distance is $2$. Similarly in 6*, the corresponding case is
	$\beta p^{s-1} + 1 \le i \le (\beta + 1)p^{s-1}$\ and 
	$p^{s}-p^{s-k}+(\tau-1)p^{s-k-1} + 1 \le j \le p^{s}-p^{s-k}+\tau p^{s-k-1}  $,
	and the minimum Hamming distance is $2(\beta + 2)$.
\end{remark}

\vfill
\pagebreak

\begin{table}\caption{The minimum Hamming distance of all non-trivial constacyclic codes, of the form $\langle (x^n-\xi)^i(x^n+\xi)^j \rangle$, of length $2np^s$\ over $\F_q$. 
The polynomials $x^n-\xi$\ and $x^n + \xi$\ are assumed to be irreducible.
The parameters $1 \le \beta^{'} \le \beta \le p-2$, $ 1 \le \tau^{(2)} < \tau^{(1)} \le p-1$, 
$1 \le \tau, \tau^{(3)}, \tau^{(4)} \le p-1$ , $1 \le k \le s-1$,
	$1 \le k^{''} < k^{'} \le s-1$\ below are integers. For the cases with *, i.e., the cases except 2 and 7, see Remark \ref{Reducible.Remark.Hamming.Distance.table}}
\label{table.Reducible.Hamming.Distance}
\centering
\newcounter{ijCase}
\setcounter{ijCase}{1}
\begin{tabular}{|l|l|l|l|}
        \hline
        \textbf{Case} & \textbf{i} & $\textbf{j}$ & $\textbf{d}_{\textbf{H}}\textbf{(C)}$ \\
        \hline
        \arabic{ijCase}*\addtocounter{ijCase}{1} & $0 < i \le p^s$ & $j=0$ & $2$\\
        \hline
        \arabic{ijCase}\addtocounter{ijCase}{1} & $ 0 \le i \le p^{s-1}$ & $0 \le j \le p^{s-1}$ & $2$\\
        \hline
        \arabic{ijCase}*\addtocounter{ijCase}{1} & $p^{s-1} < i \le 2p^{s-1} $ & $0 < j \le p^{s-1}$ & $3$\\
         \hline \arabic{ijCase}*\addtocounter{ijCase}{1} &
        $ 2p^{s-1} < i \le p^s$ & $0 < j \le p^{s-1} $ & $4$\\
         \hline
        \arabic{ijCase}*\addtocounter{ijCase}{1} & $\beta p^{s-1} + 1 \le i \le (\beta + 1)p^{s-1}$ 
        &  $\beta^{'} p^{s-1} + 1 \le j \le (\beta^{'} + 1)p^{s-1}$ & 
        				$ \begin{array}{l}
                                               \min\{\beta+2,\\ 
                                               2(\beta^{'}+2)\}
                                               \end{array} $\\
        \hline \arabic{ijCase}*\addtocounter{ijCase}{1} &  $
            \begin{array}{l}
                p^{s}-p^{s-k}+(\tau-1)p^{s-k-1} \\ 
                + 1 \le i \le p^{s}-
                p^{s-k}+\tau p^{s-k-1}
            \end{array}$
        & $\beta p^{s-1} + 1 \le j \le (\beta + 1)p^{s-1}$ & $2(\beta + 2)$ \\
        \hline \arabic{ijCase}\addtocounter{ijCase}{1} & 
        $
            \begin{array}{l}
                p^{s}-p^{s-k}+(\tau-1)p^{s-k-1}\\
                + 1 \le i
                \le p^{s}-p^{s-k}+\tau p^{s-k-1}
            \end{array}$
        & $
            \begin{array}{l}
                p^{s}-p^{s-k}+(\tau-1)p^{s-k-1} \\
                + 1 \le j 
                \le p^{s}-p^{s-k}+ \tau p^{s-k-1}
            \end{array}$
        & $(\tau+1)p^{k}$\\
        \hline \arabic{ijCase}*\addtocounter{ijCase}{1} & 
        $
            \begin{array}{l}
                p^{s}-p^{s-k}+(\tau^{(1)}-1)p^{s-k-1}\\
                \begin{array}{ll}
                 	+ 1 \le i 
                	\le & p^{s}-p^{s-k}\\
                	& + \tau^{(1)} p^{s-k-1}
                \end{array}
            \end{array}$
        & $
            \begin{array}{l}
                p^{s}-p^{s-k}+(\tau^{(2)}-1)p^{s-k-1} \\
                \begin{array}{ll}
                	+ 1 \le j 
                	\le & p^{s}-p^{s-k}\\
                	& + \tau^{(2)} p^{s-k-1}
                \end{array}
            \end{array}$
        & $\begin{array}{l}
		\min\{\\  
		2(\tau^{(2)}+1)p^{k},\\
		 (\tau^{(1)} + 1)p^k \}
           \end{array}$\\
        \hline \arabic{ijCase}*\addtocounter{ijCase}{1} & 
        $
            \begin{array}{l}
                p^{s}-p^{s-k^{'}}+(\tau^{(3)}-1)p^{s-k^{'}-1}\\
                \begin{array}{ll}
                 	+ 1 \le i
                	\le & p^{s}-p^{s-k^{'}}\\
                	& + \tau^{(3)} p^{s-k^{'}-1}
                \end{array}
            \end{array}$
        & $
            \begin{array}{l}
                 p^{s}-p^{s-k^{''}}+(\tau^{(4)}-1)p^{s-k^{''}-1}\\
                 \begin{array}{ll}
                	 + 1 \le j
                	\le  & p^{s}-p^{s-k^{''}}\\
                 	& + \tau^{(4)} p^{s-k^{''}-1}
                 \end{array}
            \end{array}$
        & $2(\tau^{(4)}+1)p^{k^{''}} $\\
        \hline \arabic{ijCase}*\addtocounter{ijCase}{1} & 
            $i = p^s$ & $\beta p^{s-1} + 1 \le j \le (\beta + 1) p^{s-1} $ & $2(\beta + 2)$\\
        \hline \arabic{ijCase}*\addtocounter{ijCase}{1} & 
            $i = p^s$ & $\begin{array}{l}
                p^{s}-p^{s-k}+(\tau-1)p^{s-k-1}\\
                \begin{array}{ll}
                	+ 1 \le j 
                	\le & p^{s}-p^{s-k}\\
                	& + \tau p^{s-k-1}
                \end{array}
            \end{array}$ 
        & $2(\tau + 1)p^k$\\
        \hline
\end{tabular}
\end{table}

\section{Examples}
\label{Section.Examples}

We give examples of constacyclic codes of length $np^s$\ and $2np^s$\ that satisfies the conditions
given in Section \ref{Section.Irreducible} and Section \ref{Section.Reducible} respectively.

\begin{example}
First we fix our alphabet as $\F_{16}$. Let $\omega$\ be a generator of the multiplicative group
$\F_{16}\setminus \{ 0 \}$. Having no root in $\F_{16}$, the polynomial $x^3 + \omega^2$\ is
irreducible over $\F_{16}$. Let $s$\ be a positive integer and $\lambda = w^{2\cdot 2^s}$.
Then $\lambda$-cyclic codes, of length $3 \cdot 2^s$\ over $\F_{16}$ correspond to the ideals of the ring
\be\nn
	\sR_1 = \frac{\F_{16}[x]}{\langle x^{3 \cdot 2^s} + \lambda \rangle}.
\ee
So all such $\lambda$-cyclic codes are of the form $\langle (x^3 + \omega^2)^i \rangle$\ for some
$0 \le i \le 2^s$. Let $ C = \langle (x^3 + \omega^2)^i \rangle$.
According to Theorem \ref{Irreducible.Theorem.Main}, the minimum Hamming distance of $C$\ is 
given by
	 $$ 
	\begin{array}{l}
    		\dd d_H(C)
    		 = \left\lbrace
    		\begin{array}{ll}
        		2, &  \mbox{if}\ \ 1 \le i \le 2^{s-1},\\
        		2^{k+1}, & \mbox{if}\ \ 2^s-2^{s-k}+1 \le i \le 2^s-2^{s-k}+ 2^{s-k-1}\\
                	& \mbox{where}\ \ 1 \le k \le s-1.
    		\end{array} \right.
 	\end{array}
 	$$
\end{example}

\begin{example}
	We let the alphabet to be $\F_{13}$. Clearly the polynomials $x^3 - 2$\ and
	$x^3 + 2$\ are irreducible over $\F_{13}$.
	Let $s$\ be a positive integer and $\theta = 4^{13^s}$.
	Then $\theta$-cyclic codes, of length $6\cdot 13^s$, over $\F_{13}$\ correspond to the ideals
	of the ring
	\be\nn
		\sR_2 = \frac{\F_{13}[x]}{\langle x^{6 \cdot 13^s} - \theta \rangle}.
	\ee
	So all such $\theta$-cyclic codes are of the form
	$\langle (x^3 - 2)^i(x^3 + 2)^j \rangle$\ for some integers $0 \le i,j \le 13^s$.
	Using Theorem \ref{Reducible.Theorem.Main}, we determine the minimum Hamming distance
	of $C$\ as in Table \ref{table.Reducible.Hamming.Distance}, where $p$\ is replaced by $13$,
	$n$\ is replaced by $3$\ 
	and $\xi$\ is replaced by $2$.
\end{example}

The following example shows that the main result of \cite{OZOZ_2} is a particular case of Theorem \ref{Reducible.Theorem.Main}.
\begin{example}
	Let $p$\ be an odd prime and $s$\ is a positive integer.
	The cyclic codes, of length $2p^s$, over $\F_q$\ are of the form
	$\langle (x-1)^i(x + 1)^j \rangle$\ where $0 \le i,j \le p^s$.
	Using Theorem \ref{Reducible.Theorem.Main}, we determine the minimum Hamming distance
	of $C$\ as in Table \ref{table.Reducible.Hamming.Distance}, where we consider
	$\xi = 1$, $n = 1$\ and $p$\ to be an odd prime.
\end{example}

\begin{example}
	Let $p$\ be an odd prime. If $p \equiv 1 \mod 4$\ or $a$\ is even, then 
	by Lemma \ref{Irreducible.Lemma.Irreducibility.Criterion}, the polynomial
	$x^2 + 1$\ is reducible over $\F_{p^a}$. Let $\xi$\ and $-\xi$\ be the roots
	$x^2+1$, i.e., $x^2 + 1 = (x - \xi)(x + \xi)$.
	Suppose that $p \equiv 1 \mod 4$\ or $a$\ is even. Then the negacyclic codes, 
	of length $2p^s$, over $\F_{p^a}$\ are of the form
	$\langle (x- \xi)^i(x + \xi)^j \rangle$\ where $0 \le i,j \le p^s$. 
	Let $C = \langle (x- \xi)^i(x + \xi)^j \rangle$. Then the minimum
	Hamming distance of $C$\ is given in Table \ref{table.Reducible.Hamming.Distance}
	where we consider $n = 1$.
\end{example}


\end{document}